\newtheorem{theorem}{Theorem}[section]
\theoremstyle{plain}
\newtheorem{acknowledgement}[theorem]{Acknowledgement}
\newtheorem{corollary}[theorem]{Corollary}
\newtheorem{definition}[theorem]{Definition}
\newtheorem{example}[theorem]{Example}
\newtheorem{lemma}[theorem]{Lemma}
\newtheorem{proposition}[theorem]{Proposition}
\newtheorem{remark}[theorem]{Remark}
\numberwithin{equation}{section}
\begin{document}
\title[Non-Archimedean Euclidean Fields]{Non-Archimedean White Noise, Pseudodifferential Stochastic Equations, and
Massive Euclidean Fields.}
\author{W. A. Zúñiga-Galindo}
\address{Centro de Investigación y de Estudios Avanzados del Instituto Politécnico Nacional\\
Departamento de Matemáticas, Unidad Querétaro\\
Libramiento Norponiente \#2000, Fracc. Real de Juriquilla. Santiago de
Querétaro, Qro. 76230\\
México.}
\email{wazuniga@math.cinvestav.edu.mx}
\thanks{}
\subjclass[2000]{Primary 81T10, 60H40; Secondary 60G60, 46S10.}
\keywords{Stochastic equations, quantum field theories, random fields, white noise
calculus, Lévy noise, $p$-adic numbers, non- Archimedean functional analysis}

\begin{abstract}
We construct $p$-adic Euclidean random fields $\boldsymbol{\Phi}$ over
$\mathbb{Q}_{p}^{N}$, for arbitrary $N$, \ these fields are solutions of
$p$-adic stochastic pseudodifferential equations. From a mathematical
perspective, the Euclidean fields are generalized stochastic processes
parametrized by functions belonging to a nuclear countably Hilbert space,
these spaces are introduced in this article, in addition, the Euclidean fields
are invariant under the action of certain group of transformations. We also
study the Schwinger functions of $\boldsymbol{\Phi}$.

\end{abstract}
\maketitle

\section{Introduction}

There are general arguments that suggest that one cannot make measurements in
regions of extent smaller than the Planck length $\approx10^{-33}$ cm, see
e.g. \cite[and the references therein]{Var1}. The construction of physical
models at the level of Planck scale is a relevant scientific problem and a
very important source for mathematical research. In \cite{Volovich1}%
-\cite{Volovich2}, I. Volovich posed the conjecture of the non-Archimedean
nature of the spacetime at the level of the Planck scale. This conjecture has
originated a lot research, for instance, in quantum mechanics, see e.g.
\cite{kh0}, \cite{Vlad-Vol}, \cite{V-V-Z-1}, \cite{Zel1}, \cite{Zel2},\ in
string theory, see e.g. \cite{Bre-fre-ols-wi}, \cite{Fran}, \cite{G-S},
\cite{M-Z}, \cite{Vlad-Vol2}, \cite{Vlad1}, \cite{Vlad2}, and in quantum field
theory, see e.g. \cite{Koch-Sait}, \cite{Mis}, \cite{Smir}. For a further
discussion on non-Archimedean mathematical physics, the reader may consult
\cite{D-K-K-V}, \cite{Var2}, \cite{V-V-Z} and the references therein. On the
other hand, the interaction between quantum field theory and mathematics is
very fruitful and deep, see e.g. \cite{Frenkel}, \cite{Gaisgory},
\cite{Kas-Wi}, \cite{kh1}-\cite{kh2}, \cite{Wi-1}-\cite{Wi-2}, among several
articles. Let us mention explicitly the connection with arithmetic, see e.g.
\cite{Kas-Wi}, \cite{Lax-Phillips}, \cite{Palov-Fadeev}, from this perspective
the investigation of quantum fields in the non-Archimedean setting is a quite
natural problem.

In this article we introduce a class of non-Archimedean Euclidean fields, in
arbitrary dimension, which are constructed as solutions of certain covariant
$p$-adic stochastic pseudodifferential equations, by using techniques of white
noise calculus. The connection between quantum fields and SPDEs has been
studied intensively in the Archimedean setting, see e.g.
\cite{Albeverio-et-al-1}-\cite{Albeverio-et-al-4} and the references therein.
A massive non-Archimedean field $\boldsymbol{\Phi}$ is a random field
parametrized by a nuclear countably Hilbert space $\mathcal{H}_{\mathbb{R}%
}\left(  \mathbb{Q}_{p}^{N};\alpha,\infty\right)  $, \ which depend on
$\left(  \mathfrak{q},\mathfrak{l},m,\alpha\right)  $, where $\mathfrak{q}$ is
an elliptic quadratic form, $\mathfrak{l}$ is an elliptic polynomial, and $m$,
$\alpha$ are positive numbers, here $m$ is the mass of $\boldsymbol{\Phi}$.
Heuristically, $\boldsymbol{\Phi}$ is the solution of $\left(  \boldsymbol{L}%
_{\alpha}+m^{2}\right)  \boldsymbol{\Phi}=\boldsymbol{\digamma}$, where
$\digamma$ is a generalized Lévy noise, this type of noise is introduced in
this article, and $\mathbf{L}_{\alpha}\left(  \cdot\right)  =\mathcal{F}%
_{\xi\rightarrow x}^{-1}\left(  \left\vert \mathfrak{l}\left(  \xi\right)
\right\vert _{p}^{\alpha}\mathcal{F}_{x\rightarrow\xi}\left(  \cdot\right)
\right)  $, where $\mathcal{F}$ is the Fourier transform on $\mathbb{Q}%
_{p}^{N}$ defined using the bilinear symmetric form corresponding to the
quadratic form $\mathfrak{q}$. The operator $\boldsymbol{L}_{\alpha}+m^{2}$ is
a non-Archimedean analog of a fractional Klein-Gordon operator. At this point,
it is useful to compare our construction with the classical one. In the
Archimedean case, see e.g. \cite{Albeverio-et-al-3}-\cite{Albeverio-et-al-4},
the elliptic quadratic form $\mathfrak{b}(\xi)=\xi_{1}^{2}+\cdots+\xi_{N}%
^{2}\in\mathbb{R}\left[  \xi_{1},\cdots,\xi_{N}\right]  $ is used to define
the pseudodifferential fractional Klein-Gordon operator $\left(  -\Delta
+m^{2}\right)  ^{\alpha}$, $m$, $\alpha>0$, and also $\mathfrak{b}(\xi)$ is
the quadratic form associated with the bilinear form $\xi_{1}\eta_{1}%
+\cdots+\xi_{N}\eta_{N}$, which is used in the definition of the Fourier
transform on $\mathbb{R}^{N}$. This approach cannot be carried out in the
$p$-adic setting because when $\mathfrak{b}(\xi)$ is considered as a $p$-adic
quadratic form, it is not elliptic for $N\geq5$, and \ the ellipticity of
$\mathfrak{b}(\xi)$ is essential to establish the non-negativity of the Green
functions in the \ $p$-adic setting. For this reason, we have to replace
$\xi_{1}^{2}+\cdots+\xi_{N}^{2}$ by an elliptic polynomial, which is a
homogeneous polynomial that vanishes only at the origin, there are infinitely
many of them. The symmetries of our Green functions and fields depend on the
transformations that preserve $\mathfrak{q}$ and $\mathfrak{l}$, thus, in
order to have large groups of symmetries we cannot fix $\mathfrak{q}$,
instead, we work with pairs $\left(  \mathfrak{q},\mathfrak{l}\right)  $ that
have a large group of symmetries. These are two important difference between
the $p$-adic case and the Archimedean one. On the other hand, the spaces
$\mathcal{H}_{\mathbb{R}}\left(  \mathbb{Q}_{p}^{N};\alpha,\infty\right)  $
introduced here are completely necessary to carry out a construction similar
to the one presented in \cite{Albeverio-et-al-4}, \cite{Albeverio-et-al-3},
For instance, the Green function attached to $\boldsymbol{L}_{\alpha}+m^{2}$
gives rise to a continuous mapping from $\mathcal{H}_{\mathbb{R}}\left(
\mathbb{Q}_{p}^{N};\alpha,\infty\right)  $\ into itself. This fact is not
true, if we replace $\mathcal{H}_{\mathbb{R}}\left(  \mathbb{Q}_{p}^{N}%
;\alpha,\infty\right)  $ by the space of test functions $\mathcal{D}%
_{\mathbb{R}}\left(  \mathbb{Q}_{p}^{N}\right)  $, which is also nuclear. We
want to mention here that \cite{Albeverio-et-al-4}, \cite{Albeverio-et-al-3}
have influenced strongly our work, and that our results are non-Archimedean
counterparts of some of the results of \cite{Albeverio-et-al-3}.

The line of research pursued in this article continues, from one side, the
investigations on generalized stochastic processes invariant under groups of
transformations, see e.g. \cite{Kolmogorov}, \cite{Minlos}, \cite{Gel-Vil}
and, on the other side, it continues the development of the white noise
calculus, see e.g. \cite{Hida et al}, \cite{Obata}, in the non-Archimedean
setting. Several types of $p$-adic noise calculi\ have been studied lately,
see e.g. \cite{Bikulov}, \cite{Bi-Volovich}, \cite{Evans}, \cite{Kamizono},
\cite{K-Kos1}, \cite{K-Kos2}, \cite{K-Zh-1}, \cite{K-Zh-2}, \cite{Zu2}. To the
best of our knowledge the non- Archimedean generalized Lévy noises studied
here have not been reported before in the literature.

Finally, we want to compare our results with those of \cite{Koch-Sait}\ and
\cite{Abd et al}. In \cite{Koch-Sait} the authors constructed non-Gaussian
measures on the space $\mathcal{D}(\mathbb{Q}_{p}^{n})$ of distributions on
$\mathbb{Q}_{p}^{n}$, for $1\leq n\leq4$. The measures are absolutely
continuous with respect to the free field Gaussian measure and live on the
spaces of distributions with a fixed compact support (finite volume case). As
we mention before our results cannot be formulated in the space of test
functions $\mathcal{D}(\mathbb{Q}_{p}^{n})$, because it is nonmetrizable
topological space, and thus, the theory of countably Hilbert nuclear spaces
does not apply to it. The author thinks that the results of \cite{Koch-Sait}
can be extended by using our spaces $\mathcal{H}_{\mathbb{R}}\left(
\mathbb{Q}_{p}^{N};\alpha,\infty\right)  $, but more importantly, the author
thinks that the problem of infinite volume limit in \cite{Koch-Sait} can be
settled by using the results presented here in combination with those of
\cite{Albeverio-et-al-4}. In \cite{Abd et al}, three dimensional $p$-adic
massless quantum fields were constructed and investigated in connection with
the renormalization group. This article motivates the study of spaces of type
$\mathcal{H}_{\mathbb{R}}\left(  \mathbb{Q}_{p}^{N};\alpha,\infty\right)  $ in
the contest of massless quantum fields. The author expect to consider these
matters and some related in forthcoming articles. At the moment, we do not
know the exact physical meaning of the quantum fields constructed here neither
their connections with the non-Archimedean $\varphi^{4}$ models developed in
\cite{Mis1}-\cite{Mis2}. We hope that this article motivates the study of
these matters.

The article is organized as follows. In Section \ref{Sect2}, we review the
basic aspects of the $p$-adic analysis. In Section \ref{Sect3}, we construct
the spaces $\mathcal{H}_{\mathbb{R}}\left(  \mathbb{Q}_{p}^{N};\alpha
,\infty\right)  $, $\mathcal{H}_{\mathbb{C}}\left(  \mathbb{Q}_{p}^{N}%
;\alpha,\infty\right)  $, which are nuclear countably Hilbert spaces, see
Theorem \ref{Thm1} and Remark \ref{nota1}. We also show that $\mathcal{H}%
_{\mathbb{C}}\left(  \mathbb{Q}_{p}^{N};\alpha,\infty\right)  $\ is invariant
under the action of a large class of pseudodifferential operators, see Theorem
\ref{Thm2}. In Section \ref{Sect4}, we introduce non-Archimedean analogs of
the Klein-Gordon fractional operators and study the properties of the
corresponding Green functions, see Proposition \ref{Prop1}. We also study the
solutions of the $p$-adic Klein-Gordon equations in $\mathcal{H}_{\mathbb{R}%
}\left(  \mathbb{Q}_{p}^{N};\alpha,\infty\right)  $, see Theorem \ref{Thm3}.
In Section \ref{Sect5}, we introduce a new class of non-Archimedean Lévy
noises, see Theorem \ref{Thm5} and Definition \ref{def2}. Section \ref{Sect6}
is dedicated to the non-Archimedean quantum fields and its symmetries, see
Proposition \ref{Prop3}, Definitions \ref{def_6A}-\ref{def_6} and Proposition
\ref{Prop4}. Finally, Section \ref{Sect7} is dedicated to the study of the
Schwinger functions, see Definition \ref{def6}\ and Theorem \ref{Thm6}.
Finally, as an application,we \ construct a $p$-adic Brownian sheet on
$\mathbb{Q}_{p}^{N}$, see Theorem \ref{Thm7}.

\section{\label{Sect2}$p$-adic analysis: essential ideas}

In this section we fix the notation and collect some basic results on $p$-adic
analysis that we will use through the article. For a detailed exposition the
reader may consult \cite{A-K-S}, \cite{Taibleson}, \cite{V-V-Z}.

\subsection{The field of $p$-adic numbers}

Along this article $p$ will denote a prime number. The field of $p-$adic
numbers $\mathbb{Q}_{p}$ is defined as the completion of the field of rational
numbers $\mathbb{Q}$ with respect to the $p-$adic norm $|\cdot|_{p}$, which is
defined as
\[
|x|_{p}=%
\begin{cases}
0 & \text{if }x=0\\
p^{-\gamma} & \text{if }x=p^{\gamma}\dfrac{a}{b},
\end{cases}
\]
where $a$ and $b$ are integers coprime with $p$. The integer $\gamma:=ord(x)$,
with $ord(0):=+\infty$, is called the\textit{ }$p-$\textit{adic order of} $x$.
We extend the $p-$adic norm to $\mathbb{Q}_{p}^{N}$ by taking%
\[
||x||_{p}:=\max_{1\leq i\leq N}|x_{i}|_{p},\qquad\text{for }x=(x_{1}%
,\dots,x_{N})\in\mathbb{Q}_{p}^{N}.
\]

We define $ord(x)=\min_{1\leq i\leq N}\{ord(x_{i})\}$, then $||x||_{p}%
=p^{-ord(x)}$. The metric space $\left(  \mathbb{Q}_{p}^{N},||\cdot
||_{p}\right)  $ is a complete ultrametric space. As a topological space
$\mathbb{Q}_{p}$\ is homeomorphic to a Cantor-like subset of the real line,
see e.g. \cite{A-K-S}, \cite{V-V-Z}.

Any $p-$adic number $x\neq0$ has a unique expansion $x=p^{ord(x)}\sum
_{j=0}^{+\infty}x_{j}p^{j}$, where $x_{j}\in\{0,1,2,\dots,p-1\}$ and
$x_{0}\neq0$. By using this expansion, we define \textit{the fractional part
of }$x\in\mathbb{Q}_{p}$, denoted $\{x\}_{p}$, as the rational number
\[
\{x\}_{p}=%
\begin{cases}
0 & \text{if }x=0\text{ or }ord(x)\geq0\\
p^{ord(x)}\sum_{j=0}^{-ord(x)-1}x_{j}p^{j} & \text{if }ord(x)<0.
\end{cases}
\]
For $l\in\mathbb{Z}$, denote by $B_{l}^{N}(a)=\{x\in\mathbb{Q}_{p}%
^{N}:||x-a||_{p}\leq p^{l}\}$ \textit{the ball of radius }$p^{l}$ \textit{with
center at} $a=(a_{1},\dots,a_{N})\in\mathbb{Q}_{p}^{N}$, and take $B_{l}%
^{N}(0):=B_{l}^{N}$. Note that $B_{l}^{N}(a)=B_{l}(a_{1})\times\cdots\times
B_{l}(a_{n})$, where $B_{l}(a_{i}):=\{x\in\mathbb{Q}_{p}:|x-a_{i}|_{p}\leq
p^{l}\}$ is the one-dimensional ball of radius $p^{l}$ with center at
$a_{i}\in\mathbb{Q}_{p}$. The ball $B_{0}^{N}$ equals the product of $N$
copies of $B_{0}:=\mathbb{Z}_{p}$, \textit{the ring of }$p-$\textit{adic
integers}. We denote by $\Omega(\left\Vert x\right\Vert _{p})$ the
characteristic function of $B_{0}^{N}$. For more general sets, say Borel sets,
we use ${\LARGE 1}_{A}\left(  x\right)  $ to denote the characteristic
function of $A$. For $l\in\mathbb{Z}$, denote by $S_{l}^{N}(a)=\{x\in
\mathbb{Q}_{p}^{N}:||x-a||_{p}=p^{l}\}$ \textit{the sphere of radius }$p^{l}$
\textit{with center at} $a=(a_{1},\dots,a_{N})\in\mathbb{Q}_{p}^{N}$, and take
$S_{l}^{N}(0):=S_{l}^{N}$.

\subsection{The Bruhat-Schwartz space}

A complex-valued function $g$ defined on $\mathbb{Q}_{p}^{N}$ is
\textit{called locally constant} if for any $x\in\mathbb{Q}_{p}^{N}$ there
exists an integer $l(x)\in\mathbb{Z}$ such that%
\begin{equation}
g(x+x^{\prime})=g(x)\text{ for }x^{\prime}\in B_{l(x)}^{N}.
\label{local_constancy}%
\end{equation}
A function $g:\mathbb{Q}_{p}^{N}\rightarrow\mathbb{C}$ is called a
\textit{Bruhat-Schwartz function (or a test function)} if it is locally
constant with compact support. The $\mathbb{C}$-vector space of
Bruhat-Schwartz functions is denoted by $\mathcal{D}_{\mathbb{C}}%
(\mathbb{Q}_{p}^{N})$, while $\mathcal{D}_{\mathbb{R}}(\mathbb{Q}_{p}^{N}%
)$\ denotes the $\mathbb{R}$-vector space of Bruhat-Schwartz functions.

Let $\mathcal{D}_{\mathbb{C}}^{^{\ast}}(\mathbb{Q}_{p}^{N})$ (resp.
$\mathcal{D}_{\mathbb{R}}^{^{\ast}}(\mathbb{Q}_{p}^{N})$) denote the set of
all continuous functionals (distributions) on $\mathcal{D}_{\mathbb{C}%
}(\mathbb{Q}_{p}^{N})$ (resp. $\mathcal{D}_{\mathbb{R}}(\mathbb{Q}_{p}^{N})$).

\subsection{\label{Sect_Fourier_Trans}Fourier transform}

Set $\chi_{p}(y)=\exp(2\pi i\{y\}_{p})$ for $y\in\mathbb{Q}_{p}$. The map
$\chi_{p}(\cdot)$ is an additive character on $\mathbb{Q}_{p}$, i.e. a
continuos map from $\mathbb{Q}_{p}$ into the unit circle satisfying $\chi
_{p}(y_{0}+y_{1})=\chi_{p}(y_{0})\chi_{p}(y_{1})$, $y_{0},y_{1}\in
\mathbb{Q}_{p}$.

Let $\mathfrak{B}\left(  x,y\right)  $\ be a symmetric non-degenerate
$\mathbb{Q}_{p}-$bilinear form on $\mathbb{Q}_{p}^{N}\times\mathbb{Q}_{p}^{N}%
$. Thus $\mathfrak{q}(x):=\mathfrak{B}\left(  x,x\right)  $, $x\in
\mathbb{Q}_{p}^{N}$ is a \textit{non-degenerate quadratic form} on
$\mathbb{Q}_{p}^{N}$.\ We recall that
\begin{equation}
\mathfrak{B}\left(  x,y\right)  =\frac{1}{2}\left\{  \mathfrak{q}%
(x+y)-\mathfrak{q}(x)-\mathfrak{q}(y)\right\}  . \label{Bilinear}%
\end{equation}
We identify the $\mathbb{Q}_{p}$-vector space $\mathbb{Q}_{p}^{N}$ with its
algebraic dual $\left(  \mathbb{Q}_{p}^{N}\right)  ^{\ast}$ by means of
$\mathfrak{B}\left(  \cdot,\cdot\right)  $. We now identify the dual group
(i.e. the Pontryagin dual) of $\left(  \mathbb{Q}_{p}^{N},+\right)  $ with
$\left(  \mathbb{Q}_{p}^{N}\right)  ^{\ast}$ by taking $x^{\ast}\left(
x\right)  =\chi_{p}\left(  \mathfrak{B}\left(  x,x^{\ast}\right)  \right)  $.
The Fourier transform \ is defined by%
\[
(\mathcal{F}g)(\xi)=%
{\displaystyle\int\limits_{\mathbb{Q}_{p}^{N}}}
g\left(  x\right)  \chi_{p}\left(  \mathfrak{B}\left(  x,\xi\right)  \right)
d\mu\left(  x\right)  \text{,}\quad\text{for }g\in L^{1},
\]
where $d\mu\left(  x\right)  $ is a Haar measure on $\mathbb{Q}_{p}^{N}$. Let
$\mathcal{L}\left(  \mathbb{Q}_{p}^{N}\right)  $ be the space of continuous
functions $g$ in $L^{1}$ whose Fourier transform $\mathcal{F}g$ is in $L^{1}$.
The measure $d\mu\left(  x\right)  $ can be normalized uniquely in such manner
that $(\mathcal{F}(\mathcal{F}g))(x)=g(-x)$ for every $g$ belonging to
$\mathcal{L}\left(  \mathbb{Q}_{p}^{N}\right)  $. We say that $d\mu\left(
x\right)  $ is \textit{\ a self-dual measure relative\ to} $\chi_{p}\left(
\left[  \cdot,\cdot\right]  \right)  $. Notice that $d\mu\left(  x\right)
=C(\mathfrak{q})d^{N}x$ where $C(\mathfrak{q})$ is a positive constant and
$d^{N}x$ is the Haar measure on $\mathbb{Q}_{p}^{N}$ normalized by the
condition $vol(B_{0}^{N})=1$. For further details about the material presented
in this section the reader may consult \cite{We1}.

We will also use the notation $\mathcal{F}_{x\rightarrow\xi}g$ and
$\widehat{g}$\ for the Fourier transform of $g$. The Fourier transform
$\mathcal{F}\left[  T\right]  $ of a distribution $T\in\mathcal{D}%
_{\mathbb{C}}^{^{\ast}}\left(  \mathbb{Q}_{p}^{N}\right)  $ is defined by%
\[
\left\langle g,\mathcal{F}\left[  T\right]  \right\rangle =\left\langle
\mathcal{F}\left[  g\right]  ,T\right\rangle \text{ for all }g\in
\mathcal{D}_{\mathbb{C}}(\mathbb{Q}_{p}^{N})\text{.}%
\]
The Fourier transform $f\rightarrow\mathcal{F}\left[  T\right]  $ is a linear
isomorphism from $\mathcal{D}_{\mathbb{C}}^{^{\ast}}\left(  \mathbb{Q}_{p}%
^{N}\right)  $\ onto itself. Furthermore, $T=\mathcal{F}\left[  \mathcal{F}%
\left[  T\right]  \left(  -\xi\right)  \right]  $.

\begin{remark}
\label{remark1}Given $r\in\left[  0,+\infty\right)  $, we denote by
$L^{r}\left(  \mathbb{Q}_{p}^{N},d^{N}x\right)  :=L^{r}$, the $\mathbb{C}%
$-vector space of all the complex valued functions $g$ satisfying
$\int_{\mathbb{Q}_{p}^{N}}\left\vert g\left(  x\right)  \right\vert ^{r}%
d^{N}x<\infty$. Let denote by $C^{_{\text{unif}}}:=C^{_{\text{unif}}}\left(
\mathbb{Q}_{p}^{N},\mathbb{C}\right)  $, the $\mathbb{C}$-vector space of all
the complex valued functions which are uniformly continuous. We denote by
$L_{\mathbb{R}}^{r}$, $C_{\mathbb{R}}^{_{\text{unif}}}$ the corresponding
$\mathbb{R}$-vector spaces.
\end{remark}

\section{\label{Sect3}A new class of non-Archimedean nuclear spaces}

The Bruhat-Schwartz space $\mathcal{D}_{\mathbb{C}}(\mathbb{Q}_{p}^{N})$ is
not invariant under the action of the pseudodifferential operators required in
this work. In this section we introduce a new type of nuclear countably
Hilbert spaces, which are invariant under the action of large class of
pseudodifferential operators.

\begin{remark}
We set $\mathbb{R}_{+}:=\left\{  x\in\mathbb{R}:x\geq0\right\}  $. We denote
by $\mathbb{N}$ the set of non-negative integers.
\end{remark}

\subsection{\label{Nuclear_spaces}A class of non-Archimedean nuclear spaces}

We define for $f$, $g$ in $\mathcal{D}_{\mathbb{R}}(\mathbb{Q}_{p}^{N})$ (or
in $\mathcal{D}_{\mathbb{C}}(\mathbb{Q}_{p}^{N})$) the following scalar
product:%
\begin{equation}
\left\langle f,g\right\rangle _{l,\alpha}:=\left\langle f,g\right\rangle _{l}=%
{\textstyle\int\limits_{\mathbb{Q}_{p}^{N}}}
\left[  \max\left(  1,\left\Vert \xi\right\Vert _{p}\right)  \right]
^{2\alpha l}\widehat{f}\left(  \xi\right)  \overline{\widehat{g}}\left(
\xi\right)  d^{N}\xi, \label{product}%
\end{equation}
for a fixed $\alpha\in\mathbb{R}_{+}\mathbb{\smallsetminus}\left\{  0\right\}
$ and $l\in\mathbb{Z}$. We also set $\left\Vert f\right\Vert _{l,\alpha}%
^{2}=:\left\Vert f\right\Vert _{l}^{2}=\left\langle f,f\right\rangle _{l}$.
Notice that $\left\Vert \cdot\right\Vert _{m}\leq\left\Vert \cdot\right\Vert
_{n}$ for $m\leq n$. Let denote by $\mathcal{H}_{\mathbb{R}}\left(
\mathbb{Q}_{p}^{N};l,\alpha\right)  \allowbreak=:\mathcal{H}_{\mathbb{R}%
}\left(  l\right)  $ the completion of $\mathcal{D}_{\mathbb{R}}%
(\mathbb{Q}_{p}^{N})$ with respect to $\left\langle \cdot,\cdot\right\rangle
_{l}$. Then $\mathcal{H}_{\mathbb{R}}\left(  n\right)  \subset\mathcal{H}%
_{\mathbb{R}}\left(  m\right)  $ for $m\leq n$. We set%
\[
\mathcal{H}_{\mathbb{R}}\left(  \mathbb{Q}_{p}^{N};\alpha,\infty\right)
:=\mathcal{H}_{\mathbb{R}}\left(  \mathbb{Q}_{p}^{N};\infty\right)
:=\mathcal{H}_{\mathbb{R}}\left(  \infty\right)  =%
{\textstyle\bigcap\limits_{l\in\mathbb{N}}}
\mathcal{H}_{\mathbb{R}}\left(  l\right)  .
\]
Notice that $\mathcal{H}_{\mathbb{R}}\left(  0\right)  =L_{\mathbb{R}}^{2}$
and that $\mathcal{H}_{\mathbb{R}}\left(  \infty\right)  \subset
L_{\mathbb{R}}^{2}$. With the topology induced by the family of seminorms
$\left\Vert \cdot\right\Vert _{l\in\mathbb{N}}$, $\mathcal{H}_{\mathbb{R}%
}\left(  \infty\right)  $ becomes a locally convex space, which is metrizable.
Indeed,
\[
d\left(  f,g\right)  :=\max_{l\in\mathbb{N}}\left\{  2^{-l}\frac{\left\Vert
f-g\right\Vert _{l}}{1+\left\Vert f-g\right\Vert _{l}}\right\}
\]
is a metric for the topology of $\mathcal{H}_{\mathbb{R}}\left(
\infty\right)  $ considered as a convex topological space, see e.g.
\cite{Gel-Vil}. A sequence $\left\{  f_{l}\right\}  _{l\in\mathbb{N}}$ in
$\left(  \mathcal{H}_{\mathbb{R}}\left(  \infty\right)  ,d\right)  $ converges
to $g\in\mathcal{H}_{\mathbb{R}}\left(  \infty\right)  $, if and only if,
$\left\{  f_{l}\right\}  _{l\in\mathbb{N}}$ converges to $g$ in the norm
$\left\Vert \cdot\right\Vert _{l}$ for all $l\in\mathbb{N}$. From this
observation follows that the topology on $\mathcal{H}_{\mathbb{R}}\left(
\infty\right)  $ coincides with the projective\ limit topology $\tau_{P}$. An
open neighborhood base at zero of $\tau_{P}$ is given by the choice of
$\epsilon>0$ and $l\in\mathbb{N}$, and the set
\[
U_{\epsilon,l}:=\left\{  f\in\mathcal{H}_{\mathbb{R}}\left(  \infty\right)
:\left\Vert f\right\Vert _{l}<\epsilon\right\}  .
\]

\begin{remark}
We denote by $\mathcal{H}_{\mathbb{C}}\left(  l\right)  $, $\mathcal{H}%
_{\mathbb{C}}\left(  \infty\right)  $ the $\mathbb{C}$-vector spaces
constructed from $\mathcal{D}_{\mathbb{C}}(\mathbb{Q}_{p}^{N})$. All the above
results are valid for these spaces. We shall use $d$ to denote the metric of
$\mathcal{H}_{\mathbb{C}}\left(  \infty\right)  $.
\end{remark}

\begin{lemma}
\label{lemma1} $\mathcal{H}_{\mathbb{R}}\left(  \infty\right)  $ endowed with
the topology $\tau_{P}$\ is a countably Hilbert space in the sense of Gel'fand
and Vilenkin, see e.g. \cite[Chapter I, Section 3.1]{Gel-Vil} or \cite[Section
1.2]{Obata}. Furthermore $\left(  \mathcal{H}_{\mathbb{R}}\left(
\infty\right)  ,\tau_{P}\right)  $ is metrizable and complete and hence a
Fréchet space.
\end{lemma}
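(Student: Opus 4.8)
The plan is to verify, in order, the three defining ingredients of a countably Hilbert space in the sense of \cite[Chapter I, Section 3.1]{Gel-Vil}: a \emph{countable}, \emph{increasing} family of Hilbertian norms, their pairwise \emph{compatibility}, and \emph{completeness} of the projective topology they generate; and then to deduce the Fréchet property as ``metrizable $+$ complete.'' The easy ingredients can be dispatched immediately: for each $l\in\mathbb{N}$ the form $\langle\cdot,\cdot\rangle_{l}$ in \eqref{product} is visibly an inner product on $\mathcal{D}_{\mathbb{R}}(\mathbb{Q}_{p}^{N})$, so $\|\cdot\|_{l}$ is a Hilbertian norm; the chain $\|\cdot\|_{0}\le\|\cdot\|_{1}\le\cdots$ was already noted; by Plancherel for the (self-dual) Fourier transform attached to $\mathfrak{B}$ one has $\|f\|_{0}=\|f\|_{L^{2}(d^{N}x)}$, so the system separates points and $\tau_{P}$ is Hausdorff; and countability is built into the indexing set $\mathbb{N}$.

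The substantive step — and the place where I expect the real work — is compatibility of the norms. I would make this transparent by realizing the completions concretely on the Fourier side: the map $f\mapsto\widehat{f}$ sends $\mathcal{D}_{\mathbb{R}}(\mathbb{Q}_{p}^{N})$ isometrically onto a dense subspace of the weighted space $L^{2}\big(\mathbb{Q}_{p}^{N},[\max(1,\|\xi\|_{p})]^{2\alpha l}\,d^{N}\xi\big)$, so for $l\ge 0$ it identifies $\mathcal{H}_{\mathbb{R}}(l)$ with that weighted $L^{2}$ space; since the weights are positive and increasing in $l$, all these spaces are genuinely nested subspaces of $L^{2}_{\mathbb{R}}$ (re-deriving the inclusions $\mathcal{H}_{\mathbb{R}}(n)\subset\mathcal{H}_{\mathbb{R}}(m)$ noted above). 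Compatibility then reduces to: if $\{f_{k}\}\subset\mathcal{D}_{\mathbb{R}}(\mathbb{Q}_{p}^{N})$ is $\|\cdot\|_{n}$-Cauchy and $\|f_{k}\|_{m}\to 0$ with $m\le n$, then $\|f_{k}\|_{n}\to 0$. This follows because $\widehat{f_{k}}$ converges in $L^{2}([\max(1,\|\xi\|_{p})]^{2\alpha n}d^{N}\xi)$ to some $g$, while it tends to $0$ in the weaker, still positive-weight, norm; passing to a subsequence converging almost everywhere to both $g$ and $0$ forces $g=0$, hence $\|f_{k}\|_{n}\to 0$. Equivalently, the canonical maps $\mathcal{H}_{\mathbb{R}}(n)\hookrightarrow\mathcal{H}_{\mathbb{R}}(m)$ are injective, which is precisely the Gel'fand--Vilenkin compatibility condition.

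Granted compatibility, completeness of $(\mathcal{H}_{\mathbb{R}}(\infty),\tau_{P})$ is the standard projective-limit argument: a $\tau_{P}$-Cauchy sequence $\{f_{k}\}$ is $\|\cdot\|_{l}$-Cauchy for every $l$, so by completeness of each Hilbert space $\mathcal{H}_{\mathbb{R}}(l)$ it has a limit $g^{(l)}\in\mathcal{H}_{\mathbb{R}}(l)$; compatibility makes the family $\{g^{(l)}\}_{l}$ coherent under the inclusions $\mathcal{H}_{\mathbb{R}}(l)\hookrightarrow\mathcal{H}_{\mathbb{R}}(m)$, hence defines a single element $g\in\bigcap_{l}\mathcal{H}_{\mathbb{R}}(l)=\mathcal{H}_{\mathbb{R}}(\infty)$ with $f_{k}\to g$ in every $\|\cdot\|_{l}$, i.e. in $\tau_{P}$. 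This shows $\mathcal{H}_{\mathbb{R}}(\infty)$ is a countably Hilbert space; being additionally metrizable — the explicit metric $d$ given above induces exactly $\tau_{P}$, as already observed — and complete, and locally convex, it is a Fréchet space. The identical argument applies verbatim to $\mathcal{H}_{\mathbb{C}}(\infty)$.
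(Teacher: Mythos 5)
Your proof is correct and follows the same overall architecture as the paper's: reduce the countably-Hilbert property to compatibility of the scalar products $\langle\cdot,\cdot\rangle_{l}$, and then get completeness (hence the Fr\'echet property) from the standard projective-limit argument. The one place where you genuinely diverge is in \emph{how} compatibility is established. The paper's proof is a two-line argument: if $f_{k}\to 0$ in $\|\cdot\|_{m}$ and $f_{k}\to f$ in $\|\cdot\|_{n}$ with $m\le n$, then since $\mathcal{H}_{\mathbb{R}}(n)\subset\mathcal{H}_{\mathbb{R}}(m)$ and $\|\cdot\|_{m}\le\|\cdot\|_{n}$ the limit $f$ is also the $\|\cdot\|_{m}$-limit, so $f=0$. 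This leans on the inclusion $\mathcal{H}_{\mathbb{R}}(n)\subset\mathcal{H}_{\mathbb{R}}(m)$ of the abstract completions, which the paper asserts earlier without argument and which, strictly speaking, is equivalent to the injectivity of the canonical maps --- i.e.\ to compatibility itself. Your route closes that loop: by realizing each completion on the Fourier side as the weighted space $L^{2}\bigl(\mathbb{Q}_{p}^{N},[\max(1,\|\xi\|_{p})]^{2\alpha l}d^{N}\xi\bigr)$ (essentially the content of the paper's later Lemma \ref{lemma2A}) and extracting an a.e.-convergent subsequence, you identify the two candidate limits directly, so the nesting of the spaces is a consequence rather than a hypothesis. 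The payoff of your version is that it is self-contained and removes the mild circularity; the paper's version is shorter because it defers the concrete identification of the completions to a later lemma. Both are sound.
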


\begin{proof}
By the previous considerations, it is sufficient to show that $\left\langle
\cdot,\cdot\right\rangle _{l\in\mathbb{N}}$ is a system of compatible scalar
products, i.e. if a sequence $\left\{  f_{l}\right\}  _{l\in\mathbb{N}}$ of
elements of $\mathcal{H}_{\mathbb{R}}\left(  \infty\right)  $ converges to
zero in the norm $\left\Vert \cdot\right\Vert _{m}$ and is a Cauchy sequence
in the norm $\left\Vert \cdot\right\Vert _{n}$, then it also converges to zero
in the norm $\left\Vert \cdot\right\Vert _{n}$. We may assume without loss of
generality that $m\leq n$ thus $\left\Vert \cdot\right\Vert _{m}\leq\left\Vert
\cdot\right\Vert _{n}$. By using $f_{l}$ $\underrightarrow{\left\Vert
\cdot\right\Vert _{m}}$ $0\in\mathcal{H}_{\mathbb{R}}\left(  m\right)  $ and
$f_{l}$ $\underrightarrow{\left\Vert \cdot\right\Vert _{n}}$ $f\in
\mathcal{H}_{\mathbb{R}}\left(  n\right)  \subset\mathcal{H}_{\mathbb{R}%
}\left(  m\right)  $, we conclude that $f=0$.
\end{proof}

\begin{lemma}
\label{lemma2}(i) Set $\overline{\left(  \mathcal{D}_{\mathbb{R}}%
(\mathbb{Q}_{p}^{N}),d\right)  }$ for the completion of the metric space
$\left(  \mathcal{D}_{\mathbb{R}}(\mathbb{Q}_{p}^{N}),d\right)  $. Then
$\overline{\left(  \mathcal{D}_{\mathbb{R}}(\mathbb{Q}_{p}^{N}),d\right)
}=\left(  \mathcal{H}_{\mathbb{R}}\left(  \infty\right)  ,d\right)  $.

\noindent(ii) $\left(  \mathcal{H}_{\mathbb{R}}\left(  \infty\right)
,d\right)  $ is a nuclear space.
\end{lemma}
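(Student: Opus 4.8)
The plan is to establish the two parts separately, using the Fourier transform to reduce everything to a weighted $L^2$ picture.

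\textbf{Part (i).} The key observation is that the map $f \mapsto \widehat{f}$ carries the scalar product $\langle\cdot,\cdot\rangle_l$ to the standard scalar product on $L^2\bigl(\mathbb{Q}_p^N,[\max(1,\|\xi\|_p)]^{2\alpha l}\,d^N\xi\bigr)$, so that $\mathcal{H}_{\mathbb{R}}(l)$ is (isometrically, via $\mathcal{F}$) this weighted $L^2$-space, and $\mathcal{H}_{\mathbb{R}}(\infty)=\bigcap_l \mathcal{H}_{\mathbb{R}}(l)$ is a Fr\'echet space by Lemma \ref{lemma1}. First I would note that the identity map $\bigl(\mathcal{D}_{\mathbb{R}}(\mathbb{Q}_p^N),d\bigr)\hookrightarrow\bigl(\mathcal{H}_{\mathbb{R}}(\infty),d\bigr)$ is a uniformly continuous (in fact isometric for the metric $d$) injection with dense image: density follows because $\mathcal{D}_{\mathbb{R}}(\mathbb{Q}_p^N)$ is by construction dense in each $\mathcal{H}_{\mathbb{R}}(l)$ in the $\|\cdot\|_l$-norm, and convergence in $d$ is exactly simultaneous convergence in all the $\|\cdot\|_l$; a standard diagonal argument then shows $\mathcal{D}_{\mathbb{R}}(\mathbb{Q}_p^N)$ is $d$-dense in $\mathcal{H}_{\mathbb{R}}(\infty)$. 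Since $\bigl(\mathcal{H}_{\mathbb{R}}(\infty),d\bigr)$ is complete (Lemma \ref{lemma1}), it is \emph{the} completion of $\bigl(\mathcal{D}_{\mathbb{R}}(\mathbb{Q}_p^N),d\bigr)$, giving (i).

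\textbf{Part (ii).} To prove nuclearity I would verify the Gel'fand--Vilenkin criterion: a countably Hilbert space is nuclear iff for every $m$ there is $n\ge m$ such that the canonical embedding $T^n_m:\mathcal{H}_{\mathbb{R}}(n)\to\mathcal{H}_{\mathbb{R}}(m)$ is Hilbert--Schmidt (equivalently nuclear). Transporting by $\mathcal{F}$, this embedding is multiplication acting between the two weighted $L^2$-spaces, i.e. it corresponds to multiplication by $\bigl[\max(1,\|\xi\|_p)\bigr]^{-\alpha(n-m)}$ as a map $L^2(w_n\,d^N\xi)\to L^2(w_m\,d^N\xi)$ where $w_l=[\max(1,\|\xi\|_p)]^{2\alpha l}$. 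The Hilbert--Schmidt norm of such a diagonal (multiplication) operator is computed by integrating the square of the symbol against the ratio of weights; concretely one needs
\[
\int_{\mathbb{Q}_p^N}\bigl[\max(1,\|\xi\|_p)\bigr]^{-2\alpha(n-m)}\,d^N\xi<\infty.
\]
This integral is elementary in the $p$-adic setting: split $\mathbb{Q}_p^N=\mathbb{Z}_p^N\sqcup\bigcup_{j\ge 1}S_j^N$, use $\mathrm{vol}(\mathbb{Z}_p^N)=1$ and $\mathrm{vol}(S_j^N)=p^{Nj}(1-p^{-N})$, so the integral equals $1+(1-p^{-N})\sum_{j\ge1}p^{Nj}p^{-2\alpha(n-m)j}$, a geometric series that converges precisely when $2\alpha(n-m)>N$. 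Hence choosing $n$ large enough that $2\alpha(n-m)>N$ makes $T^n_m$ Hilbert--Schmidt, and nuclearity follows. To make this rigorous I would first pin down the explicit orthonormal-basis computation of the Hilbert--Schmidt norm, or invoke the standard fact (as in \cite{Gel-Vil} or \cite{Obata}) that a countably Hilbert space in which the connecting maps are Hilbert--Schmidt is nuclear; the latter route is cleaner and avoids exhibiting a basis.

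The main obstacle, such as it is, is purely bookkeeping: one must be careful that the Fourier transform is an $L^2$-isometry for the self-dual measure $d\mu$, which differs from $d^N\xi$ by the positive constant $C(\mathfrak{q})$, and that this constant (and its relation to the quadratic form $\mathfrak{q}$ used to define $\mathcal{F}$) does not affect finiteness of the Hilbert--Schmidt norm — it only rescales it. Equivalently, one can simply work with $d^N\xi$ throughout in the definition \eqref{product} of $\langle\cdot,\cdot\rangle_l$, as the paper does, so that Plancherel for $\mathcal{F}$ relative to $d\mu$ contributes only the harmless factor $C(\mathfrak{q})$. Once the weighted-$L^2$ model is set up, both (i) and (ii) reduce to the convergence of the single $p$-adic integral above.
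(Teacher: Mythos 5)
Part (i) of your argument is correct and is essentially the paper's own: density of $\mathcal{D}_{\mathbb{R}}(\mathbb{Q}_{p}^{N})$ in each $\mathcal{H}_{\mathbb{R}}\left(  l\right)  $ plus a diagonal choice of approximants gives $d$-density, and completeness (Lemma \ref{lemma1}) then identifies $\mathcal{H}_{\mathbb{R}}\left(  \infty\right)  $ with the completion.

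Part (ii), however, rests on a step that fails. You compute the Hilbert--Schmidt norm of the canonical embedding $\mathcal{H}_{\mathbb{R}}\left(  n\right)  \hookrightarrow\mathcal{H}_{\mathbb{R}}\left(  m\right)  $ --- which, transported to the Fourier side and to unweighted $L^{2}$, is multiplication by $\phi\left(  \xi\right)  =\left[  \max\left(  1,\left\Vert \xi\right\Vert _{p}\right)  \right]  ^{-\alpha\left(  n-m\right)  }$ --- as $\int\left\vert \phi\right\vert ^{2}d^{N}\xi$. That formula is valid for diagonal operators on $\ell^{2}$ of a countable index set, not for multiplication operators on $L^{2}$ of a non-atomic measure space. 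In fact this embedding is not even compact: the weight $\left[  \max\left(  1,\left\Vert \xi\right\Vert _{p}\right)  \right]  ^{2\alpha l}$ equals $1$ on $\mathbb{Z}_{p}^{N}$ for every $l$, so on the infinite-dimensional subspace of $f$ with $\widehat{f}$ supported in $\mathbb{Z}_{p}^{N}$ all the norms $\left\Vert \cdot\right\Vert _{l}$ coincide and the embedding restricts to an isometry; equivalently, any orthonormal sequence of $\widehat{f}$'s supported in $\mathbb{Z}_{p}^{N}$ is fixed by $M_{\phi}$. The paper makes exactly this point in Remark \ref{nota1} (iv): the operators $\boldsymbol{A}_{-\gamma}$ have eigenvalues of infinite multiplicity (via the Albeverio--Kozyrev wavelet basis) and are not Hilbert--Schmidt for any $\gamma>0$. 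So your geometric-series integral, while correctly evaluated, does not compute a Hilbert--Schmidt norm, and the Grothendieck/Gel'fand--Vilenkin criterion cannot be verified with this family of norms. The paper's route for (ii) is entirely different and you would need to adopt it (or something like it): $\mathcal{D}_{\mathbb{C}}(\mathbb{Q}_{p}^{N})$ is nuclear by Bruhat's theorem, subspaces and completions of nuclear spaces are nuclear, and part (i) then yields the claim; exhibiting a generating family of Hilbertian norms with Hilbert--Schmidt connecting maps is only asserted abstractly in Remark \ref{nota1} (i), not via the norms $\left\Vert \cdot\right\Vert _{l}$.
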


\begin{proof}
Set $f\in\overline{\left(  \mathcal{D}_{\mathbb{R}}(\mathbb{Q}_{p}%
^{N}),d\right)  }$, then there exists a sequence $\left\{  f_{n}\right\}
_{n\in\mathbb{N}}$ in $\left(  \mathcal{D}_{\mathbb{R}}(\mathbb{Q}_{p}%
^{N}),d\right)  $ such that $f_{n}$ $\ \underrightarrow{\left\Vert
\cdot\right\Vert _{l}}$ $\ f$ for each $l\in\mathbb{N}$, i.e. $f\in\cap
_{l\in\mathbb{N}}\mathcal{H}_{l}$. Hence $\overline{\left(  \mathcal{D}%
_{\mathbb{R}}(\mathbb{Q}_{p}^{N}),d\right)  }\subset\left(  \mathcal{H}%
_{\mathbb{R}}\left(  \infty\right)  ,d\right)  $. Conversely, set
$g\in\mathcal{H}_{\mathbb{R}}\left(  \infty\right)  $. By using the density of
$\mathcal{D}_{\mathbb{R}}(\mathbb{Q}_{p}^{N})$ in $\mathcal{H}_{\mathbb{R}%
}\left(  l\right)  $, and the fact that $\left\Vert \cdot\right\Vert _{m}%
\leq\left\Vert \cdot\right\Vert _{n}$ if $m\leq n$ , we construct a sequence
$\left\{  g_{n}\right\}  _{n\in\mathbb{N}}$ in $\mathcal{D}_{\mathbb{R}%
}(\mathbb{Q}_{p}^{N})$ satisfying
\[
\left\Vert g_{n}-g\right\Vert _{l}\leq%
\begin{array}
[c]{cc}%
\frac{1}{n+1} & \text{for }0\leq l\leq n
\end{array}
\text{.}%
\]
Then $d\left(  g_{n},g\right)  \leq\max\left\{  \frac{1}{n+1},\frac{2}%
{n+1},\cdots,\frac{2^{-n}}{n+1},2^{-\left(  n+1\right)  }\right\}
\rightarrow0$ as $n\rightarrow\infty$. This fact shows that $g\in
\overline{\left(  \mathcal{D}_{\mathbb{R}}(\mathbb{Q}_{p}^{N}),d\right)  }$.

(ii) We recall that $\mathcal{D}_{\mathbb{C}}(\mathbb{Q}_{p}^{N})$ is a
nuclear space, cf. \cite[Section 4]{Bruhat}, and thus $\mathcal{D}%
_{\mathbb{R}}(\mathbb{Q}_{p}^{N})$ is a nuclear space, since any subspace of a
nuclear space is also nuclear, see e.g. \cite[Proposition 50.1]{Treves}. Now,
since the completion of a nuclear space is also nuclear, see e.g.
\cite[Proposition 50.1]{Treves}, by (i), $\mathcal{H}_{\mathbb{R}}\left(
\infty\right)  $ is a nuclear space.
\end{proof}

\begin{remark}
\label{nota2} (i) Lemma \ref{lemma2} is valid if we replace $\mathcal{D}%
_{\mathbb{R}}(\mathbb{Q}_{p}^{N})$ by $\mathcal{D}_{\mathbb{C}}(\mathbb{Q}%
_{p}^{N})$ and $\mathcal{H}_{\mathbb{R}}\left(  \infty\right)  $ by
$\mathcal{H}_{\mathbb{C}}\left(  \infty\right)  $.

\noindent(ii) By using the Cauchy-Schwartz inequality, if $l>\frac{N}{2\alpha
}$ and $g\in$ $\mathcal{H}_{\mathbb{R}}\left(  l\right)  $, then $\widehat
{g}\in L^{1}$, and thus $g\in C_{\mathbb{R}}^{_{\text{unif}}}$, see also
\cite[Lemma 16]{R-Zuniga}. Therefore $\mathcal{H}_{\mathbb{R}}\left(
\infty\right)  \subset L_{\mathbb{R}}^{2}\cap C_{\mathbb{R}}^{_{\text{unif}}}$.

\noindent(iii) In the definition of the product $\left\langle \cdot
,\cdot\right\rangle _{l}$, the weight $\left[  \max\left(  1,\left\Vert
\xi\right\Vert _{p}\right)  \right]  ^{2\alpha l}$ can be replaced by $\left[
1+\left\Vert \xi\right\Vert _{p}\right]  ^{2\alpha l}$ or by $\left[
1+\left\vert f\left(  \xi\right)  \right\vert _{p}\right]  ^{2\alpha l}$,
where $f$ is an elliptic polynomial, see Section \ref{Sect4} or \cite{Zu1},
and the corresponding norm is equivalent to $\left\Vert \cdot\right\Vert _{l}$.
\end{remark}

From Lemmas \ref{lemma1}-\ref{lemma2} we obtain the main result of this section.

\begin{theorem}
\label{Thm1} $\mathcal{H}_{\mathbb{R}}\left(  \infty\right)  $ is nuclear
countably Hilbert space.
\end{theorem}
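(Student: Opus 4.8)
The plan is to obtain Theorem \ref{Thm1} by simply assembling Lemmas \ref{lemma1} and \ref{lemma2}, since jointly they furnish exactly the two defining properties of a nuclear countably Hilbert space in the sense of Gel'fand and Vilenkin. First I would record the ``countably Hilbert'' part: by Lemma \ref{lemma1}, the family of scalar products $\langle\cdot,\cdot\rangle_{l}$, $l\in\mathbb{N}$, is compatible, the induced seminorms $\|\cdot\|_{l}$ are increasing, and $\mathcal{H}_{\mathbb{R}}(\infty)$ endowed with the projective limit topology $\tau_{P}$ (equivalently, the metric $d$) is metrizable and complete, hence a Fr\'echet space built from countably many Hilbert norms. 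That is precisely the notion of a countably Hilbert space as used in \cite{Gel-Vil} and \cite{Obata}.

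Next I would invoke Lemma \ref{lemma2}: part (i) identifies $(\mathcal{H}_{\mathbb{R}}(\infty),d)$ with the completion of $(\mathcal{D}_{\mathbb{R}}(\mathbb{Q}_{p}^{N}),d)$, and part (ii) shows this completion is a nuclear space, the chain of implications being that $\mathcal{D}_{\mathbb{C}}(\mathbb{Q}_{p}^{N})$ is nuclear (Bruhat), a subspace of a nuclear space is nuclear, and the completion of a nuclear space is nuclear. Putting the two lemmas together, $\mathcal{H}_{\mathbb{R}}(\infty)$ is at once a countably Hilbert space and a nuclear locally convex space, i.e. a nuclear countably Hilbert space, which is the assertion; the identical argument applies verbatim to $\mathcal{H}_{\mathbb{C}}(\infty)$.

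I do not anticipate a genuine obstacle here, since all the substantive work has been pushed into the preceding lemmas. The only point I would be careful about is matching definitions, namely checking that ``nuclear countably Hilbert space'' in the sense of \cite{Gel-Vil} (resp. \cite{Obata}) requires nothing beyond ``countably Hilbert'' plus ``nuclear as a locally convex space''; in particular, one may recall that for a countably Hilbert space nuclearity is equivalent to the requirement that for each $m$ there exist $n\geq m$ for which the canonical embedding $\mathcal{H}_{\mathbb{R}}(n)\hookrightarrow\mathcal{H}_{\mathbb{R}}(m)$ is Hilbert--Schmidt, and to observe that this equivalent reformulation is already subsumed by Lemma \ref{lemma2}(ii). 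If a self-contained argument were preferred, one could instead verify this Hilbert--Schmidt condition directly, estimating, against the $\|\cdot\|_{m}$-norm, the $\|\cdot\|_{n}$-orthonormal system furnished by suitably normalized characteristic functions of balls and exploiting the explicit weight $\left[\max\left(1,\|\xi\|_{p}\right)\right]^{2\alpha l}$ in \eqref{product}; but in view of Lemma \ref{lemma2} this is superfluous.
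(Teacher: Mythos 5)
Your main argument is exactly the paper's: Theorem \ref{Thm1} is deduced there in one line by combining Lemma \ref{lemma1} (compatible scalar products, completeness, metrizability, hence a countably Hilbert Fr\'echet space) with Lemma \ref{lemma2} (nuclearity, obtained by completing the nuclear space $\mathcal{D}_{\mathbb{R}}(\mathbb{Q}_{p}^{N})$ with respect to $d$). On that level the proposal matches the paper.

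Your closing remark, however, contains a genuine error. The ``self-contained alternative'' you sketch --- verifying directly that the canonical embedding $\mathcal{H}_{\mathbb{R}}(n)\hookrightarrow\mathcal{H}_{\mathbb{R}}(m)$ is Hilbert--Schmidt for the norms (\ref{product}) --- cannot succeed. On the subspace of those $f$ with $\widehat{f}$ supported in $B_{0}^{N}$ the weight $[\max(1,\Vert\xi\Vert_{p})]^{2\alpha l}$ is identically $1$, so all the norms $\Vert\cdot\Vert_{l}$ coincide with the $L^{2}$-norm there; the canonical embeddings restrict to the identity on an infinite-dimensional Hilbert subspace and are therefore not Hilbert--Schmidt, indeed not even compact. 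The paper itself records this obstruction in Remark \ref{nota1}(iv): the operators $\boldsymbol{A}_{-\gamma}$ have eigenfunctions of infinite multiplicity and are not Hilbert--Schmidt for any $\gamma>0$. So the Hilbert--Schmidt reformulation is not ``subsumed by Lemma \ref{lemma2}(ii)'' for this particular family of norms; Remark \ref{nota1}(i) asserts only the existence of some \emph{other} defining sequence of Hilbertian norms with that property. You should also be aware that this is more than a cosmetic point: for a countably Hilbert space with an increasing defining family of Hilbertian norms, nuclearity does force the canonical maps $\mathcal{H}_{\mathbb{R}}(n)\rightarrow\mathcal{H}_{\mathbb{R}}(m)$ to be eventually nuclear, so the failure of compactness observed above puts real pressure on Lemma \ref{lemma2}(ii), whose proof passes from nuclearity of $\mathcal{D}_{\mathbb{R}}(\mathbb{Q}_{p}^{N})$ in its standard (inductive-limit) topology to nuclearity of the strictly coarser metric topology $d$, a step not covered by ``subspaces and completions of nuclear spaces are nuclear''. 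If you rely on Lemma \ref{lemma2} as stated your write-up is formally the same as the paper's, but the equivalence you invoke in your last paragraph is precisely the lens through which one sees that the lemma itself needs further justification.
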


\begin{remark}
\label{nota1}(i) As a nuclear Fréchet space $\mathcal{H}_{\mathbb{R}}\left(
\infty\right)  $ admits a sequence of defining Hilbertian norms $\left\vert
\cdot\right\vert _{m\in\mathbb{N}}$ such that (1) $\left\vert g\right\vert
_{m}\leq C_{m}\left\vert g\right\vert _{m+1}$, $g\in\mathcal{H}_{\mathbb{R}%
}\left(  \infty\right)  $, with some $C_{m}>0$; (2) the canonical map
$i_{n,n+1}:H_{\mathbb{R}}\left(  n+1\right)  \rightarrow H_{\mathbb{R}}\left(
n\right)  $ is of Hilbert-Schmidt type, where $H_{\mathbb{R}}\left(  n\right)
$ is the Hilbert space associated with $\left\vert \cdot\right\vert _{n}$, cf.
\cite[Proposition 1.3.2]{Obata}.

\noindent(ii) Let $\mathcal{H}_{\mathbb{R}}^{^{\ast}}\left(  l\right)  $ be
the dual space of $\mathcal{H}_{\mathbb{R}}\left(  l\right)  $. By identifying
$\mathcal{H}_{\mathbb{R}}^{^{\ast}}\left(  l\right)  $ with $\mathcal{H}%
_{\mathbb{R}}\left(  -l\right)  $ and denoting the dual pairing between
$\mathcal{H}_{\mathbb{R}}^{^{\ast}}\left(  \infty\right)  $ and $\mathcal{H}%
_{\mathbb{R}}\left(  \infty\right)  $ by $\left\langle \cdot,\cdot
\right\rangle $, we have from the results of Gel'fand and Vilenkin that
$\mathcal{H}_{\mathbb{R}}^{^{\ast}}\left(  \infty\right)  =\cup_{l\in
\mathbb{N}}\mathcal{H}_{\mathbb{R}}^{^{\ast}}\left(  l\right)  $. We shall
consider $\mathcal{H}_{\mathbb{R}}^{^{\ast}}\left(  \infty\right)  $ as
equipped with the weak topology.

\noindent(iii) Theorem \ref{Thm1} is valid for $\mathcal{H}_{\mathbb{C}%
}^{^{\ast}}\left(  \infty\right)  $.

\noindent(iv) There are some well-known results for constructing nuclear
countably Hilbert spaces starting with suitable operators, see e.g.
\cite[Sections 1.2-1.3]{Obata}. In order to use the mentioned results, and
taking into account the definition of our norms $\left\Vert \cdot\right\Vert
_{l}$, see (\ref{product}), we have to work with operators of type
\[
\boldsymbol{A}_{\gamma}\left(  \cdot\right)  =\mathcal{F}^{-1}(\left[
\max\left(  1,\left\Vert \xi\right\Vert _{p}\right)  \right]  ^{\gamma
}\mathcal{F}\left(  \cdot\right)  )\text{ \ with }\gamma\in\mathbb{R}.
\]
It is essential to show that $\boldsymbol{A}_{-\gamma}$ is a Hilbert-Schmidt
operator for some $\gamma>0$, see \cite[Proposition 1.3.4]{Obata}. By using
the orthonormal basis for $L^{2}\left(  \mathbb{Q}_{p}^{N}\right)  $
constructed by Albeverio and Kozyrev, see \cite[Theorem 1]{Alberio-Kozyrev},
one verifies that the elements of this basis are eigenfunctions of
$\boldsymbol{A}_{-\gamma}$ `with infinite multiplicity', for any $\gamma>0$.
This fact was also noted, in dimension one, in \cite[Theorem 2]{Su-Qiu}. Then
$\boldsymbol{A}_{-\gamma}$ is not a Hilbert-Schmidt operator for any
$\gamma>0$. The fact that $\mathcal{S}(\mathbb{R}^{n})$ is a nuclear space is
established by using the Hamiltonian of the harmonic oscillator and
\cite[Proposition 1.3.4]{Obata}, see for instance \cite[Appendix 5]{Hida et
al}.
\end{remark}

The following result will be used later on.

\begin{lemma}
\label{lemma2A} For any $l\in\mathbb{N}$, we set
\[
d\nu_{l,N}:=\left[  \max\left(  1,\left\Vert \xi\right\Vert _{p}\right)
\right]  ^{2\alpha l}d^{N}\xi,
\]
and
\[
L_{l,N}^{2}:=\left\{  f:\mathbb{Q}_{p}^{N}\text{ }\rightarrow\mathbb{C}:%
{\textstyle\int\nolimits_{\mathbb{Q}_{p}^{N}}}
\left\vert \widehat{f}\right\vert ^{2}d\nu_{l,N}<\infty\right\}  .
\]
Notice that $L_{l,N}^{2}\subset L^{2}$. Then $\mathcal{H}_{\mathbb{C}}\left(
l\right)  =L_{l,N}^{2}$ for any $l\in\mathbb{N}$. A similar result is
valid\ for $\mathcal{H}_{\mathbb{R}}\left(  l\right)  $.
\end{lemma}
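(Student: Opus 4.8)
The plan is to prove $\mathcal{H}_{\mathbb{C}}(l) = L_{l,N}^2$ by showing that the map $f \mapsto \widehat{f}$ identifies $\mathcal{H}_{\mathbb{C}}(l)$ with the weighted $L^2$-space $L^2(\mathbb{Q}_p^N, d\nu_{l,N})$ and that the latter is exactly the space $L_{l,N}^2$ as defined. First I would observe that, by the Fourier-theoretic definition of the inner product $\langle f, g\rangle_l$ in \eqref{product}, the Fourier transform $\mathcal{F}$ is, when restricted to $\mathcal{D}_{\mathbb{C}}(\mathbb{Q}_p^N)$, an isometry from $(\mathcal{D}_{\mathbb{C}}(\mathbb{Q}_p^N), \|\cdot\|_l)$ onto its image inside $L^2(\mathbb{Q}_p^N, d\nu_{l,N})$. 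Since $\mathcal{H}_{\mathbb{C}}(l)$ is by definition the completion of the former, $\mathcal{F}$ extends to an isometric isomorphism of $\mathcal{H}_{\mathbb{C}}(l)$ onto the closure of $\mathcal{F}(\mathcal{D}_{\mathbb{C}}(\mathbb{Q}_p^N))$ in $L^2(\mathbb{Q}_p^N, d\nu_{l,N})$. The content of the lemma is then that this closure is all of $L^2(\mathbb{Q}_p^N, d\nu_{l,N})$, i.e. that $\mathcal{F}(\mathcal{D}_{\mathbb{C}}(\mathbb{Q}_p^N)) = \mathcal{D}_{\mathbb{C}}(\mathbb{Q}_p^N)$ is dense in $L^2(\mathbb{Q}_p^N, d\nu_{l,N})$.

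The key step is therefore this density statement. I would argue it as follows: the weight $w_l(\xi) := [\max(1, \|\xi\|_p)]^{2\alpha l}$ is locally constant and bounded below by $1$, and on every ball $B_k^N$ it is bounded above by $p^{2\alpha l k}$ (for $k \geq 0$). Given any $h \in L^2(\mathbb{Q}_p^N, d\nu_{l,N})$, the truncations $h \cdot \mathbf{1}_{B_k^N}$ converge to $h$ in $L^2(d\nu_{l,N})$ by dominated convergence. On a fixed ball $B_k^N$ the measure $d\nu_{l,N}$ is comparable to $d^N\xi$ (same null sets, Radon–Nikodym derivative bounded above and below by positive constants), so $L^2(B_k^N, d\nu_{l,N}) = L^2(B_k^N, d^N\xi)$ with equivalent norms; hence the Bruhat–Schwartz functions supported in $B_k^N$ — which are dense in $L^2(B_k^N, d^N\xi)$ because simple functions built from balls are dense and each ball's indicator is itself a test function — are dense in $L^2(B_k^N, d\nu_{l,N})$. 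Combining the two approximations gives density of $\mathcal{D}_{\mathbb{C}}(\mathbb{Q}_p^N)$ in $L^2(\mathbb{Q}_p^N, d\nu_{l,N})$.

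Next I would record the consequence: the isometric isomorphism $\mathcal{F} : \mathcal{H}_{\mathbb{C}}(l) \to L^2(\mathbb{Q}_p^N, d\nu_{l,N})$ is onto, and therefore every $f \in \mathcal{H}_{\mathbb{C}}(l)$ is (identified with) an element whose Fourier transform $\widehat{f}$ lies in $L^2(d\nu_{l,N})$, i.e. $\int_{\mathbb{Q}_p^N} |\widehat{f}|^2\, d\nu_{l,N} < \infty$; conversely any such $f$ (note $\widehat{f} \in L^2(d\nu_{l,N}) \subset L^2(d^N\xi)$ since $w_l \geq 1$, so $f \in L^2$ is well defined by Plancherel) arises this way. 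That is precisely the assertion $\mathcal{H}_{\mathbb{C}}(l) = L_{l,N}^2$. The remark $L_{l,N}^2 \subset L^2$ is immediate from $w_l \geq 1$. Finally, the $\mathbb{R}$-version follows verbatim, replacing $\mathcal{D}_{\mathbb{C}}$ by $\mathcal{D}_{\mathbb{R}}$ throughout and using that $\overline{\widehat{g}}(-\xi) = \widehat{g}(\xi)$-type reality constraints are preserved under completion; alternatively one notes $\mathcal{H}_{\mathbb{R}}(l) = \mathcal{H}_{\mathbb{C}}(l) \cap \{f : f \text{ real-valued}\}$ and intersects.

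The main obstacle I anticipate is being careful about the identification of abstract completion elements with genuine functions: a priori $\mathcal{H}_{\mathbb{C}}(l)$ is an abstract Hilbert space, and one must check that the natural map $\mathcal{H}_{\mathbb{C}}(l) \to L^2(d^N\xi)$, $f \mapsto \widehat{f}$, is injective (so that elements really are functions and the description $L_{l,N}^2$ makes sense). This follows from $w_l \geq 1$, which makes convergence in $\|\cdot\|_l$ imply convergence in $L^2(d^N\xi) = L^2$; hence a Cauchy sequence in $\mathcal{D}_{\mathbb{C}}(\mathbb{Q}_p^N)$ for $\|\cdot\|_l$ has a genuine $L^2$-limit, and the limit function together with its weighted $L^2$ Fourier transform is well defined and independent of the sequence. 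Everything else is routine measure theory once the density step is in hand.
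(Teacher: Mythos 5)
Your proposal is correct, and its overall strategy coincides with the paper's: both identify $\mathcal{H}_{\mathbb{C}}(l)$ with the weighted space $L^{2}(\mathbb{Q}_{p}^{N},d\nu_{l,N})$ via the Fourier transform, handle the inclusion $\mathcal{H}_{\mathbb{C}}(l)\subset L_{l,N}^{2}$ by completeness of $L^{2}$, and reduce the reverse inclusion to a density statement. Where you genuinely differ is in how that density is obtained. The paper exploits the fact that the weight $w_{l}(\xi)=\left[\max\left(1,\left\Vert \xi\right\Vert_{p}\right)\right]^{\alpha l}$ is locally constant and nonvanishing: given $f$ with $w_{l}\widehat{f}\in L^{2}$, it approximates $w_{l}\widehat{f}$ by test functions $f_{n}$ in the \emph{unweighted} $L^{2}$ and then sets $\widehat{g}_{n}=f_{n}/w_{l}$, which is again a Bruhat--Schwartz function; the sequence $g_{n}$ then converges to $f$ in $\left\Vert\cdot\right\Vert_{l}$. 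This is a one-line trick, but it depends on the weight being locally constant so that division preserves $\mathcal{D}_{\mathbb{C}}(\mathbb{Q}_{p}^{N})$. You instead prove density of $\mathcal{D}_{\mathbb{C}}(\mathbb{Q}_{p}^{N})$ in $L^{2}(d\nu_{l,N})$ directly, by truncating to balls and using that $d\nu_{l,N}$ is comparable to $d^{N}\xi$ on each ball; this is slightly longer but more robust, since it only needs the weight to be locally bounded above and below, and it would survive the replacement of the weight by the non-locally-constant alternatives mentioned in Remark \ref{nota2}(iii). You also make explicit the injectivity of the map from the abstract completion into $L^{2}$ (via $w_{l}\geq1$), a point the paper passes over silently; that is a worthwhile addition rather than a correction.
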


\begin{proof}
Take $f\in\mathcal{H}_{\mathbb{C}}\left(  l\right)  $, then there exists a
sequence $\left\{  f_{n}\right\}  _{n\in\mathbb{N}}$ in $\mathcal{D}%
_{\mathbb{C}}\left(  \mathbb{Q}_{p}^{N}\right)  $ such that $f_{n}$
$\underrightarrow{\left\Vert \cdot\right\Vert _{l}}$ $f$, i.e. $\left[
\max\left(  1,\left\Vert \xi\right\Vert _{p}\right)  \right]  ^{\alpha
l}\widehat{f}_{n}$ $\underrightarrow{L^{2}}$ $\left[  \max\left(  1,\left\Vert
\xi\right\Vert _{p}\right)  \right]  ^{\alpha l}\widehat{f}$, and since
$L^{2}$ is complete, $\left[  \max\left(  1,\left\Vert \xi\right\Vert
_{p}\right)  \right]  ^{\alpha l}\widehat{f}\in L^{2}$, i.e. $f\in L_{l,N}%
^{2}$. Conversely, take $f\in L^{2}$ such that $\left[  \max\left(
1,\left\Vert \xi\right\Vert _{p}\right)  \right]  ^{\alpha l}\widehat{f}\in
L^{2}$. By using the fact that $\mathcal{D}_{\mathbb{C}}\left(  \mathbb{Q}%
_{p}^{N}\right)  $ is dense in $L^{2}$, there exists a sequence $\left\{
f_{n}\right\}  _{n\in\mathbb{N}}$ in $\mathcal{D}_{\mathbb{C}}\left(
\mathbb{Q}_{p}^{N}\right)  $ such that $f_{n}$ $\underrightarrow{L^{2}}$
$\left[  \max\left(  1,\left\Vert \xi\right\Vert _{p}\right)  \right]
^{\alpha l}\widehat{f}$. We now define $\widehat{g}_{n}=\frac{f_{n}}{\left[
\max\left(  1,\left\Vert \xi\right\Vert _{p}\right)  \right]  ^{\alpha l}}%
\in\mathcal{D}_{\mathbb{C}}\left(  \mathbb{Q}_{p}^{N}\right)  $. Then $g_{n}$
$\underrightarrow{\left\Vert \cdot\right\Vert _{l}}$ $f$, i.e. $f\in
\mathcal{H}_{\mathbb{C}}\left(  l\right)  $.
\end{proof}

\begin{remark}
\label{NoteL2}Since $L_{l,N}^{2}\subset L^{2}$, we have
\[
L_{l,N}^{2}=\left\{  \widehat{f}:\mathbb{Q}_{p}^{N}\text{ }\rightarrow
\mathbb{C}:%
{\textstyle\int\nolimits_{\mathbb{Q}_{p}^{N}}}
\left\vert \widehat{f}\right\vert ^{2}d\nu_{l,N}<\infty\right\}  ,
\]
which is isomorphic as Hilbert space to $L^{2}\left(  \mathbb{Q}_{p}^{N}%
,d\nu_{l,N}\right)  \subset L^{2}$ by means of the Fourier transform. A
similar assertion is valid in the real case.
\end{remark}

\subsubsection{Some remarks on tensor products}

Let $\mathfrak{X}$ and $\mathfrak{N}$ be locally convex spaces. We denote by
$\mathfrak{X}\otimes_{\text{alg}}\mathfrak{N}$ their algebraic tensor product.
The $\pi$\textit{-topology} on $\mathfrak{X}\otimes_{\text{alg}}\mathfrak{N}%
$\ is the strongest locally convex topology such that the canonical map
$\mathfrak{X}\times\mathfrak{N}\rightarrow\mathfrak{X}\otimes_{\text{alg}%
}\mathfrak{N}$ is continuous. The completion of $\mathfrak{X}\otimes
_{\text{alg}}\mathfrak{N}$\ with respect to the $\pi$-topology is called the
$\pi$\textit{-tensor product} and is denoted by $\mathfrak{X}\otimes_{\pi
}\mathfrak{N}$. If $\mathfrak{X}$ and $\mathfrak{N}$ are nuclear spaces, then
$\mathfrak{X}\otimes_{\pi}\mathfrak{N}$ is also nuclear, cf. \cite[Proposition
1.3.7]{Obata}.

Let $\mathcal{L}$ and $\mathcal{K}$ be Hilbert spaces, we denote by
$\mathcal{L}\otimes\mathcal{K}$ the Hilbert space tensor product of
$\mathcal{L}$ and $\mathcal{K}$. Now, let $\mathfrak{X}$ and $\mathfrak{N}$ be
locally convex spaces with defining Hilbertian seminorms $\left\{  \left\Vert
\cdot\right\Vert _{\alpha\in A}\right\}  $ and $\left\{  \left\Vert
\cdot\right\Vert _{\beta\in B}^{\prime}\right\}  $ respectively. Let
$\mathfrak{X}_{\alpha}$ and $\mathfrak{N}_{\beta}$ be the Hilbert spaces
associated with $\left\Vert \cdot\right\Vert _{\alpha}$\ and $\left\Vert
\cdot\right\Vert _{\beta}^{\prime}$\ respectively. Then $\left\{
\mathfrak{X}_{\alpha}\otimes\mathfrak{N}_{\beta}\right\}  _{\alpha\in
A,\beta\in B}$ becomes a projective system of Hilbert spaces. If
$\mathfrak{X}$ or $\mathfrak{N}$ is nuclear, then
\begin{equation}
\mathfrak{X}\otimes_{\pi}\mathfrak{N}\cong\projlim_{\alpha,\beta}\left(
\mathfrak{X}_{\alpha}\otimes\mathfrak{N}_{\beta}\right)  , \label{projlim}%
\end{equation}
cf. \cite[Proposition 1.3.8]{Obata}.

\begin{lemma}
\label{lemma_tensor_product}$\mathcal{H}_{\mathbb{R}}\left(  \mathbb{Q}%
_{p}^{N};\infty\right)  \otimes_{\pi}\mathcal{H}_{\mathbb{R}}\left(
\mathbb{Q}_{p}^{M};\infty\right)  \cong\mathcal{H}_{\mathbb{R}}\left(
\mathbb{Q}_{p}^{N+M};\infty\right)  $ for $N$, $M\in\mathbb{N\smallsetminus
}\left\{  0\right\}  $. By induction $\mathcal{H}_{\mathbb{R}}\left(
\mathbb{Q}_{p}^{N_{1}};\infty\right)  \otimes_{\pi}\cdots\otimes_{\pi
}\mathcal{H}_{\mathbb{R}}\left(  \mathbb{Q}_{p}^{N_{k}};\infty\right)
\cong\mathcal{H}_{\mathbb{R}}\left(  \mathbb{Q}_{p}^{\sum_{i=1}^{k}N_{i}%
};\infty\right)  $ for $N_{i}\in\mathbb{N\smallsetminus}\left\{  0\right\}  $,
$i=1,\cdots,k$.
\end{lemma}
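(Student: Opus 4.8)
The plan is to reduce the tensor-product identity to the level of the individual Hilbert spaces $\mathcal{H}_{\mathbb{R}}(l)$ and then invoke the general description of $\pi$-tensor products of nuclear spaces recorded in \eqref{projlim}. First I would use Lemma \ref{lemma2A}, which identifies $\mathcal{H}_{\mathbb{R}}(\mathbb{Q}_{p}^{N};l,\alpha)$ with $L^{2}_{l,N}$, i.e. with $L^{2}(\mathbb{Q}_{p}^{N},d\nu_{l,N})$ via the Fourier transform (Remark \ref{NoteL2}). So the whole statement becomes a statement about weighted $L^{2}$-spaces: I must show
\[
L^{2}(\mathbb{Q}_{p}^{N},d\nu_{l,N})\otimes L^{2}(\mathbb{Q}_{p}^{M},d\nu_{l,M})\cong L^{2}(\mathbb{Q}_{p}^{N+M},d\nu_{l,N+M})
\]
as Hilbert spaces, compatibly with the projective systems in $l$. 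The key algebraic fact making this work is that the weight is multiplicative across the splitting $\mathbb{Q}_{p}^{N+M}=\mathbb{Q}_{p}^{N}\times\mathbb{Q}_{p}^{M}$: for $\xi=(\xi',\xi'')$ one has $\|\xi\|_{p}=\max(\|\xi'\|_{p},\|\xi''\|_{p})$, hence
\[
\max(1,\|\xi\|_{p})=\max\bigl(\max(1,\|\xi'\|_{p}),\,\max(1,\|\xi''\|_{p})\bigr),
\]
so that $\max(1,\|\xi\|_{p})^{2\alpha l}$ is comparable to (indeed bounded above and below by constants times) $\max(1,\|\xi'\|_{p})^{2\alpha l}\max(1,\|\xi''\|_{p})^{2\alpha l}$; using the equivalent weight $[1+\|\xi\|_{p}]^{2\alpha l}$ from Remark \ref{nota2}(iii) one even gets an exact product if one replaces $\max$ by the sum-type weight, but comparability is all that is needed since it gives an isomorphism of Hilbert spaces with equivalent norms. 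Thus $d\nu_{l,N+M}$ is (up to equivalence) $d\nu_{l,N}\otimes d\nu_{l,M}$, and the standard Fubini-type identification $L^{2}(X\times Y,\mu\otimes\nu)\cong L^{2}(X,\mu)\otimes L^{2}(Y,\nu)$ yields the isomorphism at each fixed level $l$.

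Next I would check that these level-$l$ isomorphisms are compatible with the canonical inclusions $\mathcal{H}_{\mathbb{R}}(l+1)\hookrightarrow\mathcal{H}_{\mathbb{R}}(l)$, so that they assemble into an isomorphism of the projective limits. Concretely, writing $\mathfrak{X}=\mathcal{H}_{\mathbb{R}}(\mathbb{Q}_{p}^{N};\infty)$ and $\mathfrak{N}=\mathcal{H}_{\mathbb{R}}(\mathbb{Q}_{p}^{M};\infty)$ with their defining Hilbertian seminorms $\|\cdot\|_{l}$, the associated Hilbert spaces are $\mathfrak{X}_{l}=\mathcal{H}_{\mathbb{R}}(\mathbb{Q}_{p}^{N};l)$ and $\mathfrak{N}_{l}=\mathcal{H}_{\mathbb{R}}(\mathbb{Q}_{p}^{M};l)$. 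Since $\mathcal{H}_{\mathbb{R}}(\mathbb{Q}_{p}^{N};\infty)$ is nuclear by Lemma \ref{lemma2}(ii), formula \eqref{projlim} applies and gives
\[
\mathfrak{X}\otimes_{\pi}\mathfrak{N}\cong\projlim_{l}\bigl(\mathfrak{X}_{l}\otimes\mathfrak{N}_{l}\bigr)\cong\projlim_{l}\mathcal{H}_{\mathbb{R}}\bigl(\mathbb{Q}_{p}^{N+M};l\bigr)=\mathcal{H}_{\mathbb{R}}\bigl(\mathbb{Q}_{p}^{N+M};\infty\bigr),
\]
where the middle isomorphism is the one just established at each level $l$. (One uses here that the index set for the seminorms may be taken to be the diagonal $\{(l,l):l\in\mathbb{N}\}$, which is cofinal in $A\times B$ since both systems are indexed by $\mathbb{N}$ with the same monotonicity.) The iterated statement for $N_{1},\dots,N_{k}$ then follows by induction on $k$, using associativity of $\otimes_{\pi}$ for nuclear spaces and the fact that $\mathcal{H}_{\mathbb{R}}(\mathbb{Q}_{p}^{N_{1}+\cdots+N_{j}};\infty)$ is again nuclear at each stage.

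The main obstacle is the bookkeeping around the weight: proving that the two weights on $\mathbb{Q}_{p}^{N+M}$ and $\mathbb{Q}_{p}^{N}\times\mathbb{Q}_{p}^{M}$ give rise to the \emph{same} Hilbert space of functions up to a bounded-below, bounded-above norm equivalence (so that the identity map on functions is a topological isomorphism of Hilbert spaces), and that this equivalence is uniform enough in $l$ that the projective limits are identified. This is where Remark \ref{nota2}(iii) is genuinely useful: replacing $\max(1,\|\xi\|_{p})^{2\alpha l}$ by $[1+\|\xi\|_{p}]^{2\alpha l}$ and then by the "product" weight $[1+\|\xi'\|_{p}]^{2\alpha l}[1+\|\xi''\|_{p}]^{2\alpha l}$ keeps each norm equivalent to $\|\cdot\|_{l}$, and the comparison constants depend on $l$ in a controlled way (they are of the form $c^{l}$ for a fixed $c$), which does not affect the projective-limit topology. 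Everything else — Fubini for the $L^{2}$ tensor product, cofinality of the diagonal index set, and the induction — is routine given the results already assembled in this section.
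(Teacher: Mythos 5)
Your overall architecture --- reduce to the weighted $L^{2}$-description of $\mathcal{H}_{\mathbb{R}}(l)$ via Lemma \ref{lemma2A} and Remark \ref{NoteL2}, apply (\ref{projlim}), and pass to a cofinal subsystem of the index set --- is the same as the paper's. But the step on which everything rests is false as you state it: the weight $\left[\max(1,\Vert \xi\Vert _{p})\right]^{2\alpha l}$ on $\mathbb{Q}_{p}^{N+M}$ is \emph{not} comparable, with constants, to the product weight $\left[\max(1,\Vert \xi'\Vert _{p})\right]^{2\alpha l}\left[\max(1,\Vert \xi''\Vert _{p})\right]^{2\alpha l}$. Writing $a=\max(1,\Vert \xi'\Vert _{p})$ and $b=\max(1,\Vert \xi''\Vert _{p})$, one has $\max(a,b)\leq ab\leq\left[\max(a,b)\right]^{2}$, and the ratio $ab/\max(a,b)=\min(a,b)$ is unbounded (take $\Vert \xi'\Vert _{p}=\Vert \xi''\Vert _{p}\rightarrow\infty$). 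The same objection defeats the ``exact product'' claim for the weight $1+\Vert \xi\Vert _{p}$: $(1+a)(1+b)$ is not comparable to $1+\max(a,b)$. Consequently the level-$l$ Hilbert spaces $L^{2}(\mathbb{Q}_{p}^{N},d\nu_{l,N})\otimes L^{2}(\mathbb{Q}_{p}^{M},d\nu_{l,M})$ and $L^{2}(\mathbb{Q}_{p}^{N+M},d\nu_{l,N+M})$ are genuinely different (the former is strictly smaller in general), and no constant of the form $c^{l}$ repairs this: the discrepancy is an unbounded function of the variable, not a constant.

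The conclusion survives, but only at the level of the projective limits, and this is exactly how the paper argues: from $\max(a,b)\leq ab\leq\left[\max(a,b)\right]^{2}$ one obtains the continuous embeddings
\[
L_{1}^{2}\left(  N+M,l+m\right)  \hookrightarrow L_{0}^{2}\left(
N+M;l,m\right)  \hookrightarrow L_{1}^{2}\left(  N+M,\min\left\{  l,m\right\}
\right)  ,
\]
where $L_{0}^{2}$ carries the product weight and $L_{1}^{2}$ carries the max-weight, i.e.\ is $\mathcal{H}_{\mathbb{R}}(\mathbb{Q}_{p}^{N+M};k)$ up to Fourier transform. These embeddings interleave the two projective systems cofinally, so their projective limits coincide even though no two individual levels are isomorphic via the identity map. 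You should replace your claimed ``norm equivalence at each level, uniform in $l$'' by this two-sided interleaving; with that substitution the rest of your argument (Fubini for the Hilbert space tensor product, reduction to the diagonal index set, induction on $k$) goes through as in the paper.
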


\begin{proof}
By using (\ref{projlim}),%
\begin{equation}
\mathcal{H}_{\mathbb{R}}\left(  \mathbb{Q}_{p}^{N};\infty\right)  \otimes
_{\pi}\mathcal{H}_{\mathbb{R}}\left(  \mathbb{Q}_{p}^{M};\infty\right)
\cong\projlim_{l,m}\left(  \mathcal{H}_{\mathbb{R}}\left(  \mathbb{Q}_{p}%
^{N};l\right)  \otimes\mathcal{H}_{\mathbb{R}}\left(  \mathbb{Q}_{p}%
^{M};m\right)  \right)  . \label{key_morphism}%
\end{equation}
By Remark \ref{NoteL2}%
\[
\mathcal{H}_{\mathbb{R}}\left(  \mathbb{Q}_{p}^{L};l\right)  \cong
L_{\mathbb{R}}^{2}\left(  \mathbb{Q}_{p}^{L},\left[  \max\left(  1,\left\Vert
\xi\right\Vert _{p}\right)  \right]  ^{2\alpha l}d^{L}\xi\right)  ,
\]
then%
\begin{multline*}
\mathcal{H}_{\mathbb{R}}\left(  \mathbb{Q}_{p}^{N};l\right)  \otimes
\mathcal{H}_{\mathbb{R}}\left(  \mathbb{Q}_{p}^{M};m\right)  \cong\\
L_{\mathbb{R}}^{2}\left(  \mathbb{Q}_{p}^{N+M},\left[  \max\left(
1,\left\Vert \xi\right\Vert _{p}\right)  \right]  ^{2\alpha l}\left[
\max\left(  1,\left\Vert \zeta\right\Vert _{p}\right)  \right]  ^{2\alpha
m}d^{N}\xi d^{M}\zeta\right) \\
=:L_{0}^{2}\left(  N+M;l,m\right)  .
\end{multline*}

Set $\left(  l,m\right)  \geq\left(  l^{\prime},m^{\prime}\right)
\Leftrightarrow l\geq l^{\prime}$ and $m\geq m^{\prime}$, and define%
\[
j_{\left(  l,m\right)  ,\left(  l^{\prime},m^{\prime}\right)  }:L_{0}%
^{2}\left(  N+M;l,m\right)  \hookrightarrow L_{0}^{2}\left(  N+M;l^{\prime
},m^{\prime}\right)  \text{ for }\left(  l,m\right)  \geq\left(  l^{\prime
},m^{\prime}\right)  \text{, }%
\]
where `$\hookrightarrow$' denotes a continuous embedding of Hilbert spaces,
then
\[
\left(  \left\{  L_{0}^{2}\left(  N+M;l,m\right)  \right\}  _{\left(
l,m\right)  },\left\{  j_{\left(  l,m\right)  ,\left(  l^{\prime},m^{\prime
}\right)  }\right\}  _{\left(  l,m\right)  \geq\left(  l^{\prime},m^{\prime
}\right)  }\right)
\]
forms a projective system and by (\ref{key_morphism}),
\[
\mathcal{H}_{\mathbb{R}}\left(  \mathbb{Q}_{p}^{N};\infty\right)  \otimes
_{\pi}\mathcal{H}_{\mathbb{R}}\left(  \mathbb{Q}_{p}^{M};\infty\right)
\cong\projlim_{l,m}L_{0}^{2}\left(  N+M;l,m\right)  .
\]
We now set
\[
L_{1}^{2}\left(  N+M,k\right)  :=L_{\mathbb{R}}^{2}\left(  \mathbb{Q}%
_{p}^{N+M},\left[  \max\left(  1,\left\Vert \xi\right\Vert _{p},\left\Vert
\zeta\right\Vert _{p}\right)  \right]  ^{2\alpha k}d^{N}\xi d^{M}\zeta\right)
,
\]
where $\xi=\left(  \xi_{1},\cdots,\xi_{N}\right)  \in\mathbb{Q}_{p}^{N}$,
$\zeta=\left(  \zeta_{1},\cdots,\zeta_{M}\right)  \in\mathbb{Q}_{p}^{M}$. We
now note that%
\begin{equation}
L_{1}^{2}\left(  N+M,l+m\right)  \hookrightarrow L_{0}^{2}\left(
N+M;l,m\right)  \hookrightarrow L_{1}^{2}\left(  N+M,\min\left\{  l,m\right\}
\right)  , \label{Key_emmbeding}%
\end{equation}
for $l,m\in\mathbb{N}$.

On the other hand, since \ for $l\geq k$, the inequality $\left\Vert
\cdot\right\Vert _{k}\leq\left\Vert \cdot\right\Vert _{l}$ implies
\[
i_{k,l}:L_{1}^{2}\left(  N+M,l\right)  \hookrightarrow L_{1}^{2}\left(
N+M,k\right)  \text{ for }k\leq l,
\]
thus $\left(  \left\{  L_{1}^{2}\left(  N+M,k\right)  \right\}  _{k},\left\{
i_{k,l}\right\}  _{k\leq l}\right)  $ forms a projective system. The result
follows by showing that%
\[
\projlim_{\left(  l,m\right)  }L_{0}^{2}\left(  N+M;l,m\right)  \cong
\projlim_{l+m}L_{1}^{2}\left(  N+M;l+m\right)
\]
from (\ref{Key_emmbeding}).
\end{proof}

\begin{remark}
\label{Kernel_theorem}Let $\mathfrak{X}$ be a nuclear Fréchet space and let
$\mathfrak{N}$ be a Fréchet space. We denote by $\mathcal{B}(\mathfrak{X}%
,\mathfrak{N})$ the space of jointly continuous bilinear maps from
$\mathfrak{X}\times\mathfrak{N}$ into $\mathbb{R}$, and by $\mathcal{B}%
_{\text{sep}}(\mathfrak{X},\mathfrak{N})$ the space of separately continuous
bilinear maps from $\mathfrak{X}\times\mathfrak{N}$ into $\mathbb{R}$. The
following version of the Kernel Theorem will be used later on:%
\[
\left(  \mathfrak{X}\otimes_{\pi}\mathfrak{N}\right)  ^{\ast}\cong
\mathcal{B}(\mathfrak{X},\mathfrak{N})\cong\mathcal{B}_{\text{sep}%
}(\mathfrak{X},\mathfrak{N}),
\]
cf. \cite[Theorem 1.3.10 and Proposition 1.3.12.]{Obata}
\end{remark}

\subsection{Pseudodifferential operators acting on $\mathcal{H}_{\mathbb{C}%
}\left(  \infty\right)  $}

\begin{definition}
\label{def1}We say that a function $\mathfrak{a}:\mathbb{Q}_{p}^{N}%
\rightarrow\mathbb{R}_{+}$ is a smooth symbol, if it satisfies the following properties:

\noindent(i) $\mathfrak{a}$ is a continuous function;

\noindent(ii) there exists a positive constant $C=C(\mathfrak{a})$ such that
$\mathfrak{a}\left(  \xi\right)  \geq C$ for any $\xi\in\mathbb{Q}_{p}^{N}$;

\noindent(iii) there exist positive constants $C_{0}$, $C_{1}$, $\alpha$,
$m_{0}$, with $m_{0}\in\mathbb{N}$, such that
\[
C_{0}\left\Vert \xi\right\Vert _{p}^{\alpha}\leq\mathfrak{a}\left(
\xi\right)  \leq C_{1}\left\Vert \xi\right\Vert _{p}^{\alpha}\text{ \ for
}\left\Vert \xi\right\Vert _{p}\geq p^{m_{0}}\text{.}%
\]

\end{definition}

Given a smooth symbol $\mathfrak{a}\left(  \xi\right)  $, we attach to it the
following pseudodifferential operator:%
\[%
\begin{array}
[c]{ccc}%
\mathcal{D}_{\mathbb{C}}(\mathbb{Q}_{p}^{N}) & \rightarrow & L^{2}\cap
C^{\text{unif}}\\
g & \rightarrow & \boldsymbol{A}g,
\end{array}
\]
where $\left(  \boldsymbol{A}g\right)  \left(  x\right)  =\mathcal{F}%
_{\xi\rightarrow x}^{-1}\left(  \mathfrak{a}\left(  \xi\right)  \mathcal{F}%
_{x\rightarrow\xi}g\right)  $.

\begin{lemma}
\label{lemma3} For any $l\in\mathbb{N}$, the mapping $\boldsymbol{A}%
:\mathcal{H}_{\mathbb{C}}\left(  l+1\right)  \rightarrow\mathcal{H}%
_{\mathbb{C}}\left(  l\right)  $ is a well-defined continuous mapping between
Banach spaces.
\end{lemma}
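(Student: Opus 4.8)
\textbf{Proof plan for Lemma \ref{lemma3}.}

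The plan is to estimate $\left\Vert \boldsymbol{A}g\right\Vert _{l}$ in terms of $\left\Vert g\right\Vert _{l+1}$ directly from the definition \eqref{product} of the norms. By Lemma \ref{lemma2A}, an element $g\in\mathcal{H}_{\mathbb{C}}\left(l+1\right)$ is characterized by $\int_{\mathbb{Q}_{p}^{N}}\left[\max\left(1,\left\Vert\xi\right\Vert_{p}\right)\right]^{2\alpha(l+1)}\left\vert\widehat{g}\left(\xi\right)\right\vert^{2}d^{N}\xi<\infty$, and since $\widehat{\boldsymbol{A}g}\left(\xi\right)=\mathfrak{a}\left(\xi\right)\widehat{g}\left(\xi\right)$, we have
\[
\left\Vert\boldsymbol{A}g\right\Vert_{l}^{2}=\int_{\mathbb{Q}_{p}^{N}}\left[\max\left(1,\left\Vert\xi\right\Vert_{p}\right)\right]^{2\alpha l}\left\vert\mathfrak{a}\left(\xi\right)\right\vert^{2}\left\vert\widehat{g}\left(\xi\right)\right\vert^{2}d^{N}\xi.
\]
So it suffices to show there is a constant $K=K\left(\mathfrak{a},l\right)>0$ with
\[
\left[\max\left(1,\left\Vert\xi\right\Vert_{p}\right)\right]^{2\alpha l}\left\vert\mathfrak{a}\left(\xi\right)\right\vert^{2}\leq K\left[\max\left(1,\left\Vert\xi\right\Vert_{p}\right)\right]^{2\alpha(l+1)}\quad\text{for all }\xi\in\mathbb{Q}_{p}^{N},
\]
i.e. $\left\vert\mathfrak{a}\left(\xi\right)\right\vert^{2}\leq K\left[\max\left(1,\left\Vert\xi\right\Vert_{p}\right)\right]^{2\alpha}$.

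First I would split $\mathbb{Q}_{p}^{N}$ into the region $\left\Vert\xi\right\Vert_{p}\geq p^{m_{0}}$ and its complement. On the region $\left\Vert\xi\right\Vert_{p}\geq p^{m_{0}}$, property (iii) of Definition \ref{def1} gives $\mathfrak{a}\left(\xi\right)\leq C_{1}\left\Vert\xi\right\Vert_{p}^{\alpha}=C_{1}\left[\max\left(1,\left\Vert\xi\right\Vert_{p}\right)\right]^{\alpha}$, so here one may take the constant $C_{1}^{2}$. On the complementary region $\left\Vert\xi\right\Vert_{p}\leq p^{m_{0}-1}$ (equivalently $\left\Vert\xi\right\Vert_{p}<p^{m_{0}}$), the function $\mathfrak{a}$ is continuous on the compact set $B_{m_{0}-1}^{N}$ (by property (i)), hence attains a finite maximum $M_{0}$ there; since $\left[\max\left(1,\left\Vert\xi\right\Vert_{p}\right)\right]^{2\alpha}\geq1$ on this set, the bound $\left\vert\mathfrak{a}\left(\xi\right)\right\vert^{2}\leq M_{0}^{2}\leq M_{0}^{2}\left[\max\left(1,\left\Vert\xi\right\Vert_{p}\right)\right]^{2\alpha}$ holds. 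Taking $K=\max\left\{C_{1}^{2},M_{0}^{2}\right\}$ completes the pointwise estimate, and therefore $\left\Vert\boldsymbol{A}g\right\Vert_{l}\leq\sqrt{K}\left\Vert g\right\Vert_{l+1}$. This shows $\boldsymbol{A}g\in\mathcal{H}_{\mathbb{C}}\left(l\right)$ whenever $g\in\mathcal{H}_{\mathbb{C}}\left(l+1\right)$, and that the map is bounded, hence continuous, between the Hilbert spaces $\mathcal{H}_{\mathbb{C}}\left(l+1\right)$ and $\mathcal{H}_{\mathbb{C}}\left(l\right)$.

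There is essentially no serious obstacle here; the only point requiring a little care is the well-definedness, namely checking that $\boldsymbol{A}$, initially defined on $\mathcal{D}_{\mathbb{C}}\left(\mathbb{Q}_{p}^{N}\right)$, extends to all of $\mathcal{H}_{\mathbb{C}}\left(l+1\right)$. This follows because $\mathcal{D}_{\mathbb{C}}\left(\mathbb{Q}_{p}^{N}\right)$ is dense in $\mathcal{H}_{\mathbb{C}}\left(l+1\right)$ (Lemma \ref{lemma2A} and its proof) and the bound above is uniform, so $\boldsymbol{A}$ extends uniquely by continuity; alternatively, one simply defines $\boldsymbol{A}g$ on $\mathcal{H}_{\mathbb{C}}\left(l+1\right)$ via $\widehat{\boldsymbol{A}g}=\mathfrak{a}\widehat{g}$ and observes the estimate shows the right-hand side lies in $L_{l,N}^{2}$. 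One should also note in passing that the lower bound (ii), $\mathfrak{a}\geq C$, is not needed for this lemma (it will matter for invertibility statements elsewhere), so it can be ignored here.
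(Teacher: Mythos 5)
Your proof is correct and follows essentially the same route as the paper: split $\mathbb{Q}_{p}^{N}$ into the ball where $\left\Vert \xi\right\Vert _{p}<p^{m_{0}}$ (where continuity of $\mathfrak{a}$ on a compact set gives a finite bound) and its complement (where property (iii) gives $\mathfrak{a}(\xi)\leq C_{1}\left\Vert \xi\right\Vert _{p}^{\alpha}$), then conclude by density of $\mathcal{D}_{\mathbb{C}}(\mathbb{Q}_{p}^{N})$ and Lemma \ref{lemma2A}. Your packaging of the estimate as a single pointwise bound $\left\vert \mathfrak{a}(\xi)\right\vert ^{2}\leq K\left[\max\left(1,\left\Vert \xi\right\Vert _{p}\right)\right]^{2\alpha}$ is a slightly tidier presentation of the same computation, and your remark that property (ii) of Definition \ref{def1} is not needed here is accurate.
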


\begin{proof}
Let $g\in\mathcal{D}_{\mathbb{C}}(\mathbb{Q}_{p}^{N})$, then%
\begin{multline*}
\left\Vert \boldsymbol{A}g\right\Vert _{l}^{2}\leq%
{\textstyle\int\limits_{B_{m_{0}}^{N}}}
\left[  \max\left(  1,\left\Vert \xi\right\Vert _{p}\right)  \right]
^{2\alpha l}\left\vert \mathfrak{a}\left(  \xi\right)  \right\vert
^{2}\left\vert \widehat{g}\left(  \xi\right)  \right\vert ^{2}d^{N}\xi\\
+C_{1}^{2}%
{\textstyle\int\limits_{\mathbb{Q}_{p}^{N}\smallsetminus B_{m_{0}}^{N}}}
\left\Vert \xi\right\Vert _{p}^{2\alpha\left(  l+1\right)  }\left\vert
\widehat{g}\left(  \xi\right)  \right\vert ^{2}d^{N}\xi\leq\left(  \sup
_{\xi\in B_{m_{0}}^{N}}\left[  \mathfrak{a}\left(  \xi\right)  \max\left(
1,\left\Vert \xi\right\Vert _{p}\right)  \right]  ^{\alpha l}\right)
^{2}\left\Vert g\right\Vert _{0}^{2}\\
+C_{1}^{2}\left\Vert g\right\Vert _{l+1}^{2}\leq C_{2}\left\Vert g\right\Vert
_{l+1}^{2}.
\end{multline*}
Now, by Lemma \ref{lemma2A}, $\boldsymbol{A}g\in\mathcal{H}_{\mathbb{C}%
}\left(  l\right)  $. The result follows from the density of $\mathcal{D}%
_{\mathbb{C}}(\mathbb{Q}_{p}^{N})$ in $\mathcal{H}_{\mathbb{C}}\left(
l+1\right)  $.
\end{proof}

\begin{lemma}
\label{lemma5} For any $l\in\mathbb{N}$, the mapping $\boldsymbol{A}%
:\mathcal{H}_{\mathbb{C}}\left(  l+1\right)  \rightarrow\mathcal{H}%
_{\mathbb{C}}\left(  l\right)  $ has a continuous inverse defined on
$\mathcal{H}_{\mathbb{C}}\left(  l\right)  $. In particular, $\boldsymbol{A}$
is a bi-continuous bijection between Banach spaces.
\end{lemma}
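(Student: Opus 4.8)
The plan is to construct the inverse operator explicitly at the level of Fourier transforms and then verify its continuity using the same kind of symbol estimates that appeared in the proof of Lemma \ref{lemma3}. Since $\boldsymbol{A}$ acts by multiplication by the symbol $\mathfrak{a}(\xi)$ on the Fourier side, and since $\mathfrak{a}(\xi)\geq C>0$ everywhere by Definition \ref{def1}(ii), the function $\mathfrak{a}(\xi)^{-1}$ is well-defined, continuous and bounded above by $C^{-1}$. Thus I would define $\boldsymbol{A}^{-1}$ on $\mathcal{D}_{\mathbb{C}}(\mathbb{Q}_p^N)$ by $\left(\boldsymbol{A}^{-1}g\right)(x)=\mathcal{F}^{-1}_{\xi\to x}\bigl(\mathfrak{a}(\xi)^{-1}\mathcal{F}_{x\to\xi}g\bigr)$ and show it extends to a continuous map $\mathcal{H}_{\mathbb{C}}(l)\to\mathcal{H}_{\mathbb{C}}(l+1)$.

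First I would estimate $\left\Vert \boldsymbol{A}^{-1}g\right\Vert_{l+1}^2$ for $g\in\mathcal{D}_{\mathbb{C}}(\mathbb{Q}_p^N)$. Splitting the integral over $B_{m_0}^N$ and its complement, on $B_{m_0}^N$ one uses that $\mathfrak{a}(\xi)^{-1}\leq C^{-1}$ together with the fact that $\max(1,\Vert\xi\Vert_p)$ is bounded on $B_{m_0}^N$, so this piece is dominated by a constant times $\left\Vert g\right\Vert_0^2\leq\left\Vert g\right\Vert_l^2$. On the complement $\Vert\xi\Vert_p\geq p^{m_0}$ one invokes the lower bound $\mathfrak{a}(\xi)\geq C_0\Vert\xi\Vert_p^{\alpha}$ from Definition \ref{def1}(iii), which gives $\mathfrak{a}(\xi)^{-2}\Vert\xi\Vert_p^{2\alpha(l+1)}\leq C_0^{-2}\Vert\xi\Vert_p^{2\alpha l}$, so this piece is bounded by $C_0^{-2}\left\Vert g\right\Vert_l^2$. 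Combining, $\left\Vert \boldsymbol{A}^{-1}g\right\Vert_{l+1}^2\leq C'\left\Vert g\right\Vert_l^2$, and by Lemma \ref{lemma2A} the image lies in $\mathcal{H}_{\mathbb{C}}(l+1)$; density of $\mathcal{D}_{\mathbb{C}}(\mathbb{Q}_p^N)$ in $\mathcal{H}_{\mathbb{C}}(l)$ (Lemma \ref{lemma2A} and Remark \ref{NoteL2}) then gives the continuous extension to all of $\mathcal{H}_{\mathbb{C}}(l)$.

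Next I would check that this extension genuinely inverts $\boldsymbol{A}$. On $\mathcal{D}_{\mathbb{C}}(\mathbb{Q}_p^N)$ the composition $\boldsymbol{A}\boldsymbol{A}^{-1}$ and $\boldsymbol{A}^{-1}\boldsymbol{A}$ act by multiplication by $\mathfrak{a}(\xi)\mathfrak{a}(\xi)^{-1}=1$ on the Fourier side, hence equal the identity there; since $\mathcal{D}_{\mathbb{C}}(\mathbb{Q}_p^N)$ is dense in $\mathcal{H}_{\mathbb{C}}(l)$ and in $\mathcal{H}_{\mathbb{C}}(l+1)$, and both $\boldsymbol{A}$ (Lemma \ref{lemma3}) and $\boldsymbol{A}^{-1}$ are continuous, the identities $\boldsymbol{A}\boldsymbol{A}^{-1}=\mathrm{id}$ on $\mathcal{H}_{\mathbb{C}}(l)$ and $\boldsymbol{A}^{-1}\boldsymbol{A}=\mathrm{id}$ on $\mathcal{H}_{\mathbb{C}}(l+1)$ follow by density. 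This shows $\boldsymbol{A}$ is a bijection with continuous inverse, i.e. a bi-continuous bijection. The main point to be careful about—though it is not really an obstacle—is that one must verify the image of $\boldsymbol{A}^{-1}$ lands in the \emph{smaller} space $\mathcal{H}_{\mathbb{C}}(l+1)$ rather than merely in $\mathcal{H}_{\mathbb{C}}(l)$; this is exactly where the polynomial lower bound in Definition \ref{def1}(iii) is used, and it is the precise analogue of how the upper bound was used in Lemma \ref{lemma3}.
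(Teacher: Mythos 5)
Your proposal is correct and follows essentially the same route as the paper: the inverse is constructed on the Fourier side by dividing by the symbol, the $\left\Vert \cdot\right\Vert _{l+1}$ norm of $\boldsymbol{A}^{-1}g$ is estimated by splitting the integral over $B_{m_{0}}^{N}$ and its complement using the lower bounds of Definition \ref{def1}, and the continuous extension is obtained by density of $\mathcal{D}_{\mathbb{C}}(\mathbb{Q}_{p}^{N})$ in $\mathcal{H}_{\mathbb{C}}\left(  l\right)  $ together with Lemma \ref{lemma2A}. Your extra verification that the extension genuinely satisfies $\boldsymbol{A}\boldsymbol{A}^{-1}=\mathrm{id}$ and $\boldsymbol{A}^{-1}\boldsymbol{A}=\mathrm{id}$ by density is a welcome bit of added care, but it is the same argument.
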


\begin{proof}
Take $g\in\mathcal{D}_{\mathbb{C}}(\mathbb{Q}_{p}^{N})\subset\mathcal{H}%
_{\mathbb{C}}\left(  l\right)  $ and set $u:=\mathcal{F}^{-1}\left(
\frac{\widehat{g}\left(  \xi\right)  }{a\left(  \xi\right)  }\right)  $. Then
$u\in L^{2}\cap C^{\text{unif}}$, and it is the unique solution of
$\boldsymbol{A}u=g$. Now
\begin{align*}
\left\Vert u\right\Vert _{l+1}^{2}  &  =%
{\textstyle\int\limits_{B_{m_{0}}^{N}}}
\left[  \max\left(  1,\left\Vert \xi\right\Vert _{p}\right)  \right]
^{2\alpha\left(  l+1\right)  }\frac{\left\vert \widehat{g}\left(  \xi\right)
\right\vert ^{2}}{\left\vert \mathfrak{a}\left(  \xi\right)  \right\vert ^{2}%
}d^{N}\xi\\
+%
{\textstyle\int\limits_{\mathbb{Q}_{p}^{N}\smallsetminus B_{m_{0}}^{N}}}
\left\Vert \xi\right\Vert _{p}^{2\alpha\left(  l+1\right)  }\frac{\left\vert
\widehat{g}\left(  \xi\right)  \right\vert ^{2}}{\left\vert \mathfrak{a}%
\left(  \xi\right)  \right\vert ^{2}}d^{N}\xi &  \leq C^{\prime}\left\Vert
g\right\Vert _{0}^{2}+\frac{1}{C_{0}^{2}}%
{\textstyle\int\limits_{\mathbb{Q}_{p}^{N}\smallsetminus B_{m_{0}}^{N}}}
\left\Vert \xi\right\Vert _{p}^{2\alpha l}\left\vert \widehat{g}\left(
\xi\right)  \right\vert ^{2}d^{N}\xi\\
&  \leq C^{\prime}\left\Vert g\right\Vert _{0}^{2}+\frac{1}{C_{0}^{2}%
}\left\Vert g\right\Vert _{l}^{2}\leq C^{\prime\prime}\left\Vert g\right\Vert
_{l}^{2}.
\end{align*}
By Lemma \ref{lemma2A}, $u\in\mathcal{H}_{\mathbb{C}}\left(  l+1\right)  $ and
since $\mathcal{D}_{\mathbb{C}}(\mathbb{Q}_{p}^{N})$ is dense in
$\mathcal{H}_{\mathbb{C}}\left(  l\right)  $, the mapping
\[%
\begin{array}
[c]{cccc}%
\boldsymbol{A}^{-1} & \mathcal{H}_{\mathbb{C}}\left(  l\right)  & \rightarrow
& \mathcal{H}_{\mathbb{C}}\left(  l+1\right) \\
& g & \rightarrow & u,
\end{array}
\]
is well-defined and continuous.
\end{proof}

\begin{theorem}
\label{Thm2} (i) The mapping $\boldsymbol{A}:\mathcal{H}_{\mathbb{C}}\left(
\infty\right)  \rightarrow\mathcal{H}_{\mathbb{C}}\left(  \infty\right)  $ is
a bi-continuous isomorphism of locally convex spaces. (ii) $\mathcal{H}%
_{\mathbb{C}}\left(  \infty\right)  \subset L^{\infty}\cap C^{\text{unif}}\cap
L^{1}\cap L^{2}$.
\end{theorem}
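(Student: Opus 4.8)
The plan is to deduce (i) from Lemmas \ref{lemma3} and \ref{lemma5} by passing to the projective limit, and (ii) from Lemma \ref{lemma2A} together with the Cauchy--Schwarz estimate of Remark \ref{nota2}(ii). For (i) I would argue at the level of the defining Banach spaces: since $\mathcal{H}_{\mathbb{C}}\left(  \infty\right)  =\bigcap_{l\in\mathbb{N}}\mathcal{H}_{\mathbb{C}}\left(  l\right)  $ and, by Lemma \ref{lemma3}, $\boldsymbol{A}\colon\mathcal{H}_{\mathbb{C}}\left(  l+1\right)  \rightarrow\mathcal{H}_{\mathbb{C}}\left(  l\right)  $ is continuous with $\left\Vert \boldsymbol{A}g\right\Vert _{l}\leq C_{2}\left\Vert g\right\Vert _{l+1}$, any $g\in\mathcal{H}_{\mathbb{C}}\left(  \infty\right)  $ lies in every $\mathcal{H}_{\mathbb{C}}\left(  l+1\right)  $, hence $\boldsymbol{A}g$ lies in every $\mathcal{H}_{\mathbb{C}}\left(  l\right)  $, i.e.\ $\boldsymbol{A}g\in\mathcal{H}_{\mathbb{C}}\left(  \infty\right)  $, and the seminorm estimates make $\boldsymbol{A}\colon\mathcal{H}_{\mathbb{C}}\left(  \infty\right)  \rightarrow\mathcal{H}_{\mathbb{C}}\left(  \infty\right)  $ continuous for $\tau_{P}$. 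By Lemma \ref{lemma5} the inverse of $\boldsymbol{A}$ on each level is given by the single formula $g\mapsto\mathcal{F}^{-1}\left(  \widehat{g}/\mathfrak{a}\right)  $ (so the inverses at different levels restrict to one another) and satisfies $\left\Vert \boldsymbol{A}^{-1}g\right\Vert _{l+1}\leq C^{\prime\prime}\left\Vert g\right\Vert _{l}$; the same reasoning then shows that $\boldsymbol{A}^{-1}$ maps $\mathcal{H}_{\mathbb{C}}\left(  \infty\right)  $ continuously into itself, and being a two-sided inverse of $\boldsymbol{A}$ on every level it is a two-sided inverse on $\mathcal{H}_{\mathbb{C}}\left(  \infty\right)  $, which gives (i).

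For (ii), the inclusion $\mathcal{H}_{\mathbb{C}}\left(  \infty\right)  \subset L^{2}$ is immediate because $\mathcal{H}_{\mathbb{C}}\left(  0\right)  =L^{2}$ and $\mathcal{H}_{\mathbb{C}}\left(  \infty\right)  \subset\mathcal{H}_{\mathbb{C}}\left(  0\right)  $. Next fix $l\in\mathbb{N}$ with $2\alpha l>N$; then for $g\in\mathcal{H}_{\mathbb{C}}\left(  \infty\right)  \subset\mathcal{H}_{\mathbb{C}}\left(  l\right)  $ the Cauchy--Schwarz inequality yields
\[
\int_{\mathbb{Q}_{p}^{N}}\left\vert \widehat{g}\left(  \xi\right)  \right\vert d^{N}\xi\leq\left\Vert g\right\Vert _{l}\left(  \int_{\mathbb{Q}_{p}^{N}}\left[  \max\left(  1,\left\Vert \xi\right\Vert _{p}\right)  \right]  ^{-2\alpha l}d^{N}\xi\right)  ^{1/2},
\]
with the last integral finite since $2\alpha l>N$, so $\widehat{g}\in L^{1}$ (this is Remark \ref{nota2}(ii)). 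Hence $g=\mathcal{F}^{-1}\widehat{g}$ is the Fourier transform of an $L^{1}$ function, so $g\in C^{\text{unif}}$ and $\left\Vert g\right\Vert _{\infty}\leq\left\Vert \widehat{g}\right\Vert _{1}<\infty$; thus $\mathcal{H}_{\mathbb{C}}\left(  \infty\right)  \subset L^{\infty}\cap C^{\text{unif}}$.

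The remaining inclusion $\mathcal{H}_{\mathbb{C}}\left(  \infty\right)  \subset L^{1}$ is the delicate one, and I expect it to be the main obstacle. The seminorm $\left\Vert \cdot\right\Vert _{l}$ weights $\widehat{g}$ by $\left[  \max\left(  1,\left\Vert \xi\right\Vert _{p}\right)  \right]  ^{2\alpha l}$, which controls the high-frequency part of $\widehat{g}$ (equivalently the local constancy of $g$) but says nothing about $\widehat{g}$ near $\xi=0$, hence nothing a priori about the decay of $g$ at infinity. The natural attempt is to write $\int_{\mathbb{Q}_{p}^{N}}\left\vert g\right\vert d^{N}x=\int_{B_{0}^{N}}\left\vert g\right\vert d^{N}x+\sum_{j\geq1}\int_{S_{j}^{N}}\left\vert g\right\vert d^{N}x$, bound the first term by $\left\Vert g\right\Vert _{L^{2}}$ via Cauchy--Schwarz, bound each remaining term by $\bigl(  vol\,S_{j}^{N}\bigr)  ^{1/2}\bigl(  \int_{S_{j}^{N}}\left\vert g\right\vert ^{2}d^{N}x\bigr)  ^{1/2}$, and then prove that $\int_{S_{j}^{N}}\left\vert g\right\vert ^{2}d^{N}x$ decays faster than every power $p^{-cj}$, $c>0$, so that the series converges. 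Establishing such decay of $g$ on large spheres purely from membership in all the $\mathcal{H}_{\mathbb{C}}\left(  l\right)  $ is where the real work lies; one route is to combine the density of $\mathcal{D}_{\mathbb{C}}(\mathbb{Q}_{p}^{N})$ in each $\mathcal{H}_{\mathbb{C}}\left(  l\right)  $ with uniform tail bounds along an approximating sequence, and should that not suffice one would have to reinforce the weight in \eqref{product} by a factor growing as $\left\Vert \xi\right\Vert _{p}\rightarrow0$, so as to constrain $\widehat{g}$ near the origin as well.
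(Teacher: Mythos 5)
Part (i) of your proposal, and the $L^{2}$, $C^{\text{unif}}$ and $L^{\infty}$ portions of part (ii), are correct and follow essentially the paper's route: your part (i) is the content of Lemmas \ref{lemma3} and \ref{lemma5} repackaged at the level of seminorms rather than of convergent sequences (your observation that the level-wise inverses are all given by the single formula $g\mapsto\mathcal{F}^{-1}(\widehat{g}/\mathfrak{a})$, hence compatible, is exactly what makes this legitimate), and your Cauchy--Schwarz argument for $\widehat{g}\in L^{1}$ is the paper's Remark \ref{nota2}(ii) followed by Riemann--Lebesgue.

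The genuine gap is the inclusion $\mathcal{H}_{\mathbb{C}}\left(\infty\right)\subset L^{1}$, which you flag but do not prove. Your diagnosis of where the difficulty lies is accurate, and the sphere decomposition you sketch cannot be pushed through: super-polynomial decay of $\int_{S_{j}^{N}}\left\vert g\right\vert^{2}d^{N}x$ does not follow from membership in all the $\mathcal{H}_{\mathbb{C}}\left(l\right)$. Indeed, if $\widehat{g}$ is supported in the unit ball $B_{0}^{N}$, then every norm $\left\Vert g\right\Vert_{l}$ coincides with $\left\Vert g\right\Vert_{0}=\left\Vert g\right\Vert_{L^{2}}$, so for such $g$ the entire family of norms carries no information beyond square-integrability, and no decay of $g$ at infinity can be extracted from them alone. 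Your closing remark, that one might have to reinforce the weight in (\ref{product}) by a factor growing as $\left\Vert\xi\right\Vert_{p}\rightarrow0$, is therefore pointing at the substantive issue rather than at a technicality; but as written your proposal supplies no mechanism for the $L^{1}$ inclusion, so the proof is incomplete.

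For comparison, the paper's route is entirely different from your sketch: it applies part (i) to the operator $\widetilde{D}^{\alpha}f=\mathcal{F}^{-1}\left(\left[\max\left(1,\left\Vert\xi\right\Vert_{p}\right)\right]^{\alpha}\mathcal{F}f\right)$ to get $\left[\max\left(1,\left\Vert\xi\right\Vert_{p}\right)\right]^{\alpha}\widehat{f}\in L^{1}$, deduces from this the pointwise decay estimate (\ref{ineqq_f}) for $\widehat{f}$, and then evaluates the truncated integrals $\int_{B_{k}^{N}}f\left(x\right)\chi_{p}\left(\mathfrak{B}\left(x,\xi\right)\right)d\mu\left(x\right)$ as normalized averages of $\widehat{f}$ over shrinking balls about $\xi$, concluding that they converge as $k\rightarrow\infty$ and, at $\xi=0$, that $f\in L^{1}$. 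If you adopt this argument you should scrutinize it against the example above ($\widehat{g}$ an arbitrary $L^{2}$ function supported in $B_{0}^{N}$ need not satisfy a pointwise bound such as (\ref{ineqq_f}), and convergence of $\int_{B_{k}^{N}}f\,d^{N}x$ is weaker than absolute integrability), since that example sits inside $\mathcal{H}_{\mathbb{C}}\left(\infty\right)$ and is precisely the regime your own analysis identified as problematic.
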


\begin{proof}
(i) By by Lemma \ref{lemma3}, $\boldsymbol{A}$ is a well-defined mapping. In
addition, by Lemma \ref{lemma5}, $\boldsymbol{A}$ is a bijection from
$\mathcal{H}_{\mathbb{C}}\left(  \infty\right)  $ onto itself. To verify the
continuity of $\boldsymbol{A}$, we take a sequence $\left\{  g_{n}\right\}
_{n\in\mathbb{N}}$ in $\mathcal{H}_{\mathbb{C}}\left(  \infty\right)  $ such
that $g_{n}$ $\ \underrightarrow{d}$ $\ g$, with $g\in\mathcal{H}_{\mathbb{C}%
}\left(  \infty\right)  $, i.e. $g_{n}$ $\ \underrightarrow{\left\Vert
\cdot\right\Vert _{m}}$ $\ g$, for all $m\in\mathbb{N}$. Take $m=l+1$,
$g_{n}\in\mathcal{H}_{\mathbb{C}}\left(  l+1\right)  $, then by Lemma
\ref{lemma5}, $\boldsymbol{A}g_{n}$ $\ \underrightarrow{\left\Vert
\cdot\right\Vert _{l+1}}$ $\ \boldsymbol{A}g\in\mathcal{H}_{\mathbb{C}}\left(
l\right)  $, for any $l\in\mathbb{N}$. The case $\boldsymbol{A}g_{n}$
$\ \underrightarrow{\left\Vert \cdot\right\Vert _{0}}$ $\ \boldsymbol{A}g$ is
verified directly. Therefore $\boldsymbol{A}g_{n}$ $\ \underrightarrow{d}$
$\ \boldsymbol{A}g$. We now show that $\boldsymbol{A}^{-1}$ is continuous.
Take a sequence $\left\{  g_{n}\right\}  _{n\in\mathbb{N}}$ in $\mathcal{H}%
_{\mathbb{C}}\left(  \infty\right)  $ as before. By Lemma \ref{lemma5}, there
exists a unique sequence $\left\{  u_{n}\right\}  _{n\in\mathbb{N}}$ such that
$\boldsymbol{A}u_{n}=g_{n}$, with $g_{n}\in\mathcal{H}_{\mathbb{C}}\left(
l\right)  $ and $u_{n}\in\mathcal{H}_{\mathbb{C}}\left(  l+1\right)  $, for
any $l\in\mathbb{N}$. By verifying that $\left\{  u_{n}\right\}
_{n\in\mathbb{N}}$ is Cauchy in $\left\Vert \cdot\right\Vert _{m}$ for any
$m\in\mathbb{N}$, there exists $u\in\mathcal{H}_{\mathbb{C}}\left(
\infty\right)  $ such that $u_{n}$ $\ \underrightarrow{d}$ $\ u$, i.e.
$\boldsymbol{A}^{-1}g_{n}$ $\ \underrightarrow{d}$ $\ u$. By the continuity of
$\boldsymbol{A}$, we have $\boldsymbol{A}u=g$, \ now by using that
$\boldsymbol{A}$ is a bijection on $\mathcal{H}_{\mathbb{C}}\left(
\infty\right)  $, we conclude that $\boldsymbol{A}^{-1}g_{n}$
$\ \underrightarrow{d}$ $\ \boldsymbol{A}^{-1}g$.

(ii) In Remark \ref{nota2} (ii), we already noted that for any function $f$ in
$\mathcal{H}_{\mathbb{C}}\left(  \infty\right)  $, $\widehat{f}$ is
integrable, and thus $f\in C^{\text{unif}}$, and by the Riemann-Lebesgue
Theorem $f\in L^{\infty}$. We now define for $\alpha>0$, the operator
\[
\left(  \widetilde{D}^{\alpha}f\right)  \left(  x\right)  =\mathcal{F}%
_{\xi\rightarrow x}^{-1}\left(  \left[  \max\left(  1,\left\Vert
\xi\right\Vert _{p}\right)  \right]  ^{\alpha}\mathcal{F}_{x\rightarrow\xi
}\left(  f\right)  \right)  .
\]
By part (i), $\widetilde{D}^{\alpha}$ gives rise \ to a bicontinuous
isomorphism of $\mathcal{H}_{\mathbb{C}}\left(  \infty\right)  $. Then, for
any $f\in\mathcal{H}_{\mathbb{C}}\left(  \infty\right)  $, $\widetilde
{D}^{\alpha}f\in\mathcal{H}_{\mathbb{C}}\left(  \infty\right)  $, and thus
$\widehat{\widetilde{D}^{\alpha}f}=\left[  \max\left(  1,\left\Vert
\xi\right\Vert _{p}\right)  \right]  ^{\alpha}\widehat{f}\in L^{1}$, and by
Riemann-Lebesgue Theorem there are positive constants $C_{0}$, $C_{1}$ such
that
\begin{equation}
\left\vert \widehat{f}\left(  \xi\right)  \right\vert \leq\frac{C_{0}}{\left[
\max\left(  1,C_{1}\left\Vert \xi\right\Vert _{p}\right)  \right]  ^{\alpha}}.
\label{ineqq_f}%
\end{equation}

For $k\in\mathbb{N}$, we set%
\[
A_{k}\left(  f\chi_{p}\left(  \mathcal{B}\left(  \cdot,\xi\right)  \right)
\right)  =%
{\textstyle\int\nolimits_{\mathbb{Q}_{p}^{N}}}
\Omega\left(  p^{-k}\left\Vert x\right\Vert _{p}\right)  f\left(  x\right)
\chi_{p}\left(  \mathcal{B}\left(  x,\xi\right)  \right)  d\mu\left(
x\right)  ,
\]
where $\Omega\left(  p^{-k}\left\Vert x\right\Vert _{p}\right)  $ is the
characteristic function of the ball $B_{k}^{N}$. By using that $\widehat
{\left[  \widehat{f}\left(  -y\right)  \right]  }=f\left(  y\right)  $,
$d\mu\left(  x\right)  =C\left(  \mathfrak{q}\right)  d^{N}x$ and that
\[
\mathcal{F}_{x\rightarrow y}\left(  \Omega\left(  p^{-k}\left\Vert
x\right\Vert _{p}\right)  \chi_{p}\left(  \mathcal{B}\left(  x,\xi\right)
\right)  \right)  =C\left(  \mathfrak{q}\right)  p^{kN}\Omega\left(
p^{k+\beta}\left\Vert y+\xi\right\Vert _{p}\right)  ,
\]
where $\beta$\ is constant that depends only on $\mathcal{B}$, see
(\ref{Formula_B}), we have%
\[
A_{k}\left(  f\chi_{p}\left(  \mathcal{B}\left(  \cdot,\xi\right)  \right)
\right)  =C\left(  \mathfrak{q}\right)  p^{kN}%
{\textstyle\int\limits_{\left\Vert \xi-y\right\Vert _{p}\leq p^{-k-\beta}}}
\widehat{f}\left(  y\right)  d^{N}y.
\]
Now from (\ref{ineqq_f}),
\[
p^{kN}%
{\textstyle\int\limits_{\left\Vert \xi-y\right\Vert _{p}\leq p^{-k-\beta}}}
\left\vert \widehat{f}\left(  y\right)  \right\vert d^{N}y\leq\frac
{C_{0}C\left(  \mathfrak{q}\right)  p^{-N\beta}}{\left[  \max\left(
1,C_{1}\left\Vert \xi\right\Vert _{p}\right)  \right]  ^{\alpha}}%
\]
for $k$ big enough. Therefore $\lim_{k\rightarrow\infty}A_{k}\left(  f\chi
_{p}\left(  \mathcal{B}\left(  \cdot,\xi\right)  \right)  \right)  $ exists
for every $\xi\in\mathbb{Q}_{p}^{N}$, in particular $\lim_{k\rightarrow\infty}%
{\textstyle\int\nolimits_{\left\Vert x\right\Vert _{p}\leq p^{k}}}
f\left(  x\right)  d^{N}x$ exists, which means that $f\in L^{1}$.
\end{proof}

\section{\label{Sect4}Pseudodifferential Operators and Green Functions}

We take $\mathfrak{l}\left(  \xi\right)  \in\mathbb{Z}_{p}\left[  \xi
_{1},\cdots,\xi_{N}\right]  $ to be an\textit{ elliptic polynomial} of degree
$d$, this means that $\mathfrak{l}$ is homogeneous of degree $d$ and satisfies
$\mathfrak{l}\left(  \xi\right)  =0\Leftrightarrow\xi=0$. There are
infi\-nitely many elliptic polynomials. We consider the following
\textit{elliptic pseudodifferential operator}:%
\[
\left(  \mathbf{L}_{\alpha}h\right)  \left(  x\right)  =\mathcal{F}%
_{\xi\rightarrow x}^{-1}\left(  \left\vert \mathfrak{l}\left(  \xi\right)
\right\vert _{p}^{\alpha}\mathcal{F}_{x\rightarrow\xi}h\right)  ,
\]
where $\alpha>0$ and $h\in\mathcal{D}_{\mathbb{C}}(\mathbb{Q}_{p}^{N})$.

We shall call a fundamental solution $G\left(  x;m,\alpha\right)  $ of the
equation%
\begin{equation}
\left(  \boldsymbol{L}_{\alpha}+m^{2}\right)  u=h,\;\text{with }%
h\in\mathcal{D}_{\mathbb{C}}(\mathbb{Q}_{p}^{N}),\text{ }m>0,
\label{equation1}%
\end{equation}
a \textit{Green function} of $\boldsymbol{L}_{\alpha}$. As a distribution \ on
$\mathcal{D}_{\mathbb{C}}(\mathbb{Q}_{p}^{N})$, the Green function is given
by
\begin{equation}
G\left(  x;m,\alpha\right)  =\mathcal{F}_{\xi\rightarrow x}^{-1}\left(
\frac{1}{\left\vert \mathfrak{l}\left(  \xi\right)  \right\vert _{p}^{\alpha
}+m^{2}}\right)  . \label{GreenFunc}%
\end{equation}
Notice that since
\begin{equation}
C_{0}^{\alpha}\left\Vert \xi\right\Vert _{p}^{\alpha d}\leq\left\vert
\mathfrak{l}\left(  \xi\right)  \right\vert ^{\alpha}\leq C_{1}^{\alpha
}\left\Vert \xi\right\Vert _{p}^{\alpha d}, \label{basic_estimate}%
\end{equation}
for some positive constants $C_{0}$, $C_{1}$, cf. \cite[Lemma 1]{Zu1},
\[
\frac{1}{\left\vert \mathfrak{l}\left(  \xi\right)  \right\vert _{p}^{\alpha
}+m^{2}}\in L^{1}\left(  \mathbb{Q}_{p}^{N},d^{N}\xi\right)  \text{\ \ for
\ }\alpha d>N,
\]
and in this case, $G\left(  x;m,\alpha\right)  $ is an $L^{\infty}$-function.

\subsection{Green functions on $\mathcal{D}_{\mathbb{C}}^{^{\ast}}%
(\mathbb{Q}_{p}^{N})$}

\begin{proposition}
\label{Prop1} The Green function $G\left(  x;m,\alpha\right)  $ verifies the
following properties:

\noindent(i) the function $G\left(  x;m,\alpha\right)  $ is continuous on
$\mathbb{Q}_{p}^{N}\smallsetminus\left\{  0\right\}  $;

\noindent(ii) if $\alpha d>N$, then the function $G\left(  x;m,\alpha\right)
$ is continuous;

\noindent(iii) for $0<\alpha d\leq N$, the function $G\left(  x;m,\alpha
\right)  $ is locally constant on $\mathbb{Q}_{p}^{N}\smallsetminus\left\{
0\right\}  $, and
\[
\left\vert G\left(  x;m,\alpha\right)  \right\vert \leq\left\{
\begin{array}
[c]{lll}%
C\left\Vert x\right\Vert _{p}^{2\alpha d-N} & \text{for} & 0<\alpha d<N\\
&  & \\
C_{0}-C_{1}\ln\left\Vert x\right\Vert _{p} & \text{for} & N=\alpha d,
\end{array}
\right.
\]
for $\left\Vert x\right\Vert _{p}\leq1$, where $C$, $C_{0}$, $C_{1}$ are
positive constants,

\noindent(iv) $\left\vert G\left(  x;m,\alpha\right)  \right\vert \leq
C_{2}\left\Vert x\right\Vert _{p}^{-\alpha d-N}$ as $\left\Vert x\right\Vert
_{p}\rightarrow\infty$, where $C_{2}$ is positive constant;

\noindent(v) $G\left(  x;m,\alpha\right)  \geq0$ on $\mathbb{Q}_{p}%
^{N}\smallsetminus\left\{  0\right\}  $.
\end{proposition}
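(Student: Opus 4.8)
The plan is to analyze the Green function $G(x;m,\alpha) = \mathcal{F}^{-1}_{\xi\to x}\left(\frac{1}{|\mathfrak{l}(\xi)|_p^\alpha + m^2}\right)$ by splitting the domain of integration into the ball $\mathbb{Z}_p^N$ and the shells $S_j^N$ with $j \geq 1$, using the homogeneity of $\mathfrak{l}$ together with the two-sided bound (\ref{basic_estimate}) to control each piece.

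First I would establish the local constancy claims (i) and (iii). For $x \neq 0$, write $G(x;m,\alpha) = \int_{\mathbb{Q}_p^N} \frac{\chi_p(\mathfrak{B}(x,\xi))}{|\mathfrak{l}(\xi)|_p^\alpha + m^2} d^N\xi$; by (\ref{basic_estimate}) the integrand decays like $\|\xi\|_p^{-\alpha d}$ at infinity, which is not integrable when $\alpha d \leq N$, so the integral must be understood as a limit over balls $B_k^N$, and one checks that for fixed $x \neq 0$ the integral over $\mathbb{Q}_p^N \setminus B_k^N$ vanishes for $k$ large since $\chi_p(\mathfrak{B}(x,\cdot))$ oscillates: on each sphere $S_j^N$ with $p^j \gg \|x\|_p^{-1}$ one can use a standard character-sum argument (the integral of a nontrivial additive character over a ball/sphere is zero). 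This same oscillation argument, applied by decomposing $\mathbb{Q}_p^N$ according to $\|\xi\|_p$, shows $G(x;m,\alpha)$ depends only on $\|x\|_p$ (radiality, using that one may absorb $\|x\|_p$ into a change of variable in $\xi$ once one accounts for the homogeneity $\mathfrak{l}(t\xi) = t^d \mathfrak{l}(\xi)$ for $t \in \mathbb{Q}_p^\times$), hence is locally constant on $\mathbb{Q}_p^N \setminus \{0\}$; when $\alpha d > N$ the integrand is in $L^1$, so $G$ is the Fourier transform of an $L^1$ function and is continuous (and bounded), giving (ii).

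Next, for the quantitative bounds in (iii) and (iv) I would perform the change of variables exploiting homogeneity: substituting $\xi = p^{-\mathrm{ord}(x)}\eta$ (or a suitable dilation) turns $G(x;m,\alpha)$ into $\|x\|_p^{\alpha d - N}$ times an integral of the form $\int \frac{\Omega(\text{-type factor})}{|\mathfrak{l}(\eta)|_p^\alpha + m^2\|x\|_p^{\alpha d}} d^N\eta$, and then one estimates this remaining integral by again splitting into $\|\eta\|_p \leq 1$ and $\|\eta\|_p > 1$ and using (\ref{basic_estimate}). For $\|x\|_p \leq 1$ small, the $m^2\|x\|_p^{\alpha d}$ term is negligible on the relevant range and one is left to estimate $\int_{\|\eta\|_p \leq \|x\|_p^{-1}} |\mathfrak{l}(\eta)|_p^{-\alpha} d^N\eta$, which by (\ref{basic_estimate}) is comparable to $\int_{\|\eta\|_p \leq \|x\|_p^{-1}} \|\eta\|_p^{-\alpha d} d^N\eta = \sum_{j \leq -\mathrm{ord}(x)} p^{j(N - \alpha d)}$; summing the geometric series produces the bound $C\|x\|_p^{2\alpha d - N}$ when $\alpha d < N$ and the logarithmic bound $C_0 - C_1 \ln\|x\|_p$ when $\alpha d = N$. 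For (iv), the regime $\|x\|_p \to \infty$, one uses instead that $\frac{1}{|\mathfrak{l}(\xi)|_p^\alpha + m^2}$ is a locally constant function of $\xi$ that equals $m^{-2}$ near $\xi = 0$ and decays; the leading contribution to its Fourier transform at large $\|x\|_p$ comes from the corner at $\xi = 0$, and a Taylor-type (finite-difference) expansion of the symbol near $0$ combined with the support/oscillation cancellation yields the $\|x\|_p^{-\alpha d - N}$ decay.

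Finally, for the nonnegativity (v) I would use the subordination / Bernstein-type representation $\frac{1}{a + m^2} = \int_0^\infty e^{-t(a+m^2)} dt$ with $a = |\mathfrak{l}(\xi)|_p^\alpha$, giving formally
\[
G(x;m,\alpha) = \int_0^\infty e^{-tm^2} \left[\mathcal{F}^{-1}_{\xi\to x}\left(e^{-t|\mathfrak{l}(\xi)|_p^\alpha}\right)\right](x)\, dt,
\]
and then invoke that $e^{-t|\mathfrak{l}(\xi)|_p^\alpha}$ is, for each $t > 0$, the characteristic function of a $p$-adic (heat-kernel / stable-type) probability measure — this is exactly where the \emph{ellipticity} of $\mathfrak{l}$ enters, since positivity of the transition density for the operator $\boldsymbol{L}_\alpha$ is known precisely when $\mathfrak{l}$ is elliptic, cf. the cited work \cite{Zu1}. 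Granting that the inner inverse Fourier transform is a nonnegative function for each $t$, the outer integral of nonnegative quantities is nonnegative, and restricting to $x \neq 0$ gives (v). I expect the main obstacle to be making the formal manipulations in steps two and four rigorous: justifying the conditionally convergent oscillatory integrals (interchange of limit over $B_k^N$ with the estimates) and, more seriously, establishing the positivity of the $p$-adic heat kernel $\mathcal{F}^{-1}(e^{-t|\mathfrak{l}(\xi)|_p^\alpha})$ for elliptic $\mathfrak{l}$ — this is the crux and is presumably imported from \cite{Zu1} rather than reproved here.
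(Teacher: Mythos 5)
Your strategy for (i)--(iv) is essentially the paper's: decompose $G$ over the dyadic shells $\left\Vert \xi\right\Vert _{p}=p^{l}$, kill the shells with $p^{l}\gtrsim\left\Vert x\right\Vert _{p}^{-1}$ by character-sum cancellation (taking into account the radius of local constancy of $\left\vert \mathfrak{l}\right\vert _{p}$ on the unit sphere), sum a geometric series for the near-origin bounds, and expand $\left(  m^{2}+t\right)  ^{-1}=m^{-2}-m^{-4}t+O(t^{2})$ at $\xi=0$ for the decay at infinity. There is, however, one incorrect claim: $G$ does \emph{not} depend only on $\left\Vert x\right\Vert _{p}$. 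Neither $\left\vert \mathfrak{l}\left(  \xi\right)  \right\vert _{p}$ nor $\mathfrak{B}\left(  x,\xi\right)  $ is radial, and indeed the paper's asymptotic coefficient $A(x)$ in part (iv) genuinely depends on the direction of $x$, not just its norm. Your route to local constancy therefore breaks at that step; it should be replaced by the direct observation that each shell integral is the inverse Fourier transform of a compactly supported distribution, hence locally constant in $x$, and that the series of such terms converges uniformly on each sphere $\left\Vert x\right\Vert _{p}=p^{k}$. Two smaller points: in (iv) the entire content is that the $m^{-2}\left\Vert x\right\Vert _{p}^{-N}$ contributions from the inner ball and the boundary shells cancel exactly, leaving $\left\Vert x\right\Vert _{p}^{-\alpha d-N}$ as the leading term --- your sketch gestures at this but does not exhibit the cancellation; and in (iii) the geometric series you set up actually yields $\left\Vert x\right\Vert _{p}^{\alpha d-N}$ (as does the paper's own computation), not the exponent $2\alpha d-N$ appearing in the statement, so you should not claim your computation "produces" that exponent.

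For (v) you take a genuinely different route. The paper fibers each shell integral over the level sets $V_{t}=\left\{  \xi:\mathfrak{l}\left(  \xi\right)  =t\right\}  $ via a Gel'fand--Leray form and reduces positivity to $\int_{V_{t}}\chi_{p}\left(  -\mathfrak{B}\left(  x,\xi\right)  \right)  \frac{d\xi}{d\mathfrak{l}}\geq0$, proved with the non-Archimedean implicit function theorem as in \cite{Zu1}. Your subordination argument $\left(  \left\vert \mathfrak{l}\right\vert _{p}^{\alpha}+m^{2}\right)  ^{-1}=\int_{0}^{\infty}e^{-tm^{2}}e^{-t\left\vert \mathfrak{l}\right\vert _{p}^{\alpha}}dt$ reduces everything to nonnegativity of the heat kernel $Z(x,t)=\mathcal{F}^{-1}(e^{-t\left\vert \mathfrak{l}\left(  \xi\right)  \right\vert _{p}^{\alpha}})$, which is exactly what \cite{Zu1} proves for elliptic $\mathfrak{l}$ --- so both arguments rest on the same external input, and yours is the more streamlined packaging. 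The Fubini interchange is unproblematic after pairing with a nonnegative test function supported away from the origin (where $G$ is already known to be continuous by (i)); the only point to check is that replacing the standard character pairing by $\chi_{p}\left(  \mathfrak{B}\left(  \cdot,\cdot\right)  \right)  $ amounts to an invertible linear change of variable in $x$ together with the positive constant $C(\mathfrak{q})$, which preserves nonnegativity.
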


\begin{proof}
(i) We first notice that
\begin{equation}
G\left(  x;m,\alpha\right)  =%
{\displaystyle\sum\limits_{l=-\infty}^{+\infty}}
{\LARGE g}^{\left(  l\right)  }\left(  x;m,\alpha\right)  , \label{GreenFunc2}%
\end{equation}
where%
\[
{\LARGE g}^{\left(  l\right)  }\left(  x;m,\alpha\right)  =%
{\displaystyle\int\limits_{p^{-l}S_{0}^{N}}}
\frac{\chi_{p}\left(  -\mathcal{B}\left(  x,\xi\right)  \right)  }{\left\vert
\mathfrak{l}\left(  \xi\right)  \right\vert _{p}^{\alpha}+m^{2}}d^{N}\xi,
\]
here $S_{0}^{N}=\left\{  \xi\in\mathbb{Z}_{p}^{N}:\left\Vert \xi\right\Vert
_{p}=1\right\}  $. For $x\in\mathbb{Q}_{p}^{N}\smallsetminus\left\{
0\right\}  $, we take $x=p^{ord\left(  x\right)  }x_{0}$, $x_{0}=\left(
x_{0,1},\cdots,x_{0,N}\right)  $ with $\left\Vert x_{0}\right\Vert _{p}=1$. By
using the standard basis of $\mathbb{Q}_{p}^{N}$ we have%
\[
\mathcal{B}\left(  x,\xi\right)  =p^{ord\left(  x\right)  }\mathcal{B}\left(
x_{0},\xi\right)  =p^{ord\left(  x\right)  }%
{\displaystyle\sum\limits_{i=1}^{N}}
\xi_{i}\left[
{\displaystyle\sum\limits_{j=1}^{N}}
B_{ij}x_{0,j}\right]
\]
where the $B_{ij}\in\mathbb{Q}_{p}$, and $\det\left[  B_{ij}\right]  \neq0$.
By considering $\left[  B_{ij}\right]  $ as a vector in $\mathbb{Q}_{p}%
^{N^{2}}$ we have $\left[  B_{ij}\right]  =p^{\beta}\left[  \widetilde{B}%
_{ij}\right]  $, where the vector $\left[  \widetilde{B}_{ij}\right]
\in\mathbb{Q}_{p}^{N^{2}}$ has norm $1$. Then
\begin{equation}
\mathcal{B}\left(  x,\xi\right)  =p^{ord\left(  x\right)  +\beta}%
{\displaystyle\sum\limits_{i=1}^{N}}
\xi_{i}\left[
{\displaystyle\sum\limits_{j=1}^{N}}
\widetilde{B}_{ij}x_{0,j}\right]  =p^{ord\left(  x\right)  +\beta}%
{\displaystyle\sum\limits_{i=1}^{N}}
\xi_{i}\widetilde{A}_{i}\text{, } \label{Formula_B}%
\end{equation}

with $\left\Vert \left(  \widetilde{A}_{1},\cdots,\widetilde{A}_{N}\right)
\right\Vert _{p}=1$. Notice that $\beta$\ is constant that depends only on
$\mathcal{B}$.

We now assert that series (\ref{GreenFunc2}) converges in $\mathcal{D}%
_{\mathbb{C}}^{^{\ast}}\left(  \mathbb{Q}_{p}^{N}\right)  $. Indeed,
\[
\frac{1}{\left\vert \mathfrak{l}\left(  \xi\right)  \right\vert _{p}^{\alpha
}+m^{2}}=\lim_{l\rightarrow+\infty}%
{\displaystyle\sum\limits_{j=-l}^{l}}
\frac{{\LARGE 1}_{S_{j}^{N}}\left(  \xi\right)  }{\left\vert \mathfrak{l}%
\left(  \xi\right)  \right\vert _{p}^{\alpha}+m^{2}}\text{\ in \ }%
\mathcal{D}_{\mathbb{C}}^{^{\ast}}\left(  \mathbb{Q}_{p}^{N}\right)  ,
\]
now, since the Fourier transform is continuous on $\mathcal{D}_{\mathbb{C}%
}^{^{\ast}}\left(  \mathbb{Q}_{p}^{N}\right)  $, one gets%
\begin{align*}
G\left(  x;m,\alpha\right)   &  =\lim_{l\rightarrow+\infty}%
{\displaystyle\sum\limits_{j=-l}^{l}}
\mathcal{F}_{\xi\rightarrow x}^{-1}\left(  \frac{{\LARGE 1}_{S_{j}^{N}}\left(
\xi\right)  }{\left\vert \mathfrak{l}\left(  \xi\right)  \right\vert
_{p}^{\alpha}+m^{2}}\right) \\
&  =\lim_{l\rightarrow+\infty}%
{\displaystyle\sum\limits_{j=-l}^{l}}
{\LARGE g}^{\left(  j\right)  }\left(  x;m,\alpha\right)  =%
{\displaystyle\sum\limits_{l=-\infty}^{+\infty}}
{\LARGE g}^{\left(  l\right)  }\left(  x;m,\alpha\right)  .
\end{align*}
Consider now $x\neq0$, with $\left\Vert x\right\Vert _{p}=p^{k}$,
$k\in\mathbb{Z}$. By changing variables as $\xi=p^{-l}z$ in ${\LARGE g}%
^{\left(  l\right)  }\left(  x;m,\alpha\right)  $, one gets%
\[
{\LARGE g}^{\left(  l\right)  }\left(  x;m,\alpha\right)  =p^{lN}%
{\displaystyle\int\limits_{S_{0}^{N}}}
\frac{\chi_{p}\left(  -p^{-l}\mathcal{B}\left(  x,z\right)  \right)
}{p^{ld\alpha}\left\vert \mathfrak{l}\left(  z\right)  \right\vert
_{p}^{\alpha}+m^{2}}d^{N}z.
\]
There exists a covering of $S_{0}^{N}$\ of the form%
\begin{equation}
S_{0}^{N}=%
{\displaystyle\bigsqcup\limits_{i=1}^{M}}
\text{ }\widetilde{z}_{i}+\left(  p^{L}\mathbb{Z}_{p}\right)  ^{n}
\label{covering1}%
\end{equation}
with $\widetilde{z}_{i}\in S_{0}^{N}$ for $i=1,\ldots,M$, and $L\in
\mathbb{N\smallsetminus}\left\{  0\right\}  $, such that%
\begin{equation}
\left\vert \mathfrak{l}\left(  z\right)  \right\vert _{p}^{\alpha}%
{\LARGE \mid}_{\widetilde{z}_{i}+\left(  p^{L}\mathbb{Z}_{p}\right)  ^{n}%
}=\left\vert \mathfrak{l}\left(  \widetilde{z}_{i}\right)  \right\vert
_{p}^{\alpha}\text{ \ for \ }i=1,\ldots,M, \label{covering2}%
\end{equation}
cf. \cite[Lemma 3]{Zu1}. From this fact and (\ref{Formula_B}), one gets
\[
{\LARGE g}^{\left(  l\right)  }\left(  x;m,\alpha\right)  =p^{lN-LN}%
{\displaystyle\sum\limits_{i=1}^{M}}
\frac{\chi_{p}\left(  -p^{-l}\mathcal{B}\left(  x,\widetilde{z}_{i}\right)
\right)  }{p^{ld\alpha}\left\vert \mathfrak{l}\left(  \widetilde{z}%
_{i}\right)  \right\vert _{p}^{\alpha}+m^{2}}%
{\displaystyle\int\limits_{\mathbb{Z}_{p}^{N}}}
\chi_{p}\left(  -p^{-l+L-k+\beta}%
{\displaystyle\sum\limits_{i=1}^{N}}
y_{i}\widetilde{A}_{i}\right)  d^{N}y.
\]
Notice that ${\LARGE g}^{\left(  l\right)  }\left(  x;m,\alpha\right)  $ is
locally constant for $\left\Vert x\right\Vert _{p}=p^{k}$. We now recall that%
\begin{equation}%
{\displaystyle\int\limits_{\mathbb{Z}_{p}^{N}}}
\chi_{p}\left(  -p^{-l+L-k+\beta}%
{\displaystyle\sum\limits_{i=1}^{N}}
y_{i}\widetilde{A}_{i}\right)  d^{N}y=\left\{
\begin{array}
[c]{lll}%
1 & \text{if} & l\leq L-k+\beta\\
&  & \\
0 & \text{if} & l\geq L-k+\beta+1.
\end{array}
\right.  \label{formula2}%
\end{equation}
Hence,
\begin{equation}
{\LARGE g}^{\left(  l\right)  }\left(  x;m,\alpha\right)  =0\text{ \ if
\ }l\geq1+L-k+\beta\label{e_meqzero}%
\end{equation}
and%
\begin{equation}
\left\vert {\LARGE g}^{\left(  l\right)  }\left(  x;m,\alpha\right)
\right\vert \leq\frac{p^{lN-LN}M}{p^{ld\alpha}\gamma+m^{2}}\text{, if }l\leq
L-k+\beta\text{,} \label{eq1}%
\end{equation}
where $\gamma=\min_{i}\left\vert \mathfrak{l}\left(  \widetilde{z}_{i}\right)
\right\vert _{p}^{\alpha}>0$, and since%
\[
Mp^{-LN}%
{\displaystyle\sum\limits_{l=-\infty}^{L-k+\beta}}
\frac{p^{lN}}{p^{ld\alpha}\gamma+m^{2}}<\frac{Mp^{-LN}}{m^{2}}%
{\displaystyle\sum\limits_{l=-\infty}^{L-k+\beta}}
p^{lN}<\infty,
\]
we have that series (\ref{GreenFunc2}) converges uniformly on the sphere
$\left\Vert x\right\Vert _{p}=p^{k}$, and equivalently (\ref{GreenFunc2})
converges uniformly on compact subsets of $\mathbb{Q}_{p}^{N}\smallsetminus
\left\{  0\right\}  $. Therefore, $G\left(  x;m,\alpha\right)  $ is a
continuous function on $\mathbb{Q}_{p}^{N}\smallsetminus\left\{  0\right\}  $.

(ii) If $N<\alpha d$, estimate (\ref{eq1}) implies the uniform convergence on
$\mathbb{Q}_{p}^{N}$ and the continuity of $G\left(  x;m,\alpha\right)  $ on
the whole $\mathbb{Q}_{p}^{N}$.

(iii) If $0<\alpha d<N$, then by (\ref{eq1}) for $\left\Vert x\right\Vert
_{p}\leq1$,%
\begin{multline*}
\left\vert G\left(  x;m,\alpha\right)  \right\vert \leq C%
{\displaystyle\sum\limits_{l=-\infty}^{L-k+\beta}}
\frac{p^{lN}}{p^{ld\alpha}\gamma+m^{2}}\leq\frac{C}{\gamma}%
{\displaystyle\sum\limits_{l=-\infty}^{L-k+\beta}}
p^{l\left(  N-d\alpha\right)  }\\
=\frac{C}{\gamma}p^{\left(  L-k+\beta\right)  \left(  N-\alpha d\right)  }%
{\displaystyle\sum\limits_{j=0}^{\infty}}
p^{-j\left(  N-\alpha d\right)  }\leq\frac{C}{\gamma}\frac{p^{\left(
L+\beta\right)  \left(  N-\alpha d\right)  }}{1-p^{\alpha d-N}}\left\Vert
x\right\Vert _{p}^{\alpha d-N}\text{.}%
\end{multline*}
If $\alpha d=N$, then for $\left\Vert x\right\Vert _{p}\leq1$, from
(\ref{eq1}),
\[
\left\vert G\left(  x;m,\alpha\right)  \right\vert \leq\frac{C}{m^{2}}%
{\displaystyle\sum\limits_{l=-\infty}^{0}}
p^{lN}+\frac{C}{\gamma}%
{\displaystyle\sum\limits_{l=1}^{L-k+\beta}}
1=C_{0}-C_{1}\ln\left\Vert x\right\Vert _{p}\text{.}%
\]

(iv) Let $\left\Vert x\right\Vert _{p}=p^{k}$. We notice that if $\left\Vert
\xi\right\Vert _{p}\leq p^{-k+\beta}$, then $\chi_{p}\left(  -\mathcal{B}%
\left(  x,\xi\right)  \right)  =1$. Therefore%
\[
G\left(  x;m,\alpha\right)  =G^{\left(  1\right)  }\left(  x;m,\alpha\right)
+G^{\left(  2\right)  }\left(  x;m,\alpha\right)  ,
\]
where%
\[
G^{\left(  1\right)  }\left(  x;m,\alpha\right)  =%
{\displaystyle\sum\limits_{l=-\infty}^{-k+\beta}}
p^{lN}%
{\displaystyle\int\limits_{S_{0}^{N}}}
\frac{d^{N}z}{p^{ld\alpha}\left\vert \mathfrak{l}\left(  z\right)  \right\vert
_{p}^{\alpha}+m^{2}},
\]%
\[
G^{\left(  2\right)  }\left(  x;m,\alpha\right)  =%
{\displaystyle\sum\limits_{l=-k+\beta+1}^{L-k+\beta}}
p^{lN}%
{\displaystyle\int\limits_{S_{0}^{N}}}
\frac{\chi_{p}\left(  -p^{-l}\mathcal{B}\left(  x,z\right)  \right)
}{p^{ld\alpha}\left\vert \mathfrak{l}\left(  z\right)  \right\vert
_{p}^{\alpha}+m^{2}}d^{N}z.
\]
By using the formula%
\begin{equation}
\frac{1}{m^{2}+t}=m^{-2}-m^{-4}t+O\left(  t^{2}\right)  \text{, }%
t\rightarrow0\text{,} \label{formula}%
\end{equation}
we get
\[
p^{lN}%
{\displaystyle\int\limits_{S_{0}^{N}}}
\frac{d^{N}z}{p^{ld\alpha}\left\vert \mathfrak{l}\left(  z\right)  \right\vert
_{p}^{\alpha}+m^{2}}=p^{lN}\left(  1-p^{-N}\right)  m^{-2}-p^{l\left(
N+d\alpha\right)  }m^{-4}Z(\alpha)+O(p^{l\left(  N+2d\alpha\right)  })\text{,}%
\]
as $t\rightarrow0$, where $Z(\alpha):=\int_{S_{0}^{N}}\left\vert
\mathfrak{l}\left(  z\right)  \right\vert _{p}^{\alpha}d^{N}z$, hence%
\begin{multline*}
G^{\left(  1\right)  }\left(  x;m,\alpha\right)  =\left(  1-p^{-N}\right)
m^{-2}%
{\displaystyle\sum\limits_{l=-\infty}^{-k+\beta}}
p^{lN}-Z(\alpha)m^{-4}%
{\displaystyle\sum\limits_{l=-\infty}^{-k+\beta}}
p^{l\left(  N+d\alpha\right)  }\\
+O(%
{\displaystyle\sum\limits_{l=-\infty}^{-k+\beta}}
p^{l\left(  N+2d\alpha\right)  })=p^{N\beta}m^{-2}\left\Vert x\right\Vert
_{p}^{-N}-\frac{Z(\alpha)m^{-4}p^{\left(  N+d\alpha\right)  \beta}%
}{1-p^{-N-d\alpha}}\left\Vert x\right\Vert _{p}^{-N-d\alpha}\\
+O(\left\Vert x\right\Vert _{p}^{-N-2d\alpha})\text{, as }\left\Vert
x\right\Vert _{p}\rightarrow\infty.
\end{multline*}
We now consider $G^{\left(  2\right)  }\left(  x;m,\alpha\right)  $, by using
(\ref{formula}),%
\begin{multline*}
p^{lN}%
{\displaystyle\int\limits_{S_{0}^{N}}}
\frac{\chi_{p}\left(  -p^{-l}\mathcal{B}\left(  x,z\right)  \right)
}{p^{ld\alpha}\left\vert \mathfrak{l}\left(  z\right)  \right\vert
_{p}^{\alpha}+m^{2}}d^{N}z=m^{-2}p^{lN}%
{\displaystyle\int\limits_{S_{0}^{N}}}
\chi_{p}\left(  -p^{-l}\mathcal{B}\left(  x,z\right)  \right)  d^{N}z\\
-m^{-4}p^{l\left(  d\alpha+N\right)  }%
{\displaystyle\int\limits_{S_{0}^{N}}}
\left\vert \mathfrak{l}\left(  z\right)  \right\vert _{p}^{\alpha}\chi
_{p}\left(  -p^{-l}\mathcal{B}\left(  x,z\right)  \right)  d^{N}%
z+O(p^{l\left(  N+2d\alpha\right)  }).
\end{multline*}
By using
\begin{align*}%
{\displaystyle\int\limits_{S_{0}^{N}}}
\chi_{p}\left(  -p^{-l}\mathcal{B}\left(  x,z\right)  \right)  d^{N}z  &  =\\%
{\displaystyle\int\limits_{S_{0}^{N}}}
\chi_{p}\left(  -p^{-l-k+\beta}%
{\displaystyle\sum\limits_{j=1}^{N}}
z_{j}\widetilde{A}_{j}\right)  d^{n}z  &  =\left\{
\begin{array}
[c]{lll}%
1-p^{-N} & \text{if} & l+k-\beta\leq0\\
&  & \\
-p^{-N} & \text{if} & l+k-\beta=1\\
&  & \\
0 & \text{if} & l+k-\beta\geq2,
\end{array}
\right.
\end{align*}
we get%
\[
m^{-2}%
{\displaystyle\sum\limits_{l=-k+\beta+1}^{L-k+\beta}}
p^{lN}%
{\displaystyle\int\limits_{S_{0}^{N}}}
\chi_{p}\left(  -p^{-l}\mathcal{B}\left(  x,z\right)  \right)  d^{N}%
z=-m^{-2}p^{N\beta}\left\Vert x\right\Vert _{p}^{-N},
\]
and by using (\ref{covering1})-(\ref{covering2})%
\begin{multline*}
-m^{-4}p^{l\left(  d\alpha+N\right)  }%
{\displaystyle\int\limits_{S_{0}^{N}}}
\left\vert \mathfrak{l}\left(  z\right)  \right\vert _{p}^{\alpha}\chi
_{p}\left(  -p^{-l}\mathcal{B}\left(  x,z\right)  \right)  d^{N}z=\\
-m^{-4}p^{l\left(  d\alpha+N\right)  -LN}%
{\displaystyle\sum\limits_{i=1}^{M}}
\left\vert \mathfrak{l}\left(  \widetilde{z}_{i}\right)  \right\vert
_{p}^{\alpha}\chi_{p}\left(  -p^{-l}\mathcal{B}\left(  x,\widetilde{z}%
_{i}\right)  \right)
{\displaystyle\int\limits_{\mathbb{Z}_{p}^{N}}}
\chi_{p}\left(  -p^{-l+L-k}\mathcal{B}\left(  x_{0},y\right)  \right)  d^{N}y,
\end{multline*}
where $x=p^{-k}x_{0}$, $\left\Vert x_{0}\right\Vert _{p}=1$. By using
(\ref{formula2})%
\begin{multline*}
-m^{-4}p^{-LN}%
{\displaystyle\sum\limits_{i=1}^{M}}
\left\vert \mathfrak{l}\left(  \widetilde{z}_{i}\right)  \right\vert
_{p}^{\alpha}%
{\displaystyle\sum\limits_{l=-k+\beta+1}^{L-k+\beta}}
\chi_{p}\left(  -p^{-l}\mathcal{B}\left(  x_{0},\widetilde{z}_{i}\right)
\right)  p^{l\left(  d\alpha+N\right)  }\\
=-m^{-4}p^{d\alpha+N-LN}\left(
{\displaystyle\sum\limits_{i=1}^{M}}
\left\vert \mathfrak{l}\left(  \widetilde{z}_{i}\right)  \right\vert
_{p}^{\alpha}%
{\textstyle\sum\limits_{j=0}^{L-1}}
\chi_{p}\left(  -p^{j-k+\beta+1}\mathcal{B}\left(  x,\widetilde{z}_{i}\right)
\right)  p^{j\left(  d\alpha+N\right)  }\right)  \times\\
p^{\beta\left(  d\alpha+N\right)  }\left\Vert x\right\Vert _{p}^{-d\alpha-N},
\end{multline*}
therefore%
\begin{multline*}
G\left(  x;m,\alpha\right)  =G^{\left(  1\right)  }\left(  x;m,\alpha\right)
+G^{\left(  2\right)  }\left(  x;m,\alpha\right)  =-m^{-4}A\left(  x\right)
p^{\beta\left(  d\alpha+N\right)  }\left\Vert x\right\Vert _{p}^{-d\alpha-N}\\
+O(\left\Vert x\right\Vert _{p}^{-N-2d\alpha})\text{, as }\left\Vert
x\right\Vert _{p}\rightarrow\infty,
\end{multline*}
where
\begin{multline*}
A\left(  x\right)  :=\frac{Z(\alpha)}{1-p^{-N-d\alpha}}\\
+p^{d\alpha+N-LN}\left(
{\displaystyle\sum\limits_{i=1}^{M}}
\left\vert \mathfrak{l}\left(  \widetilde{z}_{i}\right)  \right\vert
_{p}^{\alpha}%
{\textstyle\sum\limits_{j=0}^{L-1}}
\chi_{p}\left(  -p^{j-k+\beta+1}\mathcal{B}\left(  x,\widetilde{z}_{i}\right)
\right)  p^{j\left(  d\alpha+N\right)  }\right)  .
\end{multline*}

(iv) By (i) we have that $G\left(  x;m,\alpha\right)  $ is a continuous
function on $\mathbb{Q}_{p}^{n}\smallsetminus\left\{  0\right\}  $ having
expansion (\ref{GreenFunc2}). Thus, it is sufficient to show that
\begin{equation}
{\LARGE g}^{\left(  l\right)  }\left(  x;m,\alpha\right)  \geq0\text{ on
}\mathbb{Q}_{p}^{n}\smallsetminus\left\{  0\right\}  \text{.}
\label{condition}%
\end{equation}
Take $t\in\mathbb{Q}_{p}^{\times}$, then $V_{t}:=\left\{  \xi\in S_{j}%
^{N}:\mathfrak{l}\left(  \xi\right)  =t\right\}  $ is a $p$-adic compact
submanifold of the sphere $S_{j}^{N}$. Then, there exists a differential form
$\omega_{0}$, the Gel'fand-Leray form, such that $d\xi_{1}\wedge\cdots\wedge
d\xi_{n}=\omega_{0}\wedge d\mathfrak{l}$. Denote the measure corresponding to
$\omega_{0}$\ as $\frac{d\xi}{d\mathfrak{l}}$, then%
\[
{\LARGE g}^{\left(  l\right)  }\left(  x;m,\alpha\right)  =%
{\displaystyle\int\limits_{\mathbb{Q}_{p}^{\times}}}
\frac{1}{\left\vert t\right\vert _{p}^{\alpha}+m^{2}}\left\{
{\displaystyle\int\limits_{V_{t}}}
\chi_{p}\left(  -\mathcal{B}\left(  x,\xi\right)  \right)  \frac{d\xi
}{d\mathfrak{l}}\right\}  dt.
\]
Thus, in order to establish (\ref{condition}), it is sufficient to show that
\[%
{\displaystyle\int\limits_{V_{t}}}
\chi_{p}\left(  -\mathcal{B}\left(  x,\xi\right)  \right)  \frac{d\xi
}{d\mathfrak{l}}=%
{\displaystyle\int\limits_{V_{t}}}
\chi_{p}\left(  -p^{ord\left(  x\right)  +\beta}%
{\displaystyle\sum\limits_{i=1}^{N}}
\xi_{i}\widetilde{A}_{i}\right)  \frac{d\xi}{d\mathfrak{l}}\geq0\text{ for all
}x\text{.}%
\]
This last inequality is established as in the proof of Theorem 2 in
\cite{Zu1}, by using the non-Archimedean Implicit Function Theorem and by
performing a suitable change of variables.
\end{proof}

\subsection{Green functions on $\mathcal{H}_{\mathbb{R}}\left(  \infty\right)
$}

\begin{theorem}
\label{Thm3}Let $\alpha>0$, $m>0$, and let $\boldsymbol{L}_{\alpha}$\ be an
elliptic operator. (i) There exists a Green function $G\left(  x;m,\alpha
\right)  $ for the operator $\boldsymbol{L}_{\alpha}$, which is continuous and
non-negative on $\mathbb{Q}_{p}^{n}\smallsetminus\left\{  0\right\}  $, and
tends to zero at infinity. Furthermore, if $h\in\mathcal{D}_{\mathbb{C}%
}(\mathbb{Q}_{p}^{N})$, then $u(x)=G\left(  x;m,\alpha\right)  \ast h(x)$ is a
solution of (\ref{equation1}) in $\mathcal{D}_{\mathbb{C}}^{^{\ast}%
}(\mathbb{Q}_{p}^{N})$. (ii) The equation%
\begin{equation}
\left(  \boldsymbol{L}_{\alpha}+m^{2}\right)  u=g\text{, } \label{equation3A}%
\end{equation}
\ with $g\in\mathcal{H}_{\mathbb{R}}\left(  \infty\right)  $, has a unique
solution $u\in\mathcal{H}_{\mathbb{R}}\left(  \infty\right)  $.
\end{theorem}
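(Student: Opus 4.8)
The plan is to treat the two parts separately, reducing everything to Fourier-side estimates and the results already established. For part (i), I would first invoke Proposition \ref{Prop1}: the symbol $\left(\left\vert\mathfrak{l}(\xi)\right\vert_p^{\alpha}+m^2\right)^{-1}$ is bounded (by $m^{-2}$) and lies in $L^1$ precisely when $\alpha d>N$ by the two-sided estimate (\ref{basic_estimate}); outside that range one still has the pointwise bounds and the non-negativity (part (v)) and continuity away from $0$ (part (i)) from Proposition \ref{Prop1}, together with decay at infinity from part (iv). Thus the distribution $G(x;m,\alpha)=\mathcal{F}^{-1}_{\xi\to x}\bigl(\left(\left\vert\mathfrak{l}(\xi)\right\vert_p^{\alpha}+m^2\right)^{-1}\bigr)$ has all the stated regularity. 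For the claim that $u=G*h$ solves (\ref{equation1}) in $\mathcal{D}_{\mathbb{C}}^{*}(\mathbb{Q}_p^N)$, I would pass to Fourier transforms: $\widehat{h}\in\mathcal{D}_{\mathbb{C}}(\mathbb{Q}_p^N)$ has compact support, so the product $\left(\left\vert\mathfrak{l}(\xi)\right\vert_p^{\alpha}+m^2\right)^{-1}\widehat{h}(\xi)$ is a well-defined locally integrable function with compact support, hence a distribution, and applying $\left(\boldsymbol{L}_{\alpha}+m^2\right)$ multiplies its Fourier transform by $\left(\left\vert\mathfrak{l}(\xi)\right\vert_p^{\alpha}+m^2\right)$, recovering $\widehat{h}$; inverting gives $\left(\boldsymbol{L}_{\alpha}+m^2\right)u=h$. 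The convolution $G*h$ is well-defined because $G$ is a tempered-type distribution and $h$ is a test function.

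For part (ii), the strategy is to observe that the operator $\boldsymbol{L}_{\alpha}+m^2$ has symbol $\mathfrak{a}(\xi):=\left\vert\mathfrak{l}(\xi)\right\vert_p^{\alpha}+m^2$, and to check that $\mathfrak{a}$ is a smooth symbol in the sense of Definition \ref{def1}: it is continuous, it is bounded below by $m^2>0$, and by (\ref{basic_estimate}) one has $C_0^{\alpha}\left\Vert\xi\right\Vert_p^{\alpha d}\le\mathfrak{a}(\xi)-m^2\le C_1^{\alpha}\left\Vert\xi\right\Vert_p^{\alpha d}$, so for $\left\Vert\xi\right\Vert_p$ large the dominant term is the homogeneous one and $\mathfrak{a}(\xi)$ is squeezed between constant multiples of $\left\Vert\xi\right\Vert_p^{\alpha d}$ — with the exponent in Definition \ref{def1}(iii) being $\alpha d$ rather than $\alpha$. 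Once $\mathfrak{a}$ is recognized as a smooth symbol, Theorem \ref{Thm2}(i) applies verbatim (in its real-coefficient version, using the real analogue of Lemma \ref{lemma2A} and Remark \ref{nota2}(i)): the associated pseudodifferential operator is a bi-continuous isomorphism of $\mathcal{H}_{\mathbb{R}}\left(\infty\right)$ onto itself, so (\ref{equation3A}) has the unique solution $u=\left(\boldsymbol{L}_{\alpha}+m^2\right)^{-1}g\in\mathcal{H}_{\mathbb{R}}\left(\infty\right)$.

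The main obstacle I anticipate is bookkeeping rather than conceptual: making sure the normalization constant $C(\mathfrak{q})$ from the self-dual Haar measure and the exponent mismatch ($\alpha d$ versus the $\alpha$ appearing in the definition of the norms $\left\Vert\cdot\right\Vert_{l,\alpha}$ in (\ref{product})) do not cause trouble. For the latter, Remark \ref{nota2}(iii) is the key escape hatch — it says the weight $\left[\max(1,\left\Vert\xi\right\Vert_p)\right]^{2\alpha l}$ in the definition of $\mathcal{H}_{\mathbb{R}}\left(l\right)$ can equivalently be replaced by $\left[1+\left\vert\mathfrak{l}(\xi)\right\vert_p\right]^{2\alpha l}$ with an equivalent norm, so the scale of spaces is adapted to $\mathfrak{l}$ and the operator $\boldsymbol{L}_{\alpha}+m^2$ genuinely behaves like a first-order shift in the scale. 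With that identification, Lemmas \ref{lemma3} and \ref{lemma5} go through with $\mathfrak{a}(\xi)=\left\vert\mathfrak{l}(\xi)\right\vert_p^{\alpha}+m^2$, and the uniqueness in part (ii) is immediate from injectivity of the isomorphism. A minor additional point is consistency between parts (i) and (ii): when $g\in\mathcal{D}_{\mathbb{C}}(\mathbb{Q}_p^N)\cap\mathcal{H}_{\mathbb{R}}\left(\infty\right)$ the solution produced by the convolution formula in (i) must coincide with the one produced by $\left(\boldsymbol{L}_{\alpha}+m^2\right)^{-1}$ in (ii), which follows by comparing Fourier transforms.
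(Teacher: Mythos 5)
Your proposal is correct and follows essentially the same route as the paper: part (i) is reduced to Proposition \ref{Prop1} together with the Fourier-side observation that $\widehat{u}=\widehat{h}/\left(  \left\vert \mathfrak{l}\left(  \xi\right)  \right\vert _{p}^{\alpha}+m^{2}\right)  $ is a compactly supported distribution, and part (ii) rests on the invertibility of the pseudodifferential operator with symbol $\left\vert \mathfrak{l}\left(  \xi\right)  \right\vert _{p}^{\alpha}+m^{2}$ on the scale $\mathcal{H}_{\mathbb{R}}\left(  l\right)  $. The only organizational difference is that the paper handles the exponent mismatch you flag by proving the explicit shifted estimate $\left\Vert u\right\Vert _{l+d}\leq C^{\prime}\left\Vert g\right\Vert _{l}$ directly from (\ref{basic_estimate}) for $g\in\mathcal{D}_{\mathbb{R}}(\mathbb{Q}_{p}^{N})$ and then passing to the limit by density, rather than by re-reading Definition \ref{def1}(iii) with exponent $\alpha d$; both devices yield the same bi-continuous isomorphism of $\mathcal{H}_{\mathbb{R}}\left(  \infty\right)  $ via Theorem \ref{Thm2}.
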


\begin{proof}
(i) We first notice that for any $h\in\mathcal{D}_{\mathbb{C}}(\mathbb{Q}%
_{p}^{N})$, $u(x):=G\left(  x;m,\alpha\right)  \ast h(x)$ is a locally
constant function because it is the inverse Fourier transform of
$\frac{\widehat{g}\left(  \xi\right)  }{\left\vert \mathfrak{l}\left(
\xi\right)  \right\vert _{p}^{\alpha}+m^{2}}$, which is a distribution with
compact support. Taking $u(x)=G\left(  x;m,\alpha\right)  \ast h(x)\in
\mathcal{D}_{\mathbb{C}}^{^{\ast}}(\mathbb{Q}_{p}^{N})$, we have $\left(
\boldsymbol{L}_{\alpha}+m^{2}\right)  u=h$\ in $\mathcal{D}_{\mathbb{C}%
}^{^{\ast}}(\mathbb{Q}_{p}^{N})$. The other results follow from Proposition
\ref{Prop1}.

(ii) Take $g\in\mathcal{D}_{\mathbb{R}}(\mathbb{Q}_{p}^{N})$, then by (i),
$u(x)=G\left(  x;m,\alpha\right)  \ast g(x)$ is a real-valued, locally
constant function which is a solution of (\ref{equation3A}) in $\mathcal{D}%
_{\mathbb{C}}^{^{\ast}}(\mathbb{Q}_{p}^{N})$. Now, since $\widehat{u}\left(
\xi\right)  =\frac{\widehat{g}\left(  \xi\right)  }{\left\vert \mathfrak{l}%
\left(  \xi\right)  \right\vert _{p}^{\alpha}+m^{2}}\in L^{2}$, by using
(\ref{basic_estimate}),%
\begin{align*}
\left\Vert u\right\Vert _{l+d}^{2}  &  \leq C\left\Vert g\right\Vert _{0}^{2}+%
{\displaystyle\int\limits_{\mathbb{Q}_{p}^{N}\smallsetminus B_{0}^{N}}}
\frac{\left\Vert \xi\right\Vert _{p}^{2\alpha\left(  l+d\right)  }\left\vert
\widehat{g}\left(  \xi\right)  \right\vert ^{2}d^{N}\xi}{\left(  \left\vert
\mathfrak{l}\left(  \xi\right)  \right\vert _{p}^{\alpha}+m^{2}\right)  ^{2}%
}\\
&  \leq C\left\Vert g\right\Vert _{0}^{2}+\frac{1}{C_{0}^{\alpha}}%
{\displaystyle\int\limits_{\mathbb{Q}_{p}^{N}\smallsetminus B_{0}^{N}}}
\left\Vert \xi\right\Vert _{p}^{2\alpha l}\left\vert \widehat{g}\left(
\xi\right)  \right\vert ^{2}d^{N}\xi\leq C\left\Vert g\right\Vert _{0}%
^{2}+\frac{1}{C_{0}^{\alpha}}\left\Vert g\right\Vert _{l}^{2}\\
&  \leq C^{\prime}\left\Vert g\right\Vert _{l}^{2}\text{, for }l\in\mathbb{N}.
\end{align*}
Then, by Lemma \ref{lemma2A}, $u\in\mathcal{H}_{\mathbb{R}}\left(  m\right)
$, for $m\geq d$. In the case, $0\leq m\leq d-1$, one gets $\left\Vert
u\right\Vert _{m}\leq C^{\prime\prime}\left\Vert g\right\Vert _{0}$. Therefore
$u\in\mathcal{H}_{\mathbb{R}}\left(  m\right)  $, for $m\in\mathbb{N}$. We now
take a sequence $g_{n}\in\mathcal{D}_{\mathbb{R}}(\mathbb{Q}_{p}^{N})$ such
that $g_{n}$ $\underrightarrow{d}$ $g\in\mathcal{H}_{\mathbb{R}}\left(
\infty\right)  $ and $\left(  \boldsymbol{L}_{\alpha}+m^{2}\right)
u_{n}=g_{n}$ with $u_{n}\in\mathcal{H}_{\mathbb{R}}\left(  \infty\right)  $.
\ By the continuity of $\left(  \boldsymbol{L}_{\alpha}+m^{2}\right)
^{-1}\mid_{\mathcal{H}_{\mathbb{R}}\left(  \infty\right)  }$ and the density
\ of $\mathcal{D}_{\mathbb{R}}(\mathbb{Q}_{p}^{N})$ in $\mathcal{H}%
_{\mathbb{R}}\left(  \infty\right)  $, we have $u_{n}$ $\underrightarrow{d}$
$u\in\mathcal{H}_{\mathbb{R}}\left(  \infty\right)  $ and $\left(
\boldsymbol{L}_{\alpha}+m^{2}\right)  u=g$, cf. Theorem \ref{Thm2}. The
uniqueness of $u$ follows from Theorem \ref{Thm2}.
\end{proof}

\begin{corollary}
\label{Cor4}The mapping%
\[%
\begin{array}
[c]{ccc}%
\mathcal{H}_{\mathbb{R}}\left(  \infty\right)  & \rightarrow & \mathcal{H}%
_{\mathbb{R}}\left(  \infty\right) \\
g\left(  x\right)  & \rightarrow & G\left(  x;m,\alpha\right)  \ast g(x),
\end{array}
\]
is continuous.
\end{corollary}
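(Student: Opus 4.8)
\emph{Plan.} The point is that $g\mapsto G\left(  x;m,\alpha\right)  \ast g$ is nothing but $\left(  \boldsymbol{L}_{\alpha}+m^{2}\right)  ^{-1}$ acting on $\mathcal{H}_{\mathbb{R}}\left(  \infty\right)  $, and its continuity was already contained, through explicit norm estimates, in the proof of Theorem \ref{Thm3}(ii); the task is mainly to repackage those estimates. First I would check that $G\ast g$ makes sense for arbitrary $g\in\mathcal{H}_{\mathbb{R}}\left(  \infty\right)  $: by Proposition \ref{Prop1} the Green function lies in $L^{1}\left(  \mathbb{Q}_{p}^{N}\right)  $—its only local singularity, at the origin, is of type $\left\Vert x\right\Vert _{p}^{\alpha d-N}$ or $\ln\left\Vert x\right\Vert _{p}$, both locally integrable, and $\left\vert G\left(  x;m,\alpha\right)  \right\vert$ decays like $\left\Vert x\right\Vert _{p}^{-\alpha d-N}$ at infinity—while $\mathcal{H}_{\mathbb{R}}\left(  \infty\right)  \subset L^{1}\cap L^{2}$ by Theorem \ref{Thm2}(ii) (which applies to the real space as well, it being a closed subspace of $\mathcal{H}_{\mathbb{C}}\left(  \infty\right)  $). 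Hence $G\ast g\in L^{1}\cap L^{2}$ and, by (\ref{GreenFunc}), $\mathcal{F}\left(  G\ast g\right)  =\widehat{g}/\left(  \left\vert \mathfrak{l}\left(  \xi\right)  \right\vert _{p}^{\alpha}+m^{2}\right)  $; thus $G\ast g$ is exactly the function $u$ with $\widehat{u}=\widehat{g}/\left(  \left\vert \mathfrak{l}\left(  \xi\right)  \right\vert _{p}^{\alpha}+m^{2}\right)  $ studied in the proof of Theorem \ref{Thm3}(ii), so $g\mapsto G\ast g$ maps $\mathcal{H}_{\mathbb{R}}\left(  \infty\right)  $ into itself and coincides with $\left(  \boldsymbol{L}_{\alpha}+m^{2}\right)  ^{-1}$.

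Next I would extract the quantitative bound. The computation in the proof of Theorem \ref{Thm3}(ii) gives, for $g\in\mathcal{D}_{\mathbb{R}}(\mathbb{Q}_{p}^{N})$, that $\left\Vert G\ast g\right\Vert _{l+d}^{2}\leq C^{\prime}\left\Vert g\right\Vert _{l}^{2}$ for every $l\in\mathbb{N}$, together with $\left\Vert G\ast g\right\Vert _{m}\leq C^{\prime\prime}\left\Vert g\right\Vert _{0}$ for $0\leq m\leq d-1$. Choosing $l=m-d$ when $m\geq d$ and using $\left\Vert \cdot\right\Vert _{m-d}\leq\left\Vert \cdot\right\Vert _{m}$, one obtains a single constant $C_{\ast}>0$, independent of $m$, with $\left\Vert G\ast g\right\Vert _{m}\leq C_{\ast}\left\Vert g\right\Vert _{m}$ for all $m\in\mathbb{N}$ and all $g\in\mathcal{D}_{\mathbb{R}}(\mathbb{Q}_{p}^{N})$. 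By the density of $\mathcal{D}_{\mathbb{R}}(\mathbb{Q}_{p}^{N})$ in $\mathcal{H}_{\mathbb{R}}\left(  \infty\right)  $ (Lemma \ref{lemma2}) the estimate persists for every $g\in\mathcal{H}_{\mathbb{R}}\left(  \infty\right)  $. Since a linear self-map of the countably Hilbert space $\mathcal{H}_{\mathbb{R}}\left(  \infty\right)  $ is continuous for the projective-limit topology $\tau_{P}$ precisely when each seminorm $\left\Vert \cdot\right\Vert _{m}$ of the image is dominated by some seminorm of the source, this yields the continuity of $g\mapsto G\ast g$ and finishes the argument.

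I do not expect a genuine obstacle: the corollary is a consequence of estimates already performed for Theorem \ref{Thm3}(ii). The only points needing a little care are the verification that $G\ast g$ is well-defined and agrees with the abstract inverse for \emph{every} $g\in\mathcal{H}_{\mathbb{R}}\left(  \infty\right)  $—secured by $G\in L^{1}$, the inclusion $\mathcal{H}_{\mathbb{R}}\left(  \infty\right)  \subset L^{1}\cap L^{2}$, and (\ref{GreenFunc})—and the bookkeeping that converts the Hilbert-space estimates into a statement about the Fr\'{e}chet topology. As an alternative route, one may observe that $\left\vert \mathfrak{l}\left(  \xi\right)  \right\vert _{p}^{\alpha}+m^{2}$ is a smooth symbol in the sense of Definition \ref{def1}—continuity at the origin and the two-sided growth bound with exponent $\alpha d$ follow from (\ref{basic_estimate}), and $\left\vert \mathfrak{l}\left(  \xi\right)  \right\vert _{p}^{\alpha}+m^{2}\geq m^{2}>0$—so that $\boldsymbol{L}_{\alpha}+m^{2}$ falls under Theorem \ref{Thm2} and its inverse is a bi-continuous endomorphism of $\mathcal{H}_{\mathbb{R}}\left(  \infty\right)  $.
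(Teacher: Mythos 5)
Your proposal is correct and follows essentially the same route as the paper, whose (one-line) proof likewise identifies $g\mapsto G\ast g$ with $\left(\boldsymbol{L}_{\alpha}+m^{2}\right)^{-1}$ and invokes the norm estimates from the proof of Theorem \ref{Thm3}(ii) together with Theorem \ref{Thm2}. Your write-up simply supplies the details the paper leaves implicit (well-definedness of the convolution via $G\in L^{1}$, the seminorm bookkeeping for the projective-limit topology), and your ``alternative route'' via the smooth-symbol framework of Definition \ref{def1} is in fact the other half of the paper's own citation.
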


\begin{proof}
By the proof of Theorem \ref{Thm3} (ii), and Theorem \ref{Thm2} ,
\[%
\begin{array}
[c]{ccc}%
\mathcal{H}_{\mathbb{R}}\left(  \infty\right)  & \rightarrow & \mathcal{H}%
_{\mathbb{R}}\left(  \infty\right) \\
g & \rightarrow & \left(  \boldsymbol{L}_{\alpha}+m^{2}\right)  ^{-1}g,
\end{array}
\]

is a well-defined continuous mapping.
\end{proof}

\begin{remark}
(i) It is worth to mention that the Archimedean and non-Archimedean Green
functions share similar properties, cf. \cite[Proposition 7.2.1]{Glimm-Jaffe}
and Proposition \ref{Prop1} and Theorem \ref{Thm3}.

(ii) Proposition \ref{Prop1} and Theorem \ref{Thm3} generalize to arbitrary
dimension some results established by Kochubei for Klein-Gordon
pseudodifferential operators attached with elliptic quadratic forms cf.
\cite[Proposition 2.8 and Theorem 2.4]{Koch}.

(iii) Another possible definition for the $p$-adic Klein-Gordon operator, with
real mass $m>0$, is the following:%
\[
\varphi\rightarrow\mathcal{F}^{-1}\left(  \left(  \left\vert \mathfrak{l}%
\right\vert _{p}+m^{2}\right)  ^{\alpha}\mathcal{F}\varphi\right)  .
\]
But, since Proposition \ref{Prop1} and Theorem \ref{Thm3} \ remain valid for
this type of operators, we prefer to work with (\ref{equation1}) because this
type of operators have been studied extensively in the $p$-adic setting.

(iv) Theorem \ref{Thm3} shows that $\mathcal{H}_{\mathbb{C}}\left(
\infty\right)  $ contains distributions.
\end{remark}

\section{\label{Sect5}The Generalized White Noise}

\subsection{Infinitely divisible probability distributions}

We recall that an infinitely divisible probability distribution $P$ is a
probability distribution having the property that for each $n\in\mathbb{N}$
there exists a probability distribution $P_{n}$ such that $P=P_{n}\ast
\cdots\ast P_{n}$ ($n$-times). By the Lévy-Khinchine Theorem, the
characteristic function $C_{P}$ of $P$ satisfies%
\begin{equation}
C_{P}(t)=%
{\textstyle\int\limits_{\mathbb{R}}}
e^{ist}dP(s)=e^{\Psi\left(  t\right)  }\text{, }t\in\mathbb{R}\text{,}
\label{char_function}%
\end{equation}
where $\Psi:\mathbb{R}\rightarrow\mathbb{C}$ is a continuous function, called
the \textit{Lévy characteristic of} $P$, which is uniquely represented as
follows:%
\begin{equation}
\Psi\left(  t\right)  =iat-\frac{\sigma^{2}t^{2}}{2}+%
{\textstyle\int\limits_{\mathbb{R\smallsetminus}\left\{  0\right\}  }}
\left(  e^{ist}-1-\frac{ist}{1+s^{2}}\right)  dM(s)\text{, }t\in
\mathbb{R}\text{,} \label{Psi}%
\end{equation}
where $a$, $\sigma\in\mathbb{R}$, with $\sigma\geq0$, and the measure $dM(s)$
satisfies%
\begin{equation}%
{\textstyle\int\limits_{\mathbb{R\smallsetminus}\left\{  0\right\}  }}
\min\left(  1,s^{2}\right)  dM(s)<\infty. \label{dM(s)}%
\end{equation}
On the other hand, given a triple $\left(  a,\sigma,dM\right)  $\ with
$a\in\mathbb{R}$, $\sigma\geq0$, and $dM$ a measure on
$\mathbb{R\smallsetminus}\left\{  0\right\}  $ satisfying (\ref{dM(s)}), there
exists a unique infinitely divisible probability distribution $P$ such that
its Lévy characteristic is given by (\ref{Psi}).

\begin{remark}
From now on, we work with infinitely divisible probability distributions which
are absolutely continuous with all finite moments. This fact is equivalent to
all the moments of the corresponding $M$'s are finite, cf. \cite[Theorem
2.3]{Albeverio-et-al-4}.
\end{remark}

Let $N\in\mathbb{N}$\ be as before. Let $\mathcal{H}_{\mathbb{R}}\left(
\infty\right)  $ and $\mathcal{H}_{\mathbb{R}}^{^{\ast}}\left(  \infty\right)
$ be the spaces introduced in Section \ref{Nuclear_spaces}. We denote by
$\left\langle \cdot,\cdot\right\rangle $ the dual pairing between
$\mathcal{H}_{\mathbb{R}}\left(  \infty\right)  $ and $\mathcal{H}%
_{\mathbb{R}}^{^{\ast}}\left(  \infty\right)  $. Let $\mathcal{B}$ be the
$\sigma$-algebra generated by cylinder sets of $\mathcal{H}_{\mathbb{R}%
}^{^{\ast}}\left(  \infty\right)  $. Then $\left(  \mathcal{H}_{\mathbb{R}%
}^{^{\ast}}\left(  \infty\right)  ,\mathcal{B}\right)  $ is a measurable space.

By a \textit{characteristic functional} on $\mathcal{H}_{\mathbb{R}}\left(
\infty\right)  $, we mean a functional $C:\mathcal{H}_{\mathbb{R}}\left(
\infty\right)  \rightarrow\mathbb{C}$ satisfying the following properties:

\noindent(i) $C$ is continuous on $\mathcal{H}_{\mathbb{R}}\left(
\infty\right)  $;

\noindent(ii) $C$ is positive-definite;

\noindent(iii) $C(0)=1$.

Now, since $\mathcal{H}_{\mathbb{R}}\left(  \infty\right)  $ is a nuclear
space, cf. Theorem \ref{Thm1}, by the Bochner-Minlos Theorem (see e.g.
\cite{Minlos}), there exists a one to one correspondence between the
characteristic functionals $C$ and probability measures $\mathrm{P}$ on
$\left(  \mathcal{H}_{\mathbb{R}}^{^{\ast}}\left(  \infty\right)
,\mathcal{B}\right)  $ given by the following relation%
\[
C(f)=%
{\textstyle\int\limits_{\mathcal{H}_{\mathbb{R}}^{^{\ast}}\left(
\infty\right)  }}
e^{i\left\langle f,T\right\rangle }\mathrm{dP}\left(  T\right)  \text{, }%
f\in\mathcal{H}_{\mathbb{R}}\left(  \infty\right)  .
\]

\begin{theorem}
\label{Thm5}Let $\Psi$ be \ a Lévy characteristic defined by
(\ref{char_function}). Then there exists a unique probability measure
\textrm{P}$_{\Psi}$ on $\left(  \mathcal{H}_{\mathbb{R}}^{^{\ast}}\left(
\infty\right)  ,\mathcal{B}\right)  $ such that the Fourier transform of
\textrm{P}$_{\Psi}$ satisfies%
\[%
{\textstyle\int\limits_{\mathcal{H}_{\mathbb{R}}^{^{\ast}}\left(
\infty\right)  }}
e^{i\left\langle f,T\right\rangle }\mathrm{dP}_{\Psi}\left(  T\right)
=\exp\left\{
{\textstyle\int\limits_{\mathbb{Q}_{p}^{N}}}
\Psi\left(  f\left(  x\right)  \right)  d^{N}x\right\}  \text{, }%
f\in\mathcal{H}_{\mathbb{R}}\left(  \infty\right)  .
\]

\end{theorem}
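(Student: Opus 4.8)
The plan is to apply the Bochner--Minlos correspondence recorded just before the statement to the functional
\[
C(f):=\exp\left\{  \int_{\mathbb{Q}_{p}^{N}}\Psi\left(  f(x)\right)  d^{N}x\right\}  ,\qquad f\in\mathcal{H}_{\mathbb{R}}\left(  \infty\right)  ,
\]
and to identify the resulting measure with $\mathrm{P}_{\Psi}$; since $\mathcal{H}_{\mathbb{R}}\left(  \infty\right)  $ is nuclear (Theorem \ref{Thm1}), it is enough to show that $C$ is a characteristic functional, i.e.\ that the integral converges, $C(0)=1$, $C$ is positive-definite, and $C$ is continuous. First I would record two elementary bounds on the L\'{e}vy characteristic \eqref{Psi}: differentiating under the integral sign and using that (by the remark preceding the theorem) all moments of $M$ are finite, one gets $\left\vert \Psi^{\prime}(t)\right\vert \leq C_{\Psi}(1+\left\vert t\right\vert )$, hence $\left\vert \Psi(t)\right\vert \leq C_{\Psi}(\left\vert t\right\vert +t^{2})$ and $\left\vert \Psi(a)-\Psi(b)\right\vert \leq C_{\Psi}(1+\left\vert a\right\vert +\left\vert b\right\vert )\left\vert a-b\right\vert $. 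Since $\mathcal{H}_{\mathbb{R}}\left(  \infty\right)  \subset\mathcal{H}_{\mathbb{C}}\left(  \infty\right)  \subset L^{1}\cap L^{2}$ by Theorem \ref{Thm2}(ii), this yields $\int_{\mathbb{Q}_{p}^{N}}\left\vert \Psi\left(  f(x)\right)  \right\vert d^{N}x\leq C_{\Psi}\bigl(\left\Vert f\right\Vert _{L^{1}}+\left\Vert f\right\Vert _{L^{2}}^{2}\bigr)<\infty$, so $C$ is well defined, and $C(0)=1$ because $\Psi(0)=0$.

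Next I would establish continuity. The inclusion $\mathcal{H}_{\mathbb{R}}\left(  \infty\right)  \hookrightarrow L_{\mathbb{R}}^{1}$ is a linear map from a Fr\'{e}chet space into a Banach space with closed graph: if $f_{n}\rightarrow f$ in $\mathcal{H}_{\mathbb{R}}\left(  \infty\right)  $ and $f_{n}\rightarrow h$ in $L^{1}$, then also $f_{n}\rightarrow f$ in $L^{2}=\mathcal{H}_{\mathbb{R}}\left(  0\right)  $, and extracting a subsequence converging a.e.\ to $f$ and then a further subsequence converging a.e.\ to $h$ forces $f=h$; hence, by the closed graph theorem, this inclusion is continuous, so $\left\Vert f\right\Vert _{L^{1}}\leq C\left\Vert f\right\Vert _{l}$ for some $l\in\mathbb{N}$ (the norm of \eqref{product}). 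Combining this with the Lipschitz estimate for $\Psi$ and Cauchy--Schwarz,
\[
\left\vert \int\Psi(f)\,d^{N}x-\int\Psi(g)\,d^{N}x\right\vert \leq C_{\Psi}\left(  \left\Vert f-g\right\Vert _{L^{1}}+\left(  \left\Vert f\right\Vert _{L^{2}}+\left\Vert g\right\Vert _{L^{2}}\right)  \left\Vert f-g\right\Vert _{L^{2}}\right)  ,
\]
and since $\left\Vert \cdot\right\Vert _{L^{2}}=\left\Vert \cdot\right\Vert _{0}$, the map $f\mapsto\int\Psi(f)\,d^{N}x$ is continuous on $\mathcal{H}_{\mathbb{R}}\left(  \infty\right)  $; so is $C$.

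For positive-definiteness I would argue first on test functions and then pass to the limit. Given finitely many $g_{1},\dots,g_{m}\in\mathcal{D}_{\mathbb{R}}(\mathbb{Q}_{p}^{N})$, choose a single finite partition $\left\{  K_{j}\right\}  _{j=1}^{n}$ of $\bigcup_{k}\mathrm{supp}(g_{k})$ into balls on which every $g_{k}$ is constant, fix $\widetilde{x}_{j}\in K_{j}$, and set $\Delta_{j}=\mathrm{vol}(K_{j})>0$; then for any $g$ constant on each $K_{j}$ one has $C(g)=\prod_{j=1}^{n}e^{\Delta_{j}\Psi(g(\widetilde{x}_{j}))}$. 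The point is that $\Delta_{j}\Psi$ is again a L\'{e}vy characteristic --- it corresponds to the triple $(\Delta_{j}a,\sqrt{\Delta_{j}}\,\sigma,\Delta_{j}M)$, which still satisfies \eqref{dM(s)} --- so by \eqref{char_function} each $t\mapsto e^{\Delta_{j}\Psi(t)}$ is positive-definite on $\mathbb{R}$; pulling this back through the linear functional $g\mapsto g(\widetilde{x}_{j})$ (well defined since elements of $\mathcal{H}_{\mathbb{R}}\left(  \infty\right)  $ are genuine continuous functions, Remark \ref{nota2}(ii)) and multiplying over $j$ (Schur's lemma on Hadamard products of positive semidefinite matrices) shows that $\bigl[C(g_{k}-g_{l})\bigr]_{k,l}$ is positive semidefinite. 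Hence $C$ is positive-definite on $\mathcal{D}_{\mathbb{R}}(\mathbb{Q}_{p}^{N})$; by density of $\mathcal{D}_{\mathbb{R}}(\mathbb{Q}_{p}^{N})$ in $\mathcal{H}_{\mathbb{R}}\left(  \infty\right)  $ and the continuity of $C$, the inequality $\sum_{k,l}c_{k}\overline{c_{l}}\,C(f_{k}-f_{l})\geq0$ extends to arbitrary $f_{1},\dots,f_{m}\in\mathcal{H}_{\mathbb{R}}\left(  \infty\right)  $.

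Finally, with $C$ continuous, positive-definite and $C(0)=1$, the Bochner--Minlos theorem produces a unique probability measure $\mathrm{P}_{\Psi}$ on $\left(  \mathcal{H}_{\mathbb{R}}^{\ast}\left(  \infty\right)  ,\mathcal{B}\right)  $ with $\int_{\mathcal{H}_{\mathbb{R}}^{\ast}\left(  \infty\right)  }e^{i\left\langle f,T\right\rangle }d\mathrm{P}_{\Psi}(T)=C(f)$, which is exactly the claimed identity (uniqueness being part of the correspondence). I expect the positive-definiteness to be the main obstacle, the decisive observation being that $c\Psi$ remains a L\'{e}vy characteristic for every $c>0$, so that each factor of the Riemann product representing $C$ on test functions is positive-definite; the common-partition bookkeeping and the density/continuity limiting argument are then routine. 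A smaller technical point, needed for continuity, is the continuous embedding $\mathcal{H}_{\mathbb{R}}\left(  \infty\right)  \hookrightarrow L_{\mathbb{R}}^{1}$, which the closed graph theorem supplies without an explicit norm computation.
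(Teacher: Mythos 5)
Your proposal is correct, and its skeleton coincides with the paper's: reduce positive-definiteness to test functions by density and continuity, exploit local constancy to write $C(g)$ as a finite product of factors $e^{\Delta_j\Psi(\cdot)}$, invoke the Schur (Hadamard-product) theorem, and feed the resulting characteristic functional into a Minlos-type theorem. The differences are in how three ingredients are supplied. First, for the key fact that $e^{s\Psi(t)}$ is positive-definite for every $s>0$, the paper (in the proof of Theorem \ref{Thm5}, via Proposition \ref{Prop2}) ultimately cites Gel'fand--Vilenkin's equivalence theorems, whereas you observe directly that $s\Psi$ is the L\'{e}vy characteristic of the rescaled triple $(sa,\sqrt{s}\,\sigma,sM)$, hence $e^{s\Psi}$ is the characteristic function of an infinitely divisible law and is positive-definite by Bochner; this is more elementary and self-contained, and is consistent with the standing assumption (used identically in Lemma \ref{lemma6A}) that $\int|s|\,dM(s)<\infty$. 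Second, for continuity the paper's Lemma \ref{lemma6} runs a contradiction/subsequence argument with dominated convergence on $\mathcal{D}_{\mathbb{R}}(\mathbb{Q}_{p}^{N})$, while you prove a quantitative Lipschitz bound $\left|\int\Psi(f)-\int\Psi(g)\right|\leq C_{\Psi}\left(\Vert f-g\Vert_{L^{1}}+(\Vert f\Vert_{L^{2}}+\Vert g\Vert_{L^{2}})\Vert f-g\Vert_{L^{2}}\right)$ and control $\Vert\cdot\Vert_{L^{1}}$ by a seminorm $\Vert\cdot\Vert_{l}$ via the closed graph theorem for the embedding $\mathcal{H}_{\mathbb{R}}(\infty)\hookrightarrow L_{\mathbb{R}}^{1}$ (legitimate, since Theorem \ref{Thm2}(ii) makes the embedding well defined and the space is Fr\'{e}chet); your estimate has the advantage of giving continuity on all of $\mathcal{H}_{\mathbb{R}}(\infty)$ at once rather than only on the dense subspace. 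Third, the paper passes through Gel'fand--Vilenkin's theorem on generalized processes with independent values at every point (checking the extra multiplicativity condition on functions with disjoint supports), whereas you apply the Bochner--Minlos correspondence stated just before the theorem; for the statement as written your route suffices and yields existence and uniqueness directly, though the paper's route records in addition the independence property implicit in calling $\mathrm{P}_{\Psi}$ a white noise measure.
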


The proof is based on \cite[Theorem 6, p. 283]{Gel-Vil} like in the
Archimedean case, cf. \cite[Theorem 1.1]{Albeverio-et-al-4}. However, in the
non-Archimedean case the result does not follow directly from \cite{Gel-Vil}.
We need some additional results.

\begin{lemma}
\label{lemma6A}$\int_{\mathbb{Q}_{p}^{N}}\Psi\left(  f\left(  x\right)
\right)  d^{N}x<\infty$ for any $f\in\mathcal{H}_{\mathbb{R}}\left(
\infty\right)  $.
\end{lemma}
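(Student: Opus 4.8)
The plan is to reduce the statement to the pointwise estimate
\begin{equation*}
\left\vert \Psi\left(  t\right)  \right\vert \leq C\left(  \left\vert
t\right\vert +t^{2}\right)  \quad\text{for all }t\in\mathbb{R}\text{,}
\end{equation*}
where $C$ depends only on the triple $\left(  a,\sigma,dM\right)  $ attached to $\Psi$, and then to combine it with the integrability of $f$ proved in Section \ref{Sect3}. Once this estimate is available, Theorem \ref{Thm2}~(ii) gives $f\in L_{\mathbb{R}}^{1}\cap L_{\mathbb{R}}^{2}$, so that
\begin{equation*}
\int_{\mathbb{Q}_{p}^{N}}\left\vert \Psi\left(  f\left(  x\right)  \right)
\right\vert d^{N}x\leq C\int_{\mathbb{Q}_{p}^{N}}\left(  \left\vert f\left(
x\right)  \right\vert +f\left(  x\right)  ^{2}\right)  d^{N}x<\infty\text{,}
\end{equation*}
which is the assertion. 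I would also record at the outset that $\Psi\left(  0\right)  =0$; this is what makes a majorant with no constant term possible, and hence what allows the $L^{1}\cap L^{2}$ decay of $f$ to be inherited by $\Psi\left(  f\left(  \cdot\right)  \right)  $.

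The substance of the proof is therefore the estimate for $\Psi$. Starting from the L\'evy--Khinchine representation (\ref{Psi}), the drift and Gaussian terms are trivially bounded by $\left\vert a\right\vert \left\vert t\right\vert +\frac{\sigma^{2}}{2}t^{2}$, so it remains to control the jump integral. I would split $\mathbb{R}\smallsetminus\left\{  0\right\}  $ into the regions $\left\{  0<\left\vert s\right\vert \leq1\right\}  $ and $\left\{  \left\vert s\right\vert >1\right\}  $. On the first region write
\begin{equation*}
e^{ist}-1-\frac{ist}{1+s^{2}}=\left(  e^{ist}-1-ist\right)  +ist\cdot
\frac{s^{2}}{1+s^{2}}
\end{equation*}
and use $\left\vert e^{iu}-1-iu\right\vert \leq\frac{u^{2}}{2}$ together with $\frac{s^{2}}{1+s^{2}}\leq s^{2}$; this yields a bound of the form $s^{2}\left(  \frac{t^{2}}{2}+\left\vert t\right\vert \right)  $, whose integral against $dM$ over this region is finite by the L\'evy condition (\ref{dM(s)}). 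On the second region use $\left\vert e^{iu}-1\right\vert \leq\left\vert u\right\vert $ and $\frac{\left\vert s\right\vert }{1+s^{2}}\leq1$, obtaining a bound which is a fixed multiple of $\left\vert s\right\vert \left\vert t\right\vert $; its integral against $dM$ over $\left\{  \left\vert s\right\vert >1\right\}  $ is finite because, by the standing assumption recalled in the remark preceding Theorem \ref{Thm5}, the measures $M$ have all moments finite, in particular $\int_{\left\vert s\right\vert >1}\left\vert s\right\vert dM\left(  s\right)  <\infty$. Adding the four contributions gives the desired estimate with an explicit constant.

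The only delicate point is the asymmetry of the information available on $dM$: near the origin only the truncated second moment $\int_{0<\left\vert s\right\vert \leq1}s^{2}dM\left(  s\right)  $ is under control, so one cannot pass to the centered representation $i\widetilde{a}t-\frac{\sigma^{2}t^{2}}{2}+\int\left(  e^{ist}-1-ist\right)  dM\left(  s\right)  $, while away from the origin every moment is finite. The splitting above is tailored to this asymmetry: it invokes the $s^{2}$-bound only where $\left\vert s\right\vert \leq1$ and the first-moment bound only where $\left\vert s\right\vert >1$, so both integrals converge. I expect this bookkeeping, rather than any deep idea, to be the main obstacle; everything else reduces to the elementary inequalities for $e^{iu}$ and the already-established inclusion $\mathcal{H}_{\mathbb{R}}\left(  \infty\right)  \subset L_{\mathbb{R}}^{1}\cap L_{\mathbb{R}}^{2}$.
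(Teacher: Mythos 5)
Your proof is correct and follows the same overall strategy as the paper's: reduce everything to the inclusion $\mathcal{H}_{\mathbb{R}}\left(\infty\right)\subset L_{\mathbb{R}}^{1}\cap L_{\mathbb{R}}^{2}$ from Theorem \ref{Thm2}~(ii), control the drift and Gaussian terms of (\ref{Psi}) by $\int\left\vert f\right\vert d^{N}x$ and $\int f^{2}d^{N}x$ respectively, and handle the jump integral by elementary bounds on $e^{iu}$. The one place you genuinely diverge is the jump term. The paper bounds its integrand by $2\left\vert s\right\vert \left\vert f\left(x\right)\right\vert$ on all of $\mathbb{R}\smallsetminus\left\{0\right\}$ and then invokes $\int_{\mathbb{R}\smallsetminus\left\{0\right\}}\left\vert s\right\vert dM\left(s\right)<\infty$; this is legitimate under the standing assumption (recalled before Theorem \ref{Thm5} and used again in Lemma \ref{lemma6}) that $M$ is a bounded measure with all moments finite, since then the first moment near the origin is automatically finite. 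Your split at $\left\vert s\right\vert=1$, with the second-order Taylor estimate $\left\vert e^{iu}-1-iu\right\vert\leq u^{2}/2$ near the origin, needs only the L\'evy condition (\ref{dM(s)}) there together with $\int_{\left\vert s\right\vert>1}\left\vert s\right\vert dM\left(s\right)<\infty$ at infinity. So your argument is slightly more robust -- it survives for L\'evy measures that are not finite near the origin -- at the cost of a little extra bookkeeping, while the paper's one-line bound is shorter but leans harder on the finiteness of $M$. Under the paper's hypotheses both arguments are valid and yield the same majorant $\left\vert\Psi\left(t\right)\right\vert\leq C\left(\left\vert t\right\vert+t^{2}\right)$.
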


\begin{proof}
By formula (\ref{Psi}), we have to show that : (i) $\int_{\mathbb{Q}_{p}^{N}%
}f\left(  x\right)  d^{N}x<\infty$; (ii) $\int_{\mathbb{Q}_{p}^{N}}%
f^{2}\left(  x\right)  d^{N}x<\infty$; (iii) $\int_{\mathbb{Q}_{p}^{N}}%
\int_{\mathbb{R\smallsetminus}\left\{  0\right\}  }\left(  e^{isf\left(
x\right)  }-1-\frac{isf\left(  x\right)  }{1+s^{2}}\right)  dM(s)d^{N}%
x<\infty$ for any $f\in\mathcal{H}_{\mathbb{R}}\left(  \infty\right)  $. (i),
(ii) follow from the fact that $f\in L^{1}$, respectively that $f\in L^{2}$,
cf. Theorem \ref{Thm2}\ (ii). To verify (iii) we use that $\left\vert
e^{isf\left(  x\right)  }-1\right\vert \leq\left\vert sf\left(  x\right)
\right\vert $, $s\in\mathbb{R}$, $x\in\mathbb{Q}_{p}^{N}$, then integral (iii)
is bounded by $2\left(  \int_{\mathbb{R\smallsetminus}\left\{  0\right\}
}\left\vert s\right\vert dM(s)\right)  \left(  \int_{\mathbb{Q}_{p}^{N}%
}\left\vert f\left(  x\right)  \right\vert d^{N}x\right)  $ which is finite
because $M$ has finite moments and $f\in L^{1}$.
\end{proof}

\begin{lemma}
\label{lemma6}The function $f\rightarrow$ $\int_{\mathbb{Q}_{p}^{N}}%
\Psi\left(  f\left(  x\right)  \right)  d^{N}x$\ is continuous on
$\mathcal{H}_{\mathbb{R}}\left(  \infty\right)  $.
\end{lemma}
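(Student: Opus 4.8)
The plan is to use that $\mathcal{H}_{\mathbb{R}}\left(  \infty\right)  $ is metrizable (Lemma \ref{lemma1}), so it suffices to prove \emph{sequential} continuity: if $f_{n}\rightarrow f$ in $\mathcal{H}_{\mathbb{R}}\left(  \infty\right)  $, then $\int_{\mathbb{Q}_{p}^{N}}\Psi\left(  f_{n}\left(  x\right)  \right)  d^{N}x\rightarrow\int_{\mathbb{Q}_{p}^{N}}\Psi\left(  f\left(  x\right)  \right)  d^{N}x$. The essential preliminary step is to show that the inclusion $\mathcal{H}_{\mathbb{R}}\left(  \infty\right)  \hookrightarrow L_{\mathbb{R}}^{1}$ is continuous. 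This map is well defined and linear by Theorem \ref{Thm2}(ii), and both spaces are Fr\'{e}chet, so by the closed graph theorem it is enough to check that its graph is closed. If $g_{n}\rightarrow g$ in $\mathcal{H}_{\mathbb{R}}\left(  \infty\right)  $ and $g_{n}\rightarrow h$ in $L_{\mathbb{R}}^{1}$, then, since $\left\Vert \cdot\right\Vert _{0}$ is the $L^{2}$-norm, we also have $g_{n}\rightarrow g$ in $L_{\mathbb{R}}^{2}$; passing to a subsequence converging a.e. to $g$ and then to a further subsequence converging a.e. to $h$ gives $g=h$ a.e. Hence the inclusion is continuous, and consequently $f_{n}\rightarrow f$ in $\mathcal{H}_{\mathbb{R}}\left(  \infty\right)  $ forces $\left\Vert f_{n}-f\right\Vert _{L^{1}}\rightarrow0$ and $\left\Vert f_{n}-f\right\Vert _{L^{2}}\rightarrow0$.

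With this in hand, I would decompose $\Psi$ according to (\ref{Psi}). By Lemma \ref{lemma6A} the integrals $\int_{\mathbb{Q}_{p}^{N}}\Psi\left(  f_{n}\left(  x\right)  \right)  d^{N}x$ and $\int_{\mathbb{Q}_{p}^{N}}\Psi\left(  f\left(  x\right)  \right)  d^{N}x$ are finite, so their difference equals $\int_{\mathbb{Q}_{p}^{N}}\left(  \Psi\left(  f_{n}\left(  x\right)  \right)  -\Psi\left(  f\left(  x\right)  \right)  \right)  d^{N}x$, which I would bound term by term. The drift term contributes at most $\left\vert a\right\vert \left\Vert f_{n}-f\right\Vert _{L^{1}}$. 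The Gaussian term, rewritten as $-\tfrac{\sigma^{2}}{2}\int_{\mathbb{Q}_{p}^{N}}\left(  f_{n}-f\right)  \left(  f_{n}+f\right)  d^{N}x$, is bounded by $\tfrac{\sigma^{2}}{2}\left\Vert f_{n}-f\right\Vert _{L^{2}}\left\Vert f_{n}+f\right\Vert _{L^{2}}$, and $\left\Vert f_{n}+f\right\Vert _{L^{2}}$ remains bounded because $f_{n}\rightarrow f$ in $L^{2}$. For the jump term, I would use the elementary inequalities $\left\vert e^{iu}-e^{iv}\right\vert \leq\left\vert u-v\right\vert $ and $\left(  1+s^{2}\right)  ^{-1}\leq1$ to get
\[
\left\vert \left(  e^{isf_{n}\left(  x\right)  }-1-\tfrac{isf_{n}\left(  x\right)  }{1+s^{2}}\right)  -\left(  e^{isf\left(  x\right)  }-1-\tfrac{isf\left(  x\right)  }{1+s^{2}}\right)  \right\vert \leq2\left\vert s\right\vert \left\vert f_{n}\left(  x\right)  -f\left(  x\right)  \right\vert ,
\]
and then, since $\int_{\mathbb{R\smallsetminus}\left\{  0\right\}  }\left\vert s\right\vert dM(s)<\infty$ exactly as in the proof of Lemma \ref{lemma6A}, Fubini's theorem applies and the jump contribution is bounded by $2\left(  \int_{\mathbb{R\smallsetminus}\left\{  0\right\}  }\left\vert s\right\vert dM(s)\right)  \left\Vert f_{n}-f\right\Vert _{L^{1}}$. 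All three bounds tend to $0$, which proves the claim.

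The hard part is the $L^{1}$-continuity of the embedding $\mathcal{H}_{\mathbb{R}}\left(  \infty\right)  \hookrightarrow L_{\mathbb{R}}^{1}$: the inclusion into $L^{2}$ is immediate since $\left\Vert \cdot\right\Vert _{0}$ is the $L^{2}$-norm, but the $L^{1}$ bound does not come from a single seminorm $\left\Vert \cdot\right\Vert _{l}$ in any obvious way. The closed graph argument sketched above sidesteps this cleanly; alternatively one could make the frequency-window estimates in the proof of Theorem \ref{Thm2}(ii) quantitative in the norms $\left\Vert \cdot\right\Vert _{l}$, exploiting the cancellation of $\chi_{p}$ recorded in formulas such as (\ref{formula2}), but that route is considerably more laborious.
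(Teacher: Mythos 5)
Your proof is correct, but it takes a genuinely different route from the paper's. The paper reduces the problem (via the density of $\mathcal{D}_{\mathbb{R}}(\mathbb{Q}_{p}^{N})$ in $\mathcal{H}_{\mathbb{R}}\left(\infty\right)$, Lemma \ref{lemma2}(i)) to continuity of the jump term $D(f)$ on the dense subspace of test functions, and then argues by contradiction: it extracts from an $L^{2}$-convergent sequence an almost-everywhere convergent subsequence, uses containment of the supports in a fixed ball to produce an integrable dominating function, and applies dominated convergence. You instead prove continuity directly on all of $\mathcal{H}_{\mathbb{R}}\left(\infty\right)$: the key new ingredient is the continuity of the embedding $\mathcal{H}_{\mathbb{R}}\left(\infty\right)\hookrightarrow L_{\mathbb{R}}^{1}$, obtained from the closed graph theorem between Fr\'echet spaces (the graph-closedness check via a.e.\ convergent subsequences in $L^{1}$ and $L^{2}$ is correct), after which each of the three terms of $\Psi$ in (\ref{Psi}) admits an explicit Lipschitz-type bound in $\left\Vert f_{n}-f\right\Vert _{L^{1}}$ and $\left\Vert f_{n}-f\right\Vert _{L^{2}}$; your use of $\int_{\mathbb{R}\smallsetminus\left\{0\right\}}\left\vert s\right\vert dM(s)<\infty$ is the same standing finite-moment hypothesis the paper already invokes in Lemma \ref{lemma6A}. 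What your approach buys is that it avoids two delicate points in the paper's argument: the extension of continuity from the dense subspace to the completion (which the paper does not spell out), and the claim that an $L^{2}$-convergent sequence of test functions has supports contained in a fixed ball and is uniformly dominated (which is specific to the subsequence construction there). The cost is an appeal to the closed graph theorem, a soft functional-analytic tool the paper does not use; as you note, a quantitative estimate of the $L^{1}$ norm by finitely many seminorms $\left\Vert \cdot\right\Vert _{l}$ would make the argument fully elementary but is more laborious.
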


\begin{proof}
By (\ref{Psi}) and Lemma \ref{lemma2} (i), it is sufficient to show that
\begin{equation}
f\rightarrow D(f):=%
{\textstyle\int\limits_{\mathbb{Q}_{p}^{N}}}
\text{ }%
{\textstyle\int\limits_{\mathbb{R\smallsetminus}\left\{  0\right\}  }}
\left(  e^{isf\left(  x\right)  }-1-\frac{isf\left(  x\right)  }{1+s^{2}%
}\right)  dM(s)d^{N}x \label{equation4}%
\end{equation}
is continuous on $\left(  \mathcal{D}_{\mathbb{R}}(\mathbb{Q}_{p}%
^{N}),d\right)  $. Take $f\in\mathcal{D}_{\mathbb{R}}(\mathbb{Q}_{p}^{N})$ and
a sequence $\left\{  f_{n}\right\}  _{n\in\mathbb{N}}$ in $\mathcal{D}%
_{\mathbb{R}}(\mathbb{Q}_{p}^{N})$ such that $f_{n}$ $\underrightarrow{d}$
$f$, i.e. $f_{n}$ $\underrightarrow{\left\Vert \cdot\right\Vert _{m}}$ $f$
$\ $for every $m$. By contradiction assume that $f-D(f)$ does not converge to
$0$, then there exist $\epsilon>0$ and a subsequence $g_{k}=f_{n_{k}}$ such
that $\left\vert g_{k}-D(f)\right\vert >\epsilon$. On the other hand, taking
$m=0$, we have $g_{k}$ $\underrightarrow{L^{2}}$ $f$, and thus there is a
subsequence $\left\{  g_{k_{j}}\right\}  _{k_{j}}$ such that $g_{k_{j}}$
$\rightarrow$ $f$ almost uniformly. Now, since the support of $f$ is contained
in a ball, say $B_{N_{0}}^{n}$, then the support of each\ $g_{k_{j}}$ is
contained in $B_{N_{0}}^{n}$ almost everywhere. Hence $\left\vert g_{k_{j}%
}\left(  x\right)  \right\vert \leq C_{0}\left\Vert f\right\Vert _{\infty
}1_{B_{N0}^{n}}\left(  x\right)  $ almost everywhere, for some positive
constant $C_{0}$. By using that $\left\vert e^{is\left(  f\left(  x\right)
-g_{k_{j}}\left(  x\right)  \right)  }-1\right\vert \leq\left\vert
s\right\vert \left\vert f\left(  x\right)  -g_{k_{j}}\left(  x\right)
\right\vert $, $s\in\mathbb{R}$, $x\in\mathbb{Q}_{p}^{N}$, and that $M$ is a
bounded measure with finite moments,
\begin{align*}
\left\vert g_{k_{j}}-D(f)\right\vert  &  \leq2\left\{
{\textstyle\int\limits_{\mathbb{R\smallsetminus}\left\{  0\right\}  }}
s^{2}dM(s)\right\}
{\textstyle\int\limits_{\mathbb{Q}_{p}^{N}}}
\left\vert f\left(  x\right)  -g_{k_{j}}\left(  x\right)  \right\vert
d^{N}x\text{ }\\
&  \leq C%
{\textstyle\int\limits_{\mathbb{Q}_{p}^{N}}}
\left\vert f\left(  x\right)  -g_{k_{j}}\left(  x\right)  \right\vert d^{N}x,
\end{align*}
\ by Dominated Convergence Theorem, using that $f\in L^{1}$, and that
$\left\vert g_{k_{j}}\left(  x\right)  \right\vert \leq C\left\Vert
f\right\Vert _{\infty}1_{B_{N0}^{n}}\left(  x\right)  $ almost everywhere,
i.e. $\left\vert g_{k_{j}}\left(  x\right)  \right\vert \leq\left\vert
g\left(  x\right)  \right\vert \in L^{1}$, we conclude that $D\left(
g_{k_{j}}\right)  $ $\rightarrow D$ $\left(  f\right)  $, which contradicts
$\left\vert g_{k}-D(f)\right\vert >\epsilon$.
\end{proof}

Set $L(f):=\exp\left\{  \int_{\mathbb{Q}_{p}^{N}}\Psi\left(  f\left(
x\right)  \right)  d^{N}x\right\}  $ for $f\in\mathcal{H}_{\mathbb{R}}\left(
\infty\right)  $. Notice that by Lemma \ref{lemma6A} this function is well-defined.

\begin{proposition}
\label{Prop2}The function $L(f)$ is positive-definite if and only if
$e^{s\Psi\left(  t\right)  }$\ is positive-definite for every $s>0$.
\end{proposition}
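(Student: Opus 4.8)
The plan is to use the elementary but decisive observation that, on a test function, $L$ factorizes as a finite product indexed by the balls of a fine enough partition, and to feed this into the Schur (Hadamard) product theorem for positive semidefinite matrices. I would first reduce to the case $f\in\mathcal{D}_{\mathbb{R}}(\mathbb{Q}_{p}^{N})$: by Lemma \ref{lemma6} the map $f\mapsto\int_{\mathbb{Q}_{p}^{N}}\Psi(f(x))\,d^{N}x$ is continuous on $\mathcal{H}_{\mathbb{R}}(\infty)$, hence so is $L$; since $\mathcal{D}_{\mathbb{R}}(\mathbb{Q}_{p}^{N})\subset\mathcal{H}_{\mathbb{R}}(\infty)$ densely (Lemma \ref{lemma2}(i)) and the cone of positive semidefinite matrices is closed, $L$ is positive-definite on $\mathcal{H}_{\mathbb{R}}(\infty)$ as soon as it is positive-definite on $\mathcal{D}_{\mathbb{R}}(\mathbb{Q}_{p}^{N})$; the converse implication only requires testing $L$ on test functions anyway. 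I would also record at the outset that $\Psi(0)=0$, which is immediate from (\ref{Psi}).

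For the direction ($\Leftarrow$), assume $e^{s\Psi}$ is positive-definite for every $s>0$ and take $f_{1},\dots,f_{n}\in\mathcal{D}_{\mathbb{R}}(\mathbb{Q}_{p}^{N})$. Using local constancy together with compactness of supports, I would choose a single partition of a large ball into finitely many pairwise disjoint balls $B_{1},\dots,B_{r}$ of a common volume $v=p^{lN}>0$ such that each $f_{k}$ is constant, say $f_{k}\equiv c_{k,j}$ on $B_{j}$, and vanishes off $\bigcup_{j}B_{j}$. Since $\Psi(0)=0$, the function $f_{k}-f_{l}$ contributes only over $\bigcup_{j}B_{j}$ and
\[
L(f_{k}-f_{l})=\prod_{j=1}^{r}e^{v\Psi(c_{k,j}-c_{l,j})}.
\]
For each fixed $j$, the $n\times n$ matrix $\bigl(e^{v\Psi(c_{k,j}-c_{l,j})}\bigr)_{k,l}$ is positive semidefinite, because $v>0$ and $e^{v\Psi}$ is a positive-definite function on $\mathbb{R}$, applied to the real numbers $c_{1,j},\dots,c_{n,j}$. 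By the Schur product theorem the entrywise product over $j$ is again positive semidefinite, i.e. $\bigl(L(f_{k}-f_{l})\bigr)_{k,l}$ is positive semidefinite, which is exactly the positive-definiteness of $L$.

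For the direction ($\Rightarrow$), assume $L$ is positive-definite; fix $s>0$, $t_{1},\dots,t_{n}\in\mathbb{R}$ and $z_{1},\dots,z_{n}\in\mathbb{C}$. Since the volumes of finite disjoint unions of balls in $\mathbb{Q}_{p}^{N}$ range over $\{\,j\,p^{kN}:j\in\mathbb{N}\setminus\{0\},\ k\in\mathbb{Z}\,\}$, a dense subset of $(0,\infty)$, I would pick sets $A_{m}\subset\mathbb{Q}_{p}^{N}$, each a finite disjoint union of balls, with $\mathrm{vol}(A_{m})=s_{m}\to s$. Applying the positive-definiteness of $L$ to the test functions $f_{k}:=t_{k}\,{\LARGE 1}_{A_{m}}$ and using $\Psi(0)=0$ gives
\[
\sum_{k,l}z_{k}\overline{z_{l}}\,e^{s_{m}\Psi(t_{k}-t_{l})}=\sum_{k,l}z_{k}\overline{z_{l}}\,L(f_{k}-f_{l})\ge 0 .
\]
Letting $m\to\infty$ and invoking the continuity of $\Psi$ yields $\sum_{k,l}z_{k}\overline{z_{l}}\,e^{s\Psi(t_{k}-t_{l})}\ge 0$, so $e^{s\Psi}$ is positive-definite.

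The genuinely load-bearing input is the pair ``positive-definite function on $\mathbb{R}$ $\Rightarrow$ positive semidefinite matrix of translated differences'' plus the Schur product theorem; this is precisely the point at which the hypothesis on $e^{s\Psi}$ is used, and in the converse the density of the attainable ball-volumes plays the analogous role. I expect the main obstacle to be purely organizational: arranging the common refinement in the first direction (ensuring the balls can be taken of equal volume $v=p^{lN}$, or else carrying the individual volumes $v_{j}>0$ through the product, which changes nothing essential), and verifying that the approximating sets $A_{m}$ in the converse can indeed be realized inside $\mathbb{Q}_{p}^{N}$ with the prescribed volumes $s_{m}\to s$.
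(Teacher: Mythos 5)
Your proof is correct, and one half of it diverges from the paper in an interesting way. The ($\Leftarrow$) direction is essentially the paper's argument: reduce to $\mathcal{D}_{\mathbb{R}}(\mathbb{Q}_{p}^{N})$ via the continuity in Lemma \ref{lemma6}, refine to a common partition into balls on which all the $f_k$ are constant, factor $L(f_k-f_l)$ as a finite product of terms $e^{v\Psi(c_{k,j}-c_{l,j})}$, and conclude by the Schur product theorem. The ($\Rightarrow$) direction is where you take a genuinely different route. The paper tests $L$ only on the family $f_j=t_j\Omega(p^{-l}\Vert x-x_0\Vert_p)$, obtaining positive-definiteness of $e^{p^{Nl}\Psi}$ for $l\in\mathbb{Z}$ only, and then must pass through the theory of negative-definite functions: it invokes Berg--Forst to see that $t\mapsto(1-e^{p^{Nl}\Psi(t)})/p^{Nl}$ is negative-definite, takes the limit as the volume tends to $0$ to get that $-\Psi$ is negative-definite, uses the cone property to get $-s\Psi$ negative-definite for all $s>0$, and finishes with the Schoenberg theorem. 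You instead observe that the attainable volumes $\{jp^{kN}\}$ of finite disjoint unions of balls are dense in $(0,\infty)$, test $L$ on $t_k\mathbf{1}_{A_m}$ with $\operatorname{vol}(A_m)=s_m\to s$, and pass to the pointwise limit of positive-definite functions. This is more elementary (no Schoenberg, no negative-definite calculus) and in fact sidesteps a small infelicity in the paper's write-up, where the limit producing $-\Psi$ should be taken as the volume $p^{Nl}\to 0$ (i.e.\ $l\to-\infty$) rather than $l\to+\infty$ as written. One cosmetic remark: the convergence $e^{s_m\Psi(t)}\to e^{s\Psi(t)}$ for fixed $t$ uses only continuity of the exponential in the parameter $s_m$, not continuity of $\Psi$, so your appeal to the latter is unnecessary (though harmless).
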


\begin{proof}
Suppose that $L(f)$ is positive-definite, i.e.
\begin{equation}%
{\textstyle\sum\limits_{j,k=1}^{m}}
L\left(  f_{j}-f_{k}\right)  z_{j}\overline{z}_{k}\geq0\text{ for }%
f_{j}\text{, }f_{k}\in\mathcal{H}_{\mathbb{R}}\left(  \infty\right)  \text{,
}z_{j}\text{, }z_{k}\in\mathbb{C}\text{, }j\text{, }k=1,\ldots,m.
\label{equation3}%
\end{equation}

Take $f_{j}\left(  x\right)  =t_{j}\Omega\left(  p^{-l}\left\Vert
x-x_{0}\right\Vert _{p}\right)  $, $t_{j}\in\mathbb{R}$, for $j$,$=1,\ldots
,m$, $l\in\mathbb{Z}$, and $x_{0}\in\mathbb{Q}_{p}^{N}$, then%
\begin{align*}%
{\textstyle\sum\limits_{j,k=1}^{m}}
L\left(  f_{j}-f_{k}\right)  z_{j}\overline{z}_{k}  &  =%
{\textstyle\sum\limits_{j,k=1}^{m}}
\exp\left(  \int_{\left\Vert x-x_{0}\right\Vert _{p}\leq p^{l}}\Psi\left(
t_{j}-t_{k}\right)  d^{N}x\right)  z_{j}\overline{z}_{k}\\
&  =%
{\textstyle\sum\limits_{j,k=1}^{m}}
\left\{  \exp\left[  p^{Nl}\Psi\left(  t_{j}-t_{k}\right)  \right]  \right\}
z_{j}\overline{z}_{k}\geq0.
\end{align*}
This proves that $e^{p^{Nl}\Psi\left(  t\right)  }$\ is a positive-definite
function. Now, $t\rightarrow\frac{1-e^{p^{Nl}\Psi\left(  t\right)  }}{p^{Nl}}$
is negative-definite, cf. \cite[Corollary 7.7]{Berg-Forst}, for every
$l\in\mathbb{N}$. Furthermore, by \cite[Proposition 7.4 (i)]{Berg-Forst},
$\lim_{l\rightarrow\infty}\frac{1-e^{p^{Nl}\Psi\left(  t\right)  }}{p^{Nl}%
}=-\Psi\left(  t\right)  $ is a negative-definite function, and since the
negative-definite functions form a cone, $-s\Psi\left(  t\right)  $ is
negative-definite for every $s>0$. Finally, by the Schoenberg Theorem, cf.
\cite[Theorem 7.8]{Berg-Forst}, $e^{s\Psi\left(  t\right)  }$ is
positive-definite for every $s>0$.

We now assume that $e^{s\Psi\left(  t\right)  }$\ is positive-definite for
every $s>0$. In order to prove (\ref{equation3}), by Lemma \ref{lemma6}, it is
sufficient to take $f_{j}$, $f_{k}\in\mathcal{D}_{\mathbb{R}}\left(
\mathbb{Q}_{p}^{N}\right)  $, for $j$, $k=1,\ldots,m$. Consider the matrix
$A=\left[  a_{ij}\right]  $ with $a_{ij}:=\exp\left(  \int_{\mathbb{Q}_{p}%
^{N}}\Psi\left(  f_{i}\left(  x\right)  -f_{j}\left(  x\right)  \right)
d^{N}x\right)  $. We have to show that $A$ is positive-definite. Take
$B_{n}^{N}$ such that supp$f_{i}\subseteq B_{n}^{N}$ for $i=1,\ldots,m$. Then
$a_{ij}=\exp\left(  \int_{B_{n}^{N}}\Psi\left(  f_{i}\left(  x\right)
-f_{j}\left(  x\right)  \right)  d^{N}x\right)  $. By using that each
$f_{i}(x)$ is a locally constant function, and that $B_{n}^{N}$ is an open
compact set, there exists a finite covering $B_{n}^{N}=\sqcup_{l=1}%
^{L}B_{n^{\prime}}^{N}\left(  \widetilde{x}_{l}\right)  $ such that
$f_{i}(x)\mid_{B_{n^{\prime}}^{N}\left(  \widetilde{x}_{l}\right)  }%
=f_{i}(\widetilde{x}_{l})$. Hence $a_{ij}=%
{\textstyle\prod\nolimits_{l=1}^{L}}
\exp\left(  p^{Nn^{\prime}}\alpha_{l}\right)  $ with $\alpha_{l}:=\Psi\left(
f_{i}\left(  \widetilde{x}_{l}\right)  -f_{j}\left(  \widetilde{x}_{l}\right)
\right)  $. By a Schur Theorem, cf. \cite[Theorem in p. 277]{Gel-Vil}, \ A is
positive-definite if the matrix $\left[  \exp\left(  p^{Nn^{\prime}}\alpha
_{l}\right)  \right]  $ is positive-definite, which follows from the fact that
$e^{s\Psi\left(  t\right)  }$\ is positive-definite for every $s>0$.
\end{proof}

\subsubsection{Proof of Theorem \ref{Thm5}.}

By \cite[Theorem 1, p. 273]{Gel-Vil}, $L(f)\not \equiv 0$ is the
characteristic functional of a generalized random process (a random field in
our terminology) with independent values at every point, if and only if: (A)
$L$ is positive-definite and (B) for any functions $f_{1}\left(  t\right)  $,
$f_{2}\left(  t\right)  \in\mathcal{H}_{\mathbb{R}}\left(  \infty\right)  $
whose product vanishes, it verifies that $L(f_{1}+f_{2})=L(f_{1})L(f_{2})$.
The verification of condition B is straightforward. Condition A is equivalent
to $e^{s\Psi\left(  t\right)  }$\ is positive-definite for every $s>0$, cf.
Proposition \ref{Prop2}. By \cite[Theorem 4 in p. 279 and Theorem 3 in p.
189]{Gel-Vil}, this last condition turns out to be equivalent to the fact that
$\Psi$ has the form (\ref{Psi}).

\subsection{Non-Archimedean generalized white noise measures}

\begin{definition}
\label{def2}We call \textrm{P}$_{\Psi}$ in Theorem \ref{Thm5} a generalized
white noise measure with Lévy characteristic $\Psi$ and $\left(
\mathcal{H}_{\mathbb{R}}^{^{\ast}}(\infty),\mathcal{B},\mathrm{P}_{\Psi
}\right)  $ the generalized white noise space associated with $\Psi$. The
associated coordinate process%
\[
\boldsymbol{\digamma}:\mathcal{H}_{\mathbb{R}}\left(  \infty\right)
\times\left(  \mathcal{H}_{\mathbb{R}}^{^{\ast}}(\infty),\mathcal{B}%
,\mathrm{P}_{\Psi}\right)  \rightarrow\mathbb{R}%
\]
defined by $\boldsymbol{\digamma}\left(  f,T\right)  =\left\langle
f,T\right\rangle $, $f\in\mathcal{H}_{\mathbb{R}}\left(  \infty\right)  $,
$T\in\mathcal{H}_{\mathbb{R}}^{^{\ast}}(\infty)$, is called generalized white noise.
\end{definition}

The generalized white noise $\boldsymbol{\digamma}$ is composed by three
independent noises: constant, Gaussian and Poisson (with jumps given by $M$)
noises, see Remark 1.3 in \cite{Albeverio-et-al-3}.

\section{\label{Sect6}Euclidean random fields as convoluted generalized white
noise}

\subsection{Construction}

\begin{definition}
\label{def3}Let $\left(  \Omega,\mathcal{F},P\right)  $ be a given probability
space. By a generalized random field $\boldsymbol{\Phi}$ on $\left(
\Omega,\mathcal{F},P\right)  $ with parameter space $\mathcal{H}_{\mathbb{R}%
}\left(  \infty\right)  $, we mean a system
\[
\left\{  \boldsymbol{\Phi}\left(  g,\omega\right)  :\omega\in\Omega\right\}
_{g\in\mathcal{H}_{\mathbb{R}}\left(  \infty\right)  },
\]
of random variables on $\left(  \Omega,\mathcal{F},P\right)  $ having the
following properties:

\noindent(i) $P\left\{  \omega\in\Omega:\boldsymbol{\Phi}\left(  c_{1}%
g_{1}+c_{2}g_{2},\omega\right)  =c_{1}\boldsymbol{\Phi}\left(  g_{1}%
,\omega\right)  +c_{2}\boldsymbol{\Phi}\left(  g_{2},\omega\right)  \right\}
=1$, for $c_{1}$, $c_{2}\in\mathbb{R}$, $g_{1}$, $g_{2}\in\mathcal{H}%
_{\mathbb{R}}\left(  \infty\right)  $;

\noindent(ii) if $g_{n}\rightarrow g$ in $\mathcal{H}_{\mathbb{R}}\left(
\infty\right)  $, then $\boldsymbol{\Phi}\left(  g_{n},\omega\right)
\rightarrow\boldsymbol{\Phi}\left(  g,\omega\right)  $ in law.
\end{definition}

The coordinate process in Definition \ref{def2} is a random field on the
generalized white noise space $\left(  \mathcal{H}_{\mathbb{R}}^{^{\ast}%
}(\infty),\mathcal{B},\mathrm{P}_{\Psi}\right)  $, because property (i) is
fulfilled pointwise and property (ii) follows from the following fact:%
\begin{equation}
\lim_{n\rightarrow\infty}\mathrm{P}_{\Psi}\left\{  T\in\mathcal{H}%
_{\mathbb{R}}^{^{\ast}}(\infty):\left\vert \left\langle g_{n}-g,T\right\rangle
\right\vert <\epsilon\right\}  =1. \label{equation5}%
\end{equation}
Indeed, since $\mathcal{H}_{\mathbb{R}}^{^{\ast}}(\infty)$ is the union of the
increasing spaces $\mathcal{H}_{\mathbb{R}}^{^{\ast}}(l)$, there exists
$l_{0}\in\mathbb{N}$ such that $T\in\mathcal{H}_{\mathbb{R}}^{^{\ast}}(l_{0}%
)$, and thus $\left\vert \left\langle g_{n}-g,T\right\rangle \right\vert
\leq\left\Vert T\right\Vert _{-l_{0}}\left\Vert g_{n}-g\right\Vert _{l_{0}%
}\leq\left\Vert T\right\Vert _{-l_{0}}$, for $n$ big enough. Now,
(\ref{equation5}) follows by the Dominated Convergence Theorem. Therefore
$\lim_{n\rightarrow\infty}\mathrm{P}_{\Psi}\left\{  T\in\mathcal{H}%
_{\mathbb{R}}^{^{\ast}}(\infty):\left\vert \left\langle g_{n}-g,T\right\rangle
\right\vert \geq\epsilon\right\}  =0$.

We now recall that $\left(  \mathcal{G}f\right)  \left(  x\right)  :=G\left(
x;m,\alpha\right)  \ast f\left(  x\right)  $ gives rise to a continuous
mapping from $\mathcal{H}_{\mathbb{R}}(\infty)$ into itself, cf. Corollary
\ref{Cor4}. Thus, the conjugate operator $\widetilde{\mathcal{G}}%
:\mathcal{H}_{\mathbb{R}}^{^{\ast}}(\infty)\rightarrow\mathcal{H}_{\mathbb{R}%
}^{^{\ast}}(\infty)$ is a measurable mapping from $\left(  \mathcal{H}%
_{\mathbb{R}}^{^{\ast}}\left(  \infty\right)  ,\mathcal{B}\right)  $ into
itself. The generalized white noise measure $\mathrm{P}_{\Psi}$ on $\left(
\mathcal{H}_{\mathbb{R}}^{^{\ast}}\left(  \infty\right)  ,\mathcal{B}\right)
$ associated with a Lévy characteristic $\Psi$ was introduced in Definition
\ref{def2}. We set $\mathrm{P}_{\boldsymbol{\Phi}}$ to be the image
probability measure of $\mathrm{P}_{\Psi}$ under $\widetilde{\mathcal{G}}$,
i.e. $\mathrm{P}_{\boldsymbol{\Phi}}$ is the measure on $\left(
\mathcal{H}_{\mathbb{R}}^{^{\ast}}\left(  \infty\right)  ,\mathcal{B}\right)
$ defined by
\begin{equation}
\mathrm{P}_{\boldsymbol{\Phi}}\left(  A\right)  =\mathrm{P}_{\Psi}\left(
\widetilde{\mathcal{G}}^{-1}\left(  A\right)  \right)  \text{, for }%
A\in\mathcal{B}\text{.} \label{measure_Phi}%
\end{equation}

\begin{proposition}
\label{Prop3}The Fourier transform of $\mathrm{P}_{\boldsymbol{\Phi}}$ is
given by%
\[%
{\displaystyle\int\limits_{\mathcal{H}_{\mathbb{R}}^{^{\ast}}(\infty)}}
e^{i\left\langle f,T\right\rangle }d\mathrm{P}_{\boldsymbol{\Phi}}\left(
T\right)  =\exp\left\{
{\displaystyle\int\limits_{\mathbb{Q}_{p}^{N}}}
\Psi\left\{
{\displaystyle\int\limits_{\mathbb{Q}_{p}^{N}}}
G\left(  x-y;m,\alpha\right)  f\left(  y\right)  d^{N}y\right\}
d^{N}x\right\}  \text{,}%
\]
for $f\in\mathcal{H}_{\mathbb{R}}\left(  \infty\right)  $.
\end{proposition}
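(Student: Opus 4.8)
The plan is to deduce the identity directly from Theorem \ref{Thm5}, using the definition of $\mathrm{P}_{\boldsymbol{\Phi}}$ as an image measure and the fact that $\widetilde{\mathcal{G}}$ is the transpose of the convolution operator $\mathcal{G}$. The crucial preliminary observation is that, by Corollary \ref{Cor4}, the map $f\mapsto\mathcal{G}f=G\left(\cdot;m,\alpha\right)\ast f$ is a continuous endomorphism of $\mathcal{H}_{\mathbb{R}}\left(\infty\right)$; hence $\mathcal{G}f\in\mathcal{H}_{\mathbb{R}}\left(\infty\right)$ for every $f\in\mathcal{H}_{\mathbb{R}}\left(\infty\right)$, so that Theorem \ref{Thm5} is applicable with $\mathcal{G}f$ as test function and, by Lemma \ref{lemma6A}, the integral $\int_{\mathbb{Q}_{p}^{N}}\Psi\left(\left(\mathcal{G}f\right)\left(x\right)\right)d^{N}x$ converges, so that the right-hand side of the claimed formula is well defined.

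First I would apply the change-of-variables formula for image measures. Since $\mathrm{P}_{\boldsymbol{\Phi}}$ is defined by (\ref{measure_Phi}) as the image of $\mathrm{P}_{\Psi}$ under the measurable map $\widetilde{\mathcal{G}}$, and $T\mapsto e^{i\left\langle f,T\right\rangle}$ is bounded and $\mathcal{B}$-measurable, we obtain
\[
\int_{\mathcal{H}_{\mathbb{R}}^{^{\ast}}(\infty)}e^{i\left\langle f,T\right\rangle}d\mathrm{P}_{\boldsymbol{\Phi}}\left(T\right)=\int_{\mathcal{H}_{\mathbb{R}}^{^{\ast}}(\infty)}e^{i\left\langle f,\widetilde{\mathcal{G}}T\right\rangle}d\mathrm{P}_{\Psi}\left(T\right).
\]
Next I would invoke the defining property of the conjugate operator, namely $\left\langle f,\widetilde{\mathcal{G}}T\right\rangle=\left\langle \mathcal{G}f,T\right\rangle$ for all $f\in\mathcal{H}_{\mathbb{R}}\left(\infty\right)$ and $T\in\mathcal{H}_{\mathbb{R}}^{^{\ast}}(\infty)$, to rewrite the last integral as $\int_{\mathcal{H}_{\mathbb{R}}^{^{\ast}}(\infty)}e^{i\left\langle \mathcal{G}f,T\right\rangle}d\mathrm{P}_{\Psi}\left(T\right)$. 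Applying Theorem \ref{Thm5} with $\mathcal{G}f$ in place of $f$ then gives $\exp\left\{\int_{\mathbb{Q}_{p}^{N}}\Psi\left(\left(\mathcal{G}f\right)\left(x\right)\right)d^{N}x\right\}$, and unwinding $\left(\mathcal{G}f\right)\left(x\right)=\int_{\mathbb{Q}_{p}^{N}}G\left(x-y;m,\alpha\right)f\left(y\right)d^{N}y$ (cf. Corollary \ref{Cor4} and Theorem \ref{Thm3}) produces exactly the asserted expression.

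The computation itself is elementary; the only point demanding some care is the measure-theoretic bookkeeping behind the first display, i.e. that $\widetilde{\mathcal{G}}:\mathcal{H}_{\mathbb{R}}^{^{\ast}}(\infty)\rightarrow\mathcal{H}_{\mathbb{R}}^{^{\ast}}(\infty)$ is $\mathcal{B}$-measurable and is genuinely the transpose of $\mathcal{G}$ with respect to the dual pairing. I expect this to be the main obstacle: measurability should be extracted from the weak-$\ast$ continuity of $\widetilde{\mathcal{G}}$, which follows from the continuity of $\mathcal{G}$ on the nuclear Fr\'echet space $\mathcal{H}_{\mathbb{R}}\left(\infty\right)$ obtained in Corollary \ref{Cor4}, together with the fact that $\mathcal{B}$ is generated by the cylinder sets of $\mathcal{H}_{\mathbb{R}}^{^{\ast}}(\infty)$, so that the image measure in (\ref{measure_Phi}) is legitimately defined. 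Once this is in place, the chain of equalities above is immediate.
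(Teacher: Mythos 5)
Your argument is correct and coincides with the paper's proof: both pass from $\mathrm{P}_{\boldsymbol{\Phi}}$ to $\mathrm{P}_{\Psi}$ via the image-measure formula, use $\left\langle f,\widetilde{\mathcal{G}}T\right\rangle=\left\langle \mathcal{G}f,T\right\rangle$, and then apply Theorem \ref{Thm5} to $\mathcal{G}f\in\mathcal{H}_{\mathbb{R}}\left(\infty\right)$, which is licensed by Corollary \ref{Cor4}. The measurability of $\widetilde{\mathcal{G}}$ that you flag as the delicate point is exactly what the paper records in the paragraph introducing (\ref{measure_Phi}), so nothing is missing.
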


\begin{proof}
For $f\in\mathcal{H}_{\mathbb{R}}\left(  \infty\right)  $, by
(\ref{measure_Phi}) and Theorem \ref{Thm5}, we get that%
\begin{align*}%
{\displaystyle\int\limits_{\mathcal{H}_{\mathbb{R}}^{^{\ast}}(\infty)}}
e^{i\left\langle f,T\right\rangle }d\mathrm{P}_{\boldsymbol{\Phi}}\left(
T\right)   &  =%
{\displaystyle\int\limits_{\mathcal{H}_{\mathbb{R}}^{^{\ast}}(\infty)}}
e^{i\left\langle f,\widetilde{\mathcal{G}}T\right\rangle }d\mathrm{P}_{\Psi
}\left(  T\right)  =%
{\displaystyle\int\limits_{\mathcal{H}_{\mathbb{R}}^{^{\ast}}(\infty)}}
e^{i\left\langle \mathcal{G}f,T\right\rangle }d\mathrm{P}_{\Psi}\left(
T\right) \\
&  =\exp\left\{
{\displaystyle\int\limits_{\mathbb{Q}_{p}^{N}}}
\Psi\left\{
{\displaystyle\int\limits_{\mathbb{Q}_{p}^{N}}}
G\left(  x-y;m,\alpha\right)  f\left(  y\right)  d^{N}y\right\}
d^{N}x\right\}  .
\end{align*}

\end{proof}

By Proposition \ref{Prop3}, the associated coordinate process%
\[
\boldsymbol{\Phi}:\mathcal{H}_{\mathbb{R}}\left(  \infty\right)  \times\left(
\mathcal{H}_{\mathbb{R}}^{^{\ast}}\left(  \infty\right)  ,\mathcal{B}\right)
\rightarrow\mathbb{R}%
\]
given by $\boldsymbol{\Phi}\left(  f,T\right)  =\left\langle \mathcal{G}%
f,T\right\rangle $, $f\in\mathcal{H}_{\mathbb{R}}\left(  \infty\right)  $,
$T\in\left(  \mathcal{H}_{\mathbb{R}}^{^{\ast}}\left(  \infty\right)
,\mathcal{B}\right)  $, is a random field on $\left(  \mathcal{H}_{\mathbb{R}%
}^{^{\ast}}\left(  \infty\right)  ,\mathcal{B},\mathrm{P}_{\boldsymbol{\Phi}%
}\right)  $. In fact, $\boldsymbol{\Phi}$ is nothing but $\widetilde
{\mathcal{G}}\boldsymbol{\digamma}$ which is defined by%
\[
\widetilde{\mathcal{G}}\boldsymbol{\digamma}\left(  f,T\right)
=\boldsymbol{\digamma}\left(  \mathcal{G}f,T\right)  \text{, }f\in
\mathcal{H}_{\mathbb{R}}\left(  \infty\right)  \text{, }T\in\mathcal{H}%
_{\mathbb{R}}^{^{\ast}}\left(  \infty\right)  .
\]
It is useful to see $\boldsymbol{\Phi}$ as the unique solution, in law, of the
stochastic equation
\[
\left(  \boldsymbol{L}_{\alpha}+m^{2}\right)  \boldsymbol{\Phi}%
=\boldsymbol{\digamma},
\]
where $\left(  \boldsymbol{L}_{\alpha}+m^{2}\right)  \boldsymbol{\Phi}\left(
f,T\right)  :=\boldsymbol{\Phi}\left(  \left(  \boldsymbol{L}_{\alpha}%
+m^{2}\right)  f,T\right)  $, for $f\in\mathcal{H}_{\mathbb{R}}\left(
\infty\right)  $, $T\in\mathcal{H}_{\mathbb{R}}^{^{\ast}}\left(
\infty\right)  $. We note that the correctness of this last definition is a
consequence of Corollary \ref{Cor4} and Theorem \ref{Thm3} (ii).

\subsection{Symmetries}

Given a polynomial $\mathfrak{a}\left(  \xi\right)  \in\mathbb{Q}_{p}\left[
\xi_{1},\cdots,\xi_{n}\right]  $ and $\boldsymbol{g}\in GL_{N}\left(
\mathbb{Q}_{p}\right)  $, we say that $\boldsymbol{g}$ \textit{preserves}
$\mathfrak{a}$ if $\mathfrak{a}\left(  \xi\right)  =\mathfrak{a}\left(
\boldsymbol{g}\xi\right)  $, for all $\xi\in\mathbb{Q}_{p}^{N}$. By
simplicity, we use $\boldsymbol{g}x$ to mean $\left[  g_{ij}\right]  x^{T}$,
$x=\left(  x_{1},\cdots,x_{N}\right)  \in\mathbb{Q}_{p}^{N}$, where we
identify $\boldsymbol{g}$ with the matrix $\left[  g_{ij}\right]  $.

\begin{definition}
\label{def_6A}Let $\mathfrak{q}\left(  \xi\right)  $ be the elliptic quadratic
form used in the definition of the Fourier transform, see Section
\ref{Sect_Fourier_Trans}, and let $\mathfrak{l}\left(  \xi\right)  $ be the
elliptic polynomial that appears in the symbol of the operator $\boldsymbol{L}%
_{\alpha}$, see Section \ref{Sect4}. We define the homogeneous Euclidean group
of $\mathbb{Q}_{p}^{N}$ relative to $\mathfrak{q}\left(  \xi\right)  $\ and
$\mathfrak{l}\left(  \xi\right)  $, denoted as $E_{0}\left(  \mathbb{Q}%
_{p}^{N};\mathfrak{q},\mathfrak{l}\right)  :=E_{0}\left(  \mathbb{Q}_{p}%
^{N}\right)  $, as the subgroup of $GL_{N}\left(  \mathbb{Q}_{p}\right)  $
whose elements preserve $\mathfrak{q}\left(  \xi\right)  $\ and $\mathfrak{l}%
\left(  \xi\right)  $ simultaneously. We define the inhomogeneous Euclidean
group, denoted as $E\left(  \mathbb{Q}_{p}^{N};\mathfrak{q},\mathfrak{l}%
\right)  \allowbreak:=E\left(  \mathbb{Q}_{p}^{N}\right)  $, to be the group
of transformations of the form $\left(  a,\boldsymbol{g}\right)
x=a+\boldsymbol{g}x$, for $a,x\in\mathbb{Q}_{p}^{N}$, $\boldsymbol{g}\in
E_{0}\left(  \mathbb{Q}_{p}^{N}\right)  .$
\end{definition}

We notice that $E\left(  \mathbb{Q}_{p}^{N};\mathfrak{q},\mathfrak{l}\right)
$ preserves de Haar measure. Indeed, \ take $x=a+\boldsymbol{g}y$ with $g\in
E_{0}\left(  \mathbb{Q}_{p}^{N};\mathfrak{q},\mathfrak{l}\right)  \subset
O(\mathfrak{q})$, the orthogonal group of $\mathfrak{q}$, then $d^{N}%
x=d^{N}\left(  \boldsymbol{g}y\right)  =\left\vert \det\boldsymbol{g}%
\right\vert _{p}d^{N}y=d^{N}y$. In addition, it is not a straightforward
matter to decide whether or not $E_{0}\left(  \mathbb{Q}_{p}^{N}%
;\mathfrak{q},\mathfrak{l}\right)  $ is non trivial. We also notice that
$\left(  a,\boldsymbol{g}\right)  ^{-1}x=\boldsymbol{g}^{-1}\left(
x-a\right)  $.

Let $\left(  a,\boldsymbol{g}\right)  $ be a transformation in $E\left(
\mathbb{Q}_{p}^{N}\right)  $, the action of $\left(  a,\boldsymbol{g}\right)
$ on a function $f\in\mathcal{H}_{\mathbb{R}}\left(  \infty\right)  $ is
defined by%
\[
\left(  \left(  a,\boldsymbol{g}\right)  f\right)  \left(  x\right)  =f\left(
\left(  a,\boldsymbol{g}\right)  ^{-1}x\right)  \text{, for\ }x\in
\mathbb{Q}_{p}^{N},
\]
and on a functional $T\in\mathcal{H}_{\mathbb{R}}^{^{\ast}}\left(
\infty\right)  $, by%
\[
\left\langle f,\left(  a,\boldsymbol{g}\right)  T\right\rangle :=\left\langle
\left(  a,\boldsymbol{g}\right)  ^{-1}f,T\right\rangle \text{, for }%
f\in\mathcal{H}_{\mathbb{R}}\left(  \infty\right)  .
\]
The action on a random field $\boldsymbol{\Phi}$ is defined by%
\[
\left(  \left(  a,\boldsymbol{g}\right)  \boldsymbol{\Phi}\right)  \left(
f,T\right)  =\boldsymbol{\Phi}\left(  \left(  a,\boldsymbol{g}\right)
^{-1}f,T\right)  \text{, for }f\in\mathcal{H}_{\mathbb{R}}\left(
\infty\right)  \text{, }T\in\mathcal{H}_{\mathbb{R}}^{^{\ast}}\left(
\infty\right)  .
\]

\begin{definition}
\label{def_6}By Euclidean invariance of the random field $\boldsymbol{\Phi}$
we mean that the laws of $\boldsymbol{\Phi}$ and $\left(  a,\boldsymbol{g}%
\right)  \boldsymbol{\Phi}$ are the same for each\ $\left(  a,\boldsymbol{g}%
\right)  \in E\left(  \mathbb{Q}_{p}^{N}\right)  $, i.e. the probability
distributions of $\left\{  \boldsymbol{\Phi}\left(  f,\cdot\right)
:f\in\mathcal{H}_{\mathbb{R}}\left(  \infty\right)  \right\}  $ and $\left\{
\left(  \left(  a,\boldsymbol{g}\right)  \boldsymbol{\Phi}\right)  \left(
f,\cdot\right)  :f\in\mathcal{H}_{\mathbb{R}}\left(  \infty\right)  \right\}
$ coincide for each $\left(  a,\boldsymbol{g}\right)  \in E\left(
\mathbb{Q}_{p}^{N}\right)  $.
\end{definition}

We say that $\mathcal{G}$ is $\left(  a,\boldsymbol{g}\right)  $-invariant for
some $\left(  a,\boldsymbol{g}\right)  \in E\left(  \mathbb{Q}_{p}^{N}\right)
$, if $\left(  a,\boldsymbol{g}\right)  \mathcal{G}=\mathcal{G}\left(
a,\boldsymbol{g}\right)  $. If $\mathcal{G}$ is invariant under all $\left(
a,\boldsymbol{g}\right)  \in E\left(  \mathbb{Q}_{p}^{N}\right)  $, we say
that $\mathcal{G}$ is \textit{Euclidean invariant.}

\begin{remark}
\label{nota_Fourier}Let $f\in\mathcal{D}_{\mathbb{C}}\left(  \mathbb{Q}%
_{p}^{N}\right)  $ and let $\left(  a,\boldsymbol{g}\right)  \in E\left(
\mathbb{Q}_{p}^{N}\right)  $. Then
\begin{equation}
\mathcal{F}_{x\rightarrow\xi}\left[  f\left(  \left(  a,\boldsymbol{g}\right)
^{-1}x\right)  \right]  =\chi_{p}\left(  \mathfrak{B}\left(  a,\xi\right)
\right)  \mathcal{F}_{x\rightarrow\xi}\left[  f\right]  \left(  \boldsymbol{g}%
^{-1}\xi\right)  . \label{formula_F}%
\end{equation}

Indeed, by taking $\boldsymbol{g}^{-1}\left(  x-a\right)  x=y$,%
\[%
{\displaystyle\int\limits_{\mathbb{Q}_{p}^{N}}}
f\left(  \boldsymbol{g}^{-1}\left(  x-a\right)  x\right)  \chi_{p}\left(
\mathfrak{B}\left(  x,\xi\right)  \right)  d\mu(x)=\chi_{p}\left(
\mathfrak{B}\left(  a,\xi\right)  \right)
{\displaystyle\int\limits_{\mathbb{Q}_{p}^{N}}}
f\left(  y\right)  \chi_{p}\left(  \mathfrak{B}\left(  \boldsymbol{g}%
y,\xi\right)  \right)  d\mu(y),
\]
where $d\mu(y)=C(\mathfrak{q})d^{N}y$. The formula follows from%
\begin{align*}
\mathfrak{B}\left(  \boldsymbol{g}y,\xi\right)   &  =\frac{1}{2}\left\{
\mathfrak{q}\left(  \boldsymbol{g}y+\xi\right)  -\mathfrak{q}\left(
\boldsymbol{g}y\right)  -\mathfrak{q}\left(  \xi\right)  \right\} \\
&  =\frac{1}{2}\left\{  \mathfrak{q}\left(  y+\boldsymbol{g}^{-1}\xi\right)
-\mathfrak{q}\left(  y\right)  -\mathfrak{q}\left(  \boldsymbol{g}^{-1}%
\xi\right)  \right\}  =\mathfrak{B}\left(  y,\boldsymbol{g}^{-1}\xi\right)  .
\end{align*}

By the density of $\mathcal{D}_{\mathbb{C}}\left(  \mathbb{Q}_{p}^{N}\right)
$ in $\mathcal{D}_{\mathbb{C}}^{^{\ast}}\left(  \mathbb{Q}_{p}^{N}\right)  $,
formula (\ref{formula_F}) holds in $\mathcal{D}_{\mathbb{C}}^{^{\ast}}\left(
\mathbb{Q}_{p}^{N}\right)  $.
\end{remark}

\begin{lemma}
\label{lemma7}$\mathcal{G}$ is \textit{Euclidean invariant.}
\end{lemma}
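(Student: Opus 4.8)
The plan is to show that $\mathcal{G}$ commutes with the action of every $\left(a,\boldsymbol{g}\right)\in E\left(\mathbb{Q}_p^N\right)$ by reducing everything to the Fourier side, where $\mathcal{G}$ becomes multiplication by the symbol $\left(\left\vert\mathfrak{l}(\xi)\right\vert_p^{\alpha}+m^2\right)^{-1}$. First I would work on test functions $f\in\mathcal{D}_{\mathbb{C}}\left(\mathbb{Q}_p^N\right)$, which is dense in $\mathcal{H}_{\mathbb{C}}\left(\infty\right)$ (hence in $\mathcal{H}_{\mathbb{R}}\left(\infty\right)$), and invoke continuity of $\mathcal{G}$ (Corollary \ref{Cor4}) together with continuity of the action of $\left(a,\boldsymbol{g}\right)$ to pass to the whole space at the end.

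The key computation is as follows. By Remark \ref{nota_Fourier},
\[
\mathcal{F}_{x\rightarrow\xi}\left[\left(\left(a,\boldsymbol{g}\right)f\right)\right](\xi)=\chi_p\left(\mathfrak{B}\left(a,\xi\right)\right)\widehat{f}\left(\boldsymbol{g}^{-1}\xi\right).
\]
Since $\left(\mathcal{G}h\right)\widehat{\phantom{h}}(\xi)=\left(\left\vert\mathfrak{l}(\xi)\right\vert_p^{\alpha}+m^2\right)^{-1}\widehat{h}(\xi)$, applying $\mathcal{G}$ to $\left(a,\boldsymbol{g}\right)f$ gives, on the Fourier side,
\[
\frac{\chi_p\left(\mathfrak{B}\left(a,\xi\right)\right)\widehat{f}\left(\boldsymbol{g}^{-1}\xi\right)}{\left\vert\mathfrak{l}(\xi)\right\vert_p^{\alpha}+m^2}.
\]
On the other hand, $\left(a,\boldsymbol{g}\right)\mathcal{G}f$ has Fourier transform $\chi_p\left(\mathfrak{B}\left(a,\xi\right)\right)\left(\mathcal{G}f\right)\widehat{\phantom{f}}\left(\boldsymbol{g}^{-1}\xi\right)=\chi_p\left(\mathfrak{B}\left(a,\xi\right)\right)\left(\left\vert\mathfrak{l}\left(\boldsymbol{g}^{-1}\xi\right)\right\vert_p^{\alpha}+m^2\right)^{-1}\widehat{f}\left(\boldsymbol{g}^{-1}\xi\right)$. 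The two expressions agree precisely because $\boldsymbol{g}\in E_0\left(\mathbb{Q}_p^N;\mathfrak{q},\mathfrak{l}\right)$ preserves $\mathfrak{l}$, so $\left\vert\mathfrak{l}\left(\boldsymbol{g}^{-1}\xi\right)\right\vert_p=\left\vert\mathfrak{l}(\xi)\right\vert_p$ for all $\xi$. Applying $\mathcal{F}^{-1}$ (and using that $\mathcal{F}$ is an isomorphism, so that the equality of Fourier transforms forces equality of the functions) yields $\left(a,\boldsymbol{g}\right)\mathcal{G}f=\mathcal{G}\left(a,\boldsymbol{g}\right)f$ for every $f\in\mathcal{D}_{\mathbb{C}}\left(\mathbb{Q}_p^N\right)$.

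Finally I would extend from test functions to all of $\mathcal{H}_{\mathbb{R}}\left(\infty\right)$: given $f\in\mathcal{H}_{\mathbb{R}}\left(\infty\right)$, pick $f_n\in\mathcal{D}_{\mathbb{R}}\left(\mathbb{Q}_p^N\right)$ with $f_n\to f$ in $d$ (Lemma \ref{lemma2}(i)); the action of $\left(a,\boldsymbol{g}\right)$ is an isometry in each norm $\left\Vert\cdot\right\Vert_l$ (by the change of variables $y=\boldsymbol{g}^{-1}(x-a)$ together with the fact that $\left(a,\boldsymbol{g}\right)$ preserves the Haar measure and, on the Fourier side, that $\left[\max\left(1,\left\Vert\xi\right\Vert_p\right)\right]^{2\alpha l}$ is $\boldsymbol{g}$-invariant since $\boldsymbol{g}\in O(\mathfrak{q})$ preserves $\left\Vert\cdot\right\Vert_p$), hence continuous on $\left(\mathcal{H}_{\mathbb{R}}\left(\infty\right),d\right)$; and $\mathcal{G}$ is continuous by Corollary \ref{Cor4}. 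Thus both $\left(a,\boldsymbol{g}\right)\mathcal{G}f_n\to\left(a,\boldsymbol{g}\right)\mathcal{G}f$ and $\mathcal{G}\left(a,\boldsymbol{g}\right)f_n\to\mathcal{G}\left(a,\boldsymbol{g}\right)f$, and the identity on test functions passes to the limit. The main obstacle, such as it is, is bookkeeping: one must check that the character factor $\chi_p\left(\mathfrak{B}\left(a,\cdot\right)\right)$ produced by the translation $a$ truly cancels on both sides — which it does automatically because it appears identically in both, being untouched by the multiplication operator — and that the invariance $\left\Vert\boldsymbol{g}^{-1}\xi\right\Vert_p=\left\Vert\xi\right\Vert_p$ holds, which follows from $E_0\subset O(\mathfrak{q})$ as already noted after Definition \ref{def_6A}.
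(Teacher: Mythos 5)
Your proposal is correct and follows essentially the same route as the paper: reduce to test functions via density of $\mathcal{D}_{\mathbb{R}}(\mathbb{Q}_{p}^{N})$ and continuity of $\mathcal{G}$ and of the $\left(a,\boldsymbol{g}\right)$-action, then verify $\left(a,\boldsymbol{g}\right)\mathcal{G}f=\mathcal{G}\left(a,\boldsymbol{g}\right)f$ on the Fourier side using Remark \ref{nota_Fourier} and the fact that $\boldsymbol{g}$ preserves $\mathfrak{l}$, so that $\left\vert\mathfrak{l}\left(\boldsymbol{g}^{-1}\xi\right)\right\vert_{p}=\left\vert\mathfrak{l}\left(\xi\right)\right\vert_{p}$. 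The only quibble is your claim that the action is an isometry in each $\left\Vert\cdot\right\Vert_{l}$ (equivalence of the norms, via $C_{0}\left\Vert\xi\right\Vert_{p}^{2}\leq\left\vert\mathfrak{q}\left(\boldsymbol{g}^{-1}\xi\right)\right\vert_{p}=\left\vert\mathfrak{q}\left(\xi\right)\right\vert_{p}\leq C_{1}\left\Vert\xi\right\Vert_{p}^{2}$, is what one actually gets and is all that is needed for continuity), but this does not affect the argument.
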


\begin{proof}
We first notice that the mapping $f\left(  x\right)  \rightarrow f\left(
\left(  a,\boldsymbol{g}\right)  ^{-1}x\right)  $ is continuos from
$\mathcal{H}_{\mathbb{R}}\left(  \infty\right)  $ into $\mathcal{H}%
_{\mathbb{R}}\left(  \infty\right)  $, for any $\left(  a,\boldsymbol{g}%
\right)  \in E\left(  \mathbb{Q}_{p}^{N}\right)  $, then $\mathcal{G}\left(
a,\boldsymbol{g}\right)  $, $\left(  a,\boldsymbol{g}\right)  \mathcal{G}%
:\mathcal{H}_{\mathbb{R}}\left(  \infty\right)  \rightarrow\mathcal{H}%
_{\mathbb{R}}\left(  \infty\right)  $ are continuous, cf. Corollary
\ref{Cor4}, and in order to show that%
\[
\left(  \left(  a,\boldsymbol{g}\right)  \mathcal{G}\right)  \left(  f\right)
=\left(  \mathcal{G}\left(  a,\boldsymbol{g}\right)  \right)  \left(
f\right)  ,
\]
it is sufficient to take $f\in\mathcal{D}_{\mathbb{R}}\left(  \mathbb{Q}%
_{p}^{N}\right)  $. Now, since%
\[
\left(  \left(  a,\boldsymbol{g}\right)  \mathcal{G}\right)  \left(  f\right)
\left(  x\right)  =\left(  a,\boldsymbol{g}\right)  \left(  G\left(
x;m,\alpha\right)  \ast f\left(  x\right)  \right)  =\left(  G\ast f\right)
\left(  \left(  a,\boldsymbol{g}\right)  ^{-1}x\right)  ,
\]
and
\[
\left(  \mathcal{G}\left(  a,\boldsymbol{g}\right)  \right)  \left(  f\right)
\left(  x\right)  =\mathcal{G}\left(  f\left(  \left(  a,\boldsymbol{g}%
\right)  ^{-1}x\right)  \right)  =G\left(  x;m,\alpha\right)  \ast f\left(
\left(  a,\boldsymbol{g}\right)  ^{-1}x\right)  ,
\]
we should show that $\left(  G\ast f\right)  \left(  \left(  a,\boldsymbol{g}%
\right)  ^{-1}x\right)  =G\left(  x;m,\alpha\right)  \ast f\left(  \left(
a,\boldsymbol{g}\right)  ^{-1}x\right)  $. We establish this formula in
$\mathcal{D}_{\mathbb{C}}^{^{\ast}}\left(  \mathbb{Q}_{p}^{N}\right)  $ by
using the Fourier transform. Indeed,%
\begin{multline*}
\mathcal{F}\left[  \left(  G\ast f\right)  \left(  \left(  a,\boldsymbol{g}%
\right)  ^{-1}x\right)  \right]  =\chi_{p}\left(  \mathfrak{B}\left(
a,\xi\right)  \right)  \mathcal{F}\left[  \left(  G\ast f\right)  \right]
\left(  \boldsymbol{g}^{-1}\xi\right)  =\\
\chi_{p}\left(  \mathfrak{B}\left(  a,\xi\right)  \right)  \frac
{\mathcal{F}\left[  f\right]  \left(  \boldsymbol{g}^{-1}\xi\right)
}{\left\vert \mathfrak{l}\left(  \boldsymbol{g}^{-1}\xi\right)  \right\vert
_{p}^{\alpha}+m^{2}}=\chi_{p}\left(  \mathfrak{B}\left(  a,\xi\right)
\right)  \frac{\mathcal{F}\left[  f\right]  \left(  \boldsymbol{g}^{-1}%
\xi\right)  }{\left\vert \mathfrak{l}\left(  \xi\right)  \right\vert
_{p}^{\alpha}+m^{2}},
\end{multline*}
and
\begin{multline*}
\mathcal{F}\left[  G\left(  x;m,\alpha\right)  \ast f\left(  \boldsymbol{g}%
^{-1}x\right)  \right]  =\frac{1}{\left\vert \mathfrak{l}\left(  \xi\right)
\right\vert _{p}^{\alpha}+m^{2}}\mathcal{F}_{x\rightarrow\xi}\left[  f\left(
\boldsymbol{g}^{-1}x\right)  \right] \\
=\chi_{p}\left(  \mathfrak{B}\left(  a,\xi\right)  \right)  \frac
{\mathcal{F}\left[  f\right]  \left(  \boldsymbol{g}^{-1}\xi\right)
}{\left\vert \mathfrak{l}\left(  \xi\right)  \right\vert _{p}^{\alpha}+m^{2}},
\end{multline*}
cf. Remark \ref{nota_Fourier}.
\end{proof}

\begin{proposition}
\label{Prop4}The random field $\boldsymbol{\Phi}=$\ $\widetilde{\mathcal{G}%
}\boldsymbol{\digamma}$ is Euclidean invariant.
\end{proposition}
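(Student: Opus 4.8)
The plan is to reduce Euclidean invariance of $\boldsymbol{\Phi}$ to an identity between characteristic functionals, and then to derive that identity from the Euclidean invariance of the Green operator $\mathcal{G}$ (Lemma \ref{lemma7}) together with the invariance of the Haar measure under $E(\mathbb{Q}_p^N)$. Since $\boldsymbol{\Phi}$ is almost surely linear in its first argument (property (i) of Definition \ref{def3}), for any $f_1,\dots,f_n\in\mathcal{H}_{\mathbb{R}}(\infty)$ and $t_1,\dots,t_n\in\mathbb{R}$ one has $\sum_{j}t_j\boldsymbol{\Phi}(f_j,\cdot)=\boldsymbol{\Phi}(\sum_{j}t_j f_j,\cdot)$ $\mathrm{P}_{\boldsymbol{\Phi}}$-a.s.; hence all finite-dimensional distributions of $\{\boldsymbol{\Phi}(f,\cdot)\}_f$, and likewise of $\{((a,\boldsymbol{g})\boldsymbol{\Phi})(f,\cdot)\}_f$, are determined by the corresponding characteristic functional $f\mapsto\int_{\mathcal{H}_{\mathbb{R}}^{\ast}(\infty)}e^{i\boldsymbol{\Phi}(f,T)}\,d\mathrm{P}_{\boldsymbol{\Phi}}(T)$, which is computed in Proposition \ref{Prop3}. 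Thus it suffices to prove, for every $(a,\boldsymbol{g})\in E(\mathbb{Q}_p^N)$ and every $f\in\mathcal{H}_{\mathbb{R}}(\infty)$,
\[
\int_{\mathcal{H}_{\mathbb{R}}^{\ast}(\infty)}e^{i\left((a,\boldsymbol{g})\boldsymbol{\Phi}\right)(f,T)}\,d\mathrm{P}_{\boldsymbol{\Phi}}(T)=\int_{\mathcal{H}_{\mathbb{R}}^{\ast}(\infty)}e^{i\boldsymbol{\Phi}(f,T)}\,d\mathrm{P}_{\boldsymbol{\Phi}}(T).
\]

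By the definition of the action on random fields the left-hand side is $\int e^{i\boldsymbol{\Phi}((a,\boldsymbol{g})^{-1}f,T)}\,d\mathrm{P}_{\boldsymbol{\Phi}}(T)$, and $(a,\boldsymbol{g})^{-1}f\in\mathcal{H}_{\mathbb{R}}(\infty)$ because the map $h\mapsto h\circ(a,\boldsymbol{g})$ is continuous on $\mathcal{H}_{\mathbb{R}}(\infty)$, as recorded in the proof of Lemma \ref{lemma7}. Applying Proposition \ref{Prop3} on both sides, the required equality becomes
\[
\int_{\mathbb{Q}_p^N}\Psi\left(\left(\mathcal{G}\left((a,\boldsymbol{g})^{-1}f\right)\right)(x)\right)d^N x=\int_{\mathbb{Q}_p^N}\Psi\left((\mathcal{G}f)(x)\right)d^N x.
\]
Now I would invoke Lemma \ref{lemma7} applied to $(a,\boldsymbol{g})^{-1}$: it gives $\mathcal{G}\left((a,\boldsymbol{g})^{-1}f\right)=(a,\boldsymbol{g})^{-1}(\mathcal{G}f)$, so $\left(\mathcal{G}\left((a,\boldsymbol{g})^{-1}f\right)\right)(x)=(\mathcal{G}f)\left((a,\boldsymbol{g})x\right)=(\mathcal{G}f)(a+\boldsymbol{g}x)$ for every $x\in\mathbb{Q}_p^N$. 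Substituting this into the left integral and changing variables $u=a+\boldsymbol{g}x$ finishes the proof: since $\boldsymbol{g}\in E_0(\mathbb{Q}_p^N)\subset O(\mathfrak{q})$ we have $|\det\boldsymbol{g}|_p=1$, hence $d^N u=d^N x$ and the two integrals agree. The substitution is legitimate because $\mathcal{G}f\in\mathcal{H}_{\mathbb{R}}(\infty)\subset L^1\cap L^2$ by Corollary \ref{Cor4} and Theorem \ref{Thm2}(ii), so $\int_{\mathbb{Q}_p^N}\Psi((\mathcal{G}f)(x))\,d^N x$ converges absolutely by Lemma \ref{lemma6A}.

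An equivalent route, which I would mention, is to first note that the white noise $\boldsymbol{\digamma}$ is itself Euclidean invariant: its characteristic functional $L(f)=\exp\{\int_{\mathbb{Q}_p^N}\Psi(f(x))\,d^N x\}$ from Theorem \ref{Thm5} satisfies $L((a,\boldsymbol{g})^{-1}f)=L(f)$ by the same change of variables together with the $E(\mathbb{Q}_p^N)$-invariance of the Haar measure noted after Definition \ref{def_6A}; the invariance of $\boldsymbol{\Phi}=\widetilde{\mathcal{G}}\boldsymbol{\digamma}$ then follows because $\mathcal{G}$ commutes with the $E(\mathbb{Q}_p^N)$-action (Lemma \ref{lemma7}). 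I do not expect a genuine obstacle here: the substantive content is Lemma \ref{lemma7}, which is already proved, plus the elementary Haar-measure invariance. The only steps requiring some care are the passage from equality of laws to equality of characteristic functionals --- handled above via the almost-sure linearity of $\boldsymbol{\Phi}$ and the Bochner-Minlos correspondence underlying Theorem \ref{Thm5} --- and the absolute convergence of $\int\Psi(\cdot)\,d^N x$ that justifies the change of variables inside it.
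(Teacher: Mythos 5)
Your proposal is correct and follows essentially the same route as the paper: reduce Euclidean invariance to the identity $C_{\boldsymbol{\Phi}}(f)=C_{(a,\boldsymbol{g})\boldsymbol{\Phi}}(f)$ of characteristic functionals via Bochner--Minlos, compute both sides with Proposition \ref{Prop3}, commute $\mathcal{G}$ past the group action using Lemma \ref{lemma7}, and conclude by the $E(\mathbb{Q}_p^N)$-invariance of the Haar measure. Your additional remarks on the absolute convergence justifying the change of variables and on the alternative route through the invariance of $\boldsymbol{\digamma}$ are consistent with, but not needed beyond, the paper's argument.
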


\begin{proof}
By Bochner-Minlos Theorem, it is sufficient to show that%
\begin{equation}
C_{\boldsymbol{\Phi}}\left(  f\right)  =C_{\left(  a,\boldsymbol{g}\right)
\boldsymbol{\Phi}}\left(  f\right)  \text{, for }f\in\mathcal{H}_{\mathbb{R}%
}\left(  \infty\right)  \text{ and for every }\left(  a,\boldsymbol{g}\right)
\in E\left(  \mathbb{Q}_{p}^{N}\right)  . \label{C_phi_Invariant}%
\end{equation}
Indeed,%
\begin{multline*}
C_{\left(  a,\boldsymbol{g}\right)  \boldsymbol{\Phi}}\left(  f\right)  =%
{\displaystyle\int\limits_{\mathcal{H}_{\mathbb{R}}^{^{\ast}}(\infty)}}
e^{i\left(  \left(  a,\boldsymbol{g}\right)  \boldsymbol{\Phi}\right)  \left(
f,T\right)  }d\mathrm{P}_{\boldsymbol{\Phi}}\left(  T\right)  =%
{\displaystyle\int\limits_{\mathcal{H}_{\mathbb{R}}^{^{\ast}}(\infty)}}
e^{i\left\langle \left(  a,\boldsymbol{g}\right)  ^{-1}f,\ \widetilde
{\mathcal{G}}T\right\rangle }d\mathrm{P}_{\Psi}\left(  T\right) \\
=%
{\displaystyle\int\limits_{\mathcal{H}_{\mathbb{R}}^{^{\ast}}(\infty)}}
e^{i\left\langle \mathcal{G}\left(  \left(  a,\boldsymbol{g}\right)
^{-1}f\right)  ,\ T\right\rangle }d\mathrm{P}_{\Psi}\left(  T\right)
=\exp\left\{
{\displaystyle\int\limits_{\mathbb{Q}_{p}^{N}}}
\Psi\left(  \mathcal{G}\left(  \left(  a,\boldsymbol{g}\right)  ^{-1}f\right)
\left(  x\right)  \right)  d^{N}x\right\}  ,
\end{multline*}
cf. Theorem \ref{Thm5} and Corollary \ref{Cor4}. Since $\mathcal{G}\left(
\left(  a,\boldsymbol{g}\right)  ^{-1}f\right)  \left(  x\right)  =\left(
a,\boldsymbol{g}\right)  ^{-1}\left(  \mathcal{G}\left(  f\right)  \right)
\left(  x\right)  $ (cf. Lemma \ref{lemma7}), and $\left(  a,\boldsymbol{g}%
\right)  $ preserves $d^{N}x$, we have%
\begin{align*}
C_{\left(  a,\boldsymbol{g}\right)  \boldsymbol{\Phi}}\left(  f\right)   &
=\exp\left\{
{\displaystyle\int\limits_{\mathbb{Q}_{p}^{N}}}
\Psi\left(  \mathcal{G}\left(  f\right)  \left(  \left(  a,\boldsymbol{g}%
\right)  x\right)  \right)  d^{N}x\right\}  =\exp\left\{
{\displaystyle\int\limits_{\mathbb{Q}_{p}^{N}}}
\Psi\left(  \mathcal{G}\left(  f\right)  \left(  x\right)  \right)
d^{N}x\right\} \\
&  =C_{\boldsymbol{\Phi}}\left(  f\right)  ,
\end{align*}
cf. Proposition \ref{Prop3}.
\end{proof}

\subsection{Some additional remarks and examples}

In the Archimedean case the symmetric bilinear form used in the definition of
the Fourier transform has associated a quadratic form which is exactly the
symbol of the Laplacian when it is considered as a pseudodifferential
operator. This approach cannot be carried out in the $p$-adic setting. Indeed,
the quadratic form $\xi_{1}^{2}+\cdots+\xi_{N}^{2}$ associated to the bilinear
form $\sum_{i}\xi_{i}x_{i}$ does not give rise to an elliptic operator if
$N\geq5$. This is the reason why in the $p$-adic setting we need two different
polynomials $\mathfrak{q}\left(  \xi\right)  $ and $\mathfrak{l}\left(
\xi\right)  $. In order to have a `non-trivial' group of symmetries, i.e.
$E_{0}\left(  \mathbb{Q}_{p}^{N},\mathfrak{q},\mathfrak{l}\right)  \neq1$, the
polynomials $\mathfrak{q}\left(  \xi\right)  $ and $\mathfrak{l}\left(
\xi\right)  $ should be related `nicely'. To illustrate this idea we give two examples.

\begin{example}
\label{example1}In the case $N=4$ there is a unique elliptic quadratic form,
up to linear equivalence, which is $\mathfrak{l}_{4}\left(  \xi\right)
=\xi_{1}^{2}-s\xi_{2}^{2}-p\xi_{3}^{2}+s\xi_{4}^{2}$, where $s\in
\mathbb{Z\smallsetminus}\left\{  0\right\}  $ is a quadratic non-residue, i.e.
$\left(  \frac{s}{p}\right)  =-1$. We take $\mathfrak{q}_{4}\left(
\xi\right)  =\mathfrak{l}_{4}\left(  \xi\right)  $, i.e. $\mathfrak{B}%
_{4}\left(  x,\xi\right)  =\xi_{1}x_{1}-s\xi_{2}x_{2}-p\xi_{3}x_{3}+s\xi
_{4}x_{4}$. In this case, $E_{0}\left(  \mathbb{Q}_{p}^{4},\mathfrak{l}%
_{4},\mathfrak{l}_{4}\right)  $ equals%
\[
O\left(  \mathfrak{l}_{4}\right)  =\left\{  \boldsymbol{g}\in GL_{4}\left(
\mathbb{Q}_{p}\right)  :\boldsymbol{g}^{T}\text{{\small $\left[
\begin{array}
[c]{cccc}%
1 & 0 & 0 & 0\\
0 & -s & 0 & 0\\
0 & 0 & -p & 0\\
0 & 0 & 0 & s
\end{array}
\right]  $}}\boldsymbol{g=}{\small \left[
\begin{array}
[c]{cccc}%
1 & 0 & 0 & 0\\
0 & -s & 0 & 0\\
0 & 0 & -p & 0\\
0 & 0 & 0 & s
\end{array}
\right]  }\right\}  ,
\]
the orthogonal group of $\mathfrak{l}_{4}$.
\end{example}

\begin{example}
\label{example2}Take $N=5$, $\mathfrak{B}_{5}\left(  x,\xi\right)  =\xi
_{1}x_{1}-s\xi_{2}x_{2}-p\xi_{3}x_{3}+s\xi_{4}x_{4}+\xi_{5}x_{5}$, and
$\mathfrak{l}_{5}\left(  \xi\right)  =\left(  \xi_{1}^{2}-s\xi_{2}^{2}%
-p\xi_{3}^{2}+s\xi_{4}^{2}\right)  ^{2}-\tau\xi_{5}^{4}$, where $s$ is as in
Example \ref{example2}\ and \ $\tau\notin\left[  \mathbb{Q}_{p}^{\times
}\right]  ^{2}$, where $\left[  \mathbb{Q}_{p}^{\times}\right]  ^{2}$\ denotes
the group of squares of $\mathbb{Q}_{p}^{\times}$. We notice that%
\[
\left\{  \left[
\begin{array}
[c]{cc}%
\boldsymbol{g} & \boldsymbol{0}\\
\boldsymbol{0} & \left[  1\right]  _{1\times1}%
\end{array}
\right]  _{5\times5}\in GL_{5}\left(  \mathbb{Q}_{p}\right)  :\boldsymbol{g}%
\in O\left(  \mathfrak{l}_{4}\right)  \right\}  \subseteq E_{0}\left(
\mathbb{Q}_{p}^{5},\mathfrak{q}_{5},\mathfrak{l}_{5}\right)  .
\]

\end{example}

\begin{example}
\label{example3}Take $\mathfrak{B}\left(  x,\xi\right)  =\sum_{i=1}^{N}\xi
_{i}x_{i}$. Then the theory developed so far can be applied to
pseudodifferential operators of type%
\[
g\rightarrow\mathcal{F}^{-1}\left(  \left\Vert \xi\right\Vert _{p}^{\alpha
}\mathcal{F}g\right)  \text{, or }g\rightarrow\mathcal{F}^{-1}\left(  \left(
\sum_{i=1}^{N}\left\vert \xi_{i}\right\vert _{p}\right)  ^{\alpha}%
\mathcal{F}g\right)  \text{, with }\alpha>0.
\]
We notice that the group of permutations of the variables $\xi_{1},\cdots
,\xi_{N}$ preserve $\left\Vert \xi\right\Vert _{p}^{\alpha}$ and $\left(
\sum_{i=1}^{N}\left\vert \xi_{i}\right\vert _{p}\right)  ^{\alpha}$.
\end{example}

\section{\label{Sect7}Schwinger Functions}

\begin{remark}
\textbf{Gâteaux derivatives. }Let $y\in\mathcal{H}_{\mathbb{R}}(\infty)$ be
fixed. Assume that $\boldsymbol{F}$:$\mathcal{H}_{\mathbb{R}}(\infty
)\rightarrow\mathbb{R}$. For $x\in\mathcal{H}_{\mathbb{R}}(\infty)$ consider
$\lambda\rightarrow\boldsymbol{F}\left(  x+\lambda y\right)  $ on $\mathbb{R}%
$. If this function is differentiable at $\lambda=0$, we say that
$\boldsymbol{F}$ is Gâteaux differentiable at $x$ in the direction $y$ and
denote $D_{y}F\left(  x\right)  :=\frac{d}{d\lambda}\boldsymbol{F}\left(
x+\lambda y\right)  \mid_{\lambda=0}$. $D_{y}$ acts linearly and admits the
usual chain and product rules. We define \textit{the partial derivative of the
functional }$\boldsymbol{F}$\textit{ with respect to }$g\in\mathcal{H}%
_{\mathbb{R}}(\infty)$ as
\begin{equation}
\frac{\partial}{\partial g}\boldsymbol{F}=D_{g}\boldsymbol{F}.
\label{derivative}%
\end{equation}

\end{remark}

\begin{definition}
Let $g_{1},\cdots,g_{m}\in\mathcal{H}_{\mathbb{R}}(\infty)$. We define
$M_{m}^{\digamma}$, the $m-th$ moment of of the random field $\digamma$, by%
\[
M_{m}^{\digamma}\left(  g_{1}\otimes\cdots\otimes g_{m}\right)  =%
{\displaystyle\int\limits_{\mathcal{H}_{\mathbb{R}}^{^{\ast}}(\infty)}}
\left\langle g_{1},T\right\rangle \cdots\left\langle g_{m},T\right\rangle
d\mathrm{P}_{\Psi}\left(  T\right)  \text{, }m\in\mathbb{N}\smallsetminus
\left\{  0\right\}  ,\text{ }%
\]
and $M_{0}^{\digamma}:=1$.
\end{definition}

\begin{lemma}
\label{lemma8}(i) Let%
\[%
\begin{array}
[c]{cccc}%
C_{\digamma}^{T}: & \mathcal{H}_{\mathbb{R}}(\infty) & \rightarrow &
\mathbb{R}\\
& g & \rightarrow & \int_{\mathbb{Q}_{p}^{N}}\Psi\left(  g\left(  x\right)
\right)  d^{N}x.
\end{array}
\]
Then partial derivatives of all orders of $C_{\digamma}^{T}$ exist everywhere
on $\mathcal{H}_{\mathbb{R}}(\infty)$. For $g_{1},\cdots,g_{m}\in
\mathcal{H}_{\mathbb{R}}(\infty)$, we have
\[
\frac{1}{i^{m}}\frac{\partial^{m}}{\partial g_{m}\cdots\partial g_{1}%
}C_{\digamma}^{T}\mid_{0}=c_{m}%
{\displaystyle\int\limits_{\mathbb{Q}_{p}^{N}}}
g_{1}\cdots g_{m}d^{N}x,
\]
where%
\begin{equation}%
\begin{array}
[c]{lll}%
c_{1}:=a+%
{\displaystyle\int\limits_{\mathbb{R}\smallsetminus\left\{  0\right\}  }}
\frac{s^{3}}{1+s^{2}}dM\left(  s\right)  , &  & c_{2}:=\sigma^{2}+%
{\displaystyle\int\limits_{\mathbb{R}\smallsetminus\left\{  0\right\}  }}
s^{2}dM\left(  s\right)  ,\\
&  & \\
c_{m}:=%
{\displaystyle\int\limits_{\mathbb{R}\smallsetminus\left\{  0\right\}  }}
s^{m}dM\left(  s\right)  , & \text{for }m\geq3. &
\end{array}
\label{C_s}%
\end{equation}

(ii) Take $C_{\digamma}=\exp C_{\digamma}^{T}$. Then $C_{\digamma}$ has
partial derivatives of any order, and the moments of $\digamma$ satisfy%
\begin{equation}
M_{m}^{\digamma}\left(  g_{1}\otimes\cdots\otimes g_{m}\right)  =\frac
{1}{i^{m}}\frac{\partial^{m}}{\partial g_{m}\cdots\partial g_{1}}C_{\digamma
}\mid_{0}. \label{Moment_1}%
\end{equation}

\end{lemma}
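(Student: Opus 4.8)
The plan is to treat part (i) first, then deduce part (ii) by a standard moment-generating-function argument. For part (i), the main point is to differentiate under the integral sign the functional $C_{\digamma}^{T}(g)=\int_{\mathbb{Q}_{p}^{N}}\Psi(g(x))\,d^{N}x$, where $\Psi$ is given by \eqref{Psi}. First I would fix $g_{1},\dots,g_{m}\in\mathcal{H}_{\mathbb{R}}(\infty)$ and, for real parameters $\lambda_{1},\dots,\lambda_{m}$, consider the function $(\lambda_{1},\dots,\lambda_{m})\mapsto\int_{\mathbb{Q}_{p}^{N}}\Psi(g(x)+\sum_{j}\lambda_{j}g_{j}(x))\,d^{N}x$. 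The Gâteaux derivatives in \eqref{derivative} are the iterated $\partial/\partial\lambda_{j}$ evaluated at $0$. Splitting $\Psi=iat-\tfrac{\sigma^{2}t^{2}}{2}+\int_{\mathbb{R}\setminus\{0\}}(e^{ist}-1-\tfrac{ist}{1+s^{2}})\,dM(s)$, the linear and quadratic terms are handled by elementary calculus; for the L\'evy part one differentiates $t\mapsto e^{ist}-1-\tfrac{ist}{1+s^{2}}$, whose $k$-th $t$-derivative is $(is)^{k}e^{ist}$ for $k\ge2$, equals $ise^{ist}-\tfrac{is}{1+s^{2}}$ for $k=1$. Evaluating at $t=0$ yields the coefficients $c_{m}$ in \eqref{C_s}: the $1/(1+s^{2})$ correction contributes to $c_{1}$ (combining with $a$), the quadratic coefficient contributes $\sigma^{2}$ to $c_{2}$, and for $m\ge3$ only the moment $\int s^{m}\,dM(s)$ survives.

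The technical heart is justifying the interchange of differentiation and the $x$-integration, and of differentiation and the $s$-integration inside $\Psi$. For the $x$-integral I would use that every $f\in\mathcal{H}_{\mathbb{R}}(\infty)$ lies in $L^{1}\cap L^{\infty}$ (Theorem \ref{Thm2}(ii)), so that products $g_{1}\cdots g_{m}$ are in $L^{1}$, and bound the difference quotients by an $L^{1}$ majorant independent of the parameter, invoking the Dominated Convergence Theorem exactly as in the proof of Lemma \ref{lemma6}; the uniform bounds $\|g+\sum\lambda_{j}g_{j}\|_{\infty}\le C$ for $|\lambda_{j}|\le1$ give the needed domination. For the $s$-integral I would use the elementary estimates $|e^{ist}-1|\le|st|$ and $|e^{ist}-1-ist|\le\tfrac{1}{2}s^{2}t^{2}$ together with the hypothesis (in the Remark before Theorem \ref{Thm5}) that $M$ has all finite moments, so each $\int s^{m}\,dM(s)$ is finite and the dominating functions are integrable. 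I expect this domination bookkeeping — producing, for each order $m$, a single $L^{1}(\mathbb{Q}_{p}^{N})\otimes L^{1}(dM)$ majorant valid on a neighborhood of $0$ — to be the main obstacle, though it is routine once the $L^{1}$-membership from Theorem \ref{Thm2} is in hand.

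For part (ii), set $C_{\digamma}=\exp C_{\digamma}^{T}$. By Theorem \ref{Thm5}, $C_{\digamma}(g)=\int_{\mathcal{H}_{\mathbb{R}}^{\ast}(\infty)}e^{i\langle g,T\rangle}\,d\mathrm{P}_{\Psi}(T)$ is the Fourier transform of $\mathrm{P}_{\Psi}$. Since $\mathrm{P}_{\Psi}$ has all finite absolute moments (again by the standing hypothesis on $M$, which forces all moments of the infinitely divisible law, hence all moments of the field $\digamma$ tested against elements of $\mathcal{H}_{\mathbb{R}}(\infty)$, to be finite), the usual differentiation-under-the-integral argument applies to the exponential form $e^{i\langle g,T\rangle}$: partial derivatives of $C_{\digamma}$ of all orders exist, and
\[
\frac{1}{i^{m}}\frac{\partial^{m}}{\partial g_{m}\cdots\partial g_{1}}C_{\digamma}\Big|_{0}=\int_{\mathcal{H}_{\mathbb{R}}^{\ast}(\infty)}\langle g_{1},T\rangle\cdots\langle g_{m},T\rangle\,d\mathrm{P}_{\Psi}(T)=M_{m}^{\digamma}(g_{1}\otimes\cdots\otimes g_{m}),
\]
since $\tfrac{1}{i}\partial_{\lambda_{j}}e^{i\langle g+\sum\lambda_{k}g_{k},T\rangle}|_{0}$ brings down a factor $\langle g_{j},T\rangle$, and the domination is provided by $\prod_{j}|\langle g_{j},T\rangle|$, which is $\mathrm{P}_{\Psi}$-integrable by H\"older together with finiteness of all moments. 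This establishes \eqref{Moment_1} and completes the argument. (The existence of all moments itself should be recorded as a preliminary step: it follows from the fact that all moments of $M$ are finite, cf.\ \cite[Theorem 2.3]{Albeverio-et-al-4}, exactly as in the Archimedean setting.)
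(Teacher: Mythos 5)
Your proposal is correct and follows essentially the same route as the paper: for part (i) the paper simply defers to Lemma 3.1 of \cite{Albeverio-et-al-3}, which carries out exactly the term-by-term differentiation of $\Psi$ and the domination bookkeeping you describe (and your computation of the coefficients $c_{m}$, including the $s^{3}/(1+s^{2})$ contribution to $c_{1}$, matches \eqref{C_s}); for part (ii) the paper likewise differentiates $\int e^{i\left\langle g,T\right\rangle }\mathrm{dP}_{\Psi}$ under the integral sign via dominated convergence. If anything, your justification of the interchange in (ii) by the majorant $\prod_{j}\left\vert \left\langle g_{j},T\right\rangle \right\vert $ together with finiteness of all moments is more precise than the paper's one-line appeal to $\left\vert e^{i\left\langle g,T\right\rangle }\right\vert =1$.
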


\begin{proof}
(i) See the proof of Lemma 3.1 in \cite{Albeverio-et-al-3}. (ii) The formula
follows from Theorem \ref{Thm5} and (i), by
\[
\frac{\partial^{m}}{\partial g_{m}\cdots\partial g_{1}}%
{\textstyle\int\limits_{\mathcal{H}_{\mathbb{R}}^{^{\ast}}\left(
\infty\right)  }}
e^{i\left\langle g,T\right\rangle }\mathrm{dP}_{\Psi}\left(  T\right)
=i^{m}M_{m}^{\digamma}\left(  g_{1}\otimes\cdots\otimes g_{m}\right)  .
\]
This last formula follows from the Dominated Convergence Theorem by using that
$\left\vert e^{i\left\langle g,T\right\rangle }\right\vert =1$ and that
$\mathrm{P}_{\Psi}$ is a probability measure.
\end{proof}

\begin{lemma}
[{\cite[Corollary 3.5]{Albeverio-et-al-3}}]\label{Corollary 3.5}Let $V$ be a
vector space. Let $g:V\rightarrow\mathbb{C}$ be \ infinitely often
differentiable on $V$, in the sense of (\ref{derivative}), such that $g(0)=0$.
Let $f$ the exponential function. Then $f^{\left(  k\right)  }\circ g\left(
0\right)  =1$ for all $k\in\mathbb{N}$. Let $P^{\left(  m\right)  }$ be the
collection of all partitions $I$ of\ $\left\{  1,\cdots,m\right\}  $\ into
disjoint sets. It follows that for $v_{1},\cdots,v_{m}\in V$, we get%
\[
\frac{\partial^{m}}{\partial v_{m}\cdots\partial v_{1}}\exp g\mid_{0}=%
{\displaystyle\sum\limits_{I\in P^{\left(  m\right)  }}}
\text{ }%
{\displaystyle\prod\limits_{\left\{  j_{1},\cdots,j_{l}\right\}  \in I}}
\frac{\partial^{l}}{\partial v_{j_{1}}\cdots\partial v_{j_{l}}}g\mid_{0}.
\]

\end{lemma}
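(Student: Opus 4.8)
This is a Fa\`{a} di Bruno type identity for the exponential, and I would derive it by first establishing an unevaluated version on all of $V$ and then substituting $0$. Since $g$ is infinitely often Gateaux differentiable and $D_{(\cdot)}$ obeys the chain and product rules (see the Remark on Gateaux derivatives preceding the statement), $F := f\circ g = \exp\circ\, g$ is infinitely often Gateaux differentiable on $V$; for a set $B = \{j_1 < \cdots < j_l\}\subseteq\{1,\dots,m\}$ write $g_B := D_{v_{j_l}}\cdots D_{v_{j_1}} g$. The plan is to prove, by induction on $m$, the identity of functions $V\to\mathbb{C}$
\[
D_{v_m}\cdots D_{v_1} F \;=\; F\cdot\sum_{I\in P^{(m)}}\ \prod_{B\in I} g_B .
\]
For $m=1$ this is precisely the chain rule $D_{v_1}F = F\cdot D_{v_1}g$, since $P^{(1)} = \{\{\{1\}\}\}$.

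For the inductive step I would apply $D_{v_{m+1}}$ to both sides and expand by the product rule. Differentiating the leading factor $F$ and using the chain rule $D_{v_{m+1}}F = F\cdot D_{v_{m+1}}g$ produces, for each $I\in P^{(m)}$, the term $F\cdot D_{v_{m+1}}g\cdot\prod_{B\in I} g_B = F\cdot\prod_{B\in I\cup\{\{m+1\}\}} g_B$: the partition of $\{1,\dots,m+1\}$ obtained by adjoining $\{m+1\}$ as a new singleton block. Differentiating the inner product $\prod_{B\in I} g_B$ replaces, one block $B_0\in I$ at a time, the factor $g_{B_0}$ by $D_{v_{m+1}} g_{B_0} = g_{B_0\cup\{m+1\}}$ (legitimate because $m+1$ is the largest index present, so it is simply the next, outermost differentiation), yielding, for each $I\in P^{(m)}$ and each $B_0\in I$, the partition obtained by inserting $m+1$ into $B_0$. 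The one point that needs care --- the combinatorial bookkeeping --- is that every $J\in P^{(m+1)}$ is produced exactly once: if $\{m+1\}$ is a block of $J$ it comes from the first family with $I = J\setminus\{\{m+1\}\}$, and otherwise $m+1$ lies in a block $B^{\ast}$ of $J$ with $|B^{\ast}|\ge 2$, in which case $J$ comes from the second family with $B_0 = B^{\ast}\setminus\{m+1\}$ and $I = (J\setminus\{B^{\ast}\})\cup\{B_0\}$. Hence the right-hand side after differentiation equals $F\cdot\sum_{J\in P^{(m+1)}}\prod_{B\in J} g_B$, closing the induction.

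Finally I would evaluate the identity at $0\in V$: since $g(0)=0$ and $f=\exp$ we get $F(0)=\exp(g(0))=e^{0}=1$, so the leading factor disappears and
\[
\frac{\partial^m}{\partial v_m\cdots\partial v_1}\exp g\Big|_0 \;=\; \sum_{I\in P^{(m)}}\ \prod_{\{j_1,\dots,j_l\}\in I}\frac{\partial^l}{\partial v_{j_1}\cdots\partial v_{j_l}} g\Big|_0 ,
\]
the ordering of the differentiations within each block being immaterial, since the mixed Gateaux derivatives occurring here commute (as follows, for instance, from restricting to two-dimensional affine slices, or directly from the explicit moment formulas of Lemma \ref{lemma8}). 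The auxiliary claim $f^{(k)}\circ g(0) = \exp(g(0)) = 1$ for every $k\in\mathbb{N}$ is immediate from the same substitution. I expect the induction to be the bulk of the work but entirely routine; the only genuinely delicate point is the verification that the two families of terms in the inductive step are in bijection with $P^{(m+1)}$, and that the repeated application of the Gateaux chain and product rules is justified at each stage, which is guaranteed by the assumed infinite differentiability of $g$. (Alternatively one could restrict to the finite-dimensional slice $(\lambda_1,\dots,\lambda_m)\mapsto g(\lambda_1 v_1 + \cdots + \lambda_m v_m)$, note $\partial_{\lambda_m}\cdots\partial_{\lambda_1}F|_0 = \frac{\partial^m}{\partial v_m\cdots\partial v_1}F|_0$, and invoke the classical multivariate Fa\`{a} di Bruno formula for $\exp$; but the direct induction above is self-contained.)
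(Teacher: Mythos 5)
Your argument is correct. Note that the paper itself offers no proof of this lemma: it is imported verbatim, with a citation, from Corollary 3.5 of Albeverio--Gottschalk--Wu, so there is no in-paper proof to compare against. Your self-contained induction is the standard one for this exponential (moment--cumulant) formula: the base case is the chain rule, and the inductive step uses the product rule together with the bijection between $P^{(m+1)}$ and the two families ``adjoin $\{m+1\}$ as a singleton'' and ``insert $m+1$ into an existing block of $I\in P^{(m)}$'' --- exactly the Bell-number recursion --- and your bookkeeping there is accurate. The one genuinely delicate point, which you correctly flag rather than hide, is that writing $\prod_{B\in I}g_B$ with blocks listed in increasing order presupposes that the mixed G\^{a}teaux derivatives commute; this does not follow from bare infinite G\^{a}teaux differentiability in general, but it is harmless here since in the only application ($g=C_{\digamma}^{T}$) the derivatives are given by the explicit symmetric integrals of Lemma \ref{lemma8}. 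The closing remark that one could instead restrict to the finite-dimensional slice $(\lambda_1,\dots,\lambda_m)\mapsto g(\sum_i\lambda_i v_i)$ and quote the classical multivariate Fa\`{a} di Bruno formula is also a legitimate shortcut.
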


\begin{proposition}
\label{Prop5}Set $g_{1},\cdots,g_{m}\in\mathcal{H}_{\mathbb{R}}(\infty)$. Then%
\[
M_{m}^{\digamma}\left(  g_{1}\otimes\cdots\otimes g_{m}\right)  =%
{\displaystyle\sum\limits_{I\in P^{\left(  m\right)  }}}
\text{ }%
{\displaystyle\prod\limits_{\left\{  j_{1},\cdots,j_{l}\right\}  \in I}}
c_{l}%
{\displaystyle\int\limits_{\mathbb{Q}_{p}^{N}}}
g_{j_{1}}\cdots g_{j_{l}}d^{N}x.
\]

\end{proposition}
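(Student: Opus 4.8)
The plan is to derive the formula purely by combining the three results just established, with essentially no new analytic input. By Lemma \ref{lemma8}(ii), the $m$-th moment is the mixed partial derivative at the origin
\[
M_m^{\digamma}\left(g_1\otimes\cdots\otimes g_m\right)=\frac{1}{i^m}\frac{\partial^m}{\partial g_m\cdots\partial g_1}C_{\digamma}\Big|_0,\qquad C_{\digamma}=\exp C_{\digamma}^{T}.
\]
First I would check that Lemma \ref{Corollary 3.5} applies with $V=\mathcal{H}_{\mathbb{R}}(\infty)$, $g=C_{\digamma}^{T}$, and $f=\exp$. The hypotheses needed are that $C_{\digamma}^{T}$ is infinitely often G\^ateaux differentiable on $\mathcal{H}_{\mathbb{R}}(\infty)$, which is exactly Lemma \ref{lemma8}(i), and that $C_{\digamma}^{T}(0)=0$; the latter holds because $\Psi(0)=0$ by (\ref{Psi}), so $C_{\digamma}^{T}(0)=\int_{\mathbb{Q}_p^N}\Psi(0)\,d^Nx=0$.

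Applying Lemma \ref{Corollary 3.5} then gives
\[
\frac{\partial^m}{\partial g_m\cdots\partial g_1}\exp C_{\digamma}^{T}\Big|_0=\sum_{I\in P^{(m)}}\ \prod_{\{j_1,\cdots,j_l\}\in I}\frac{\partial^l}{\partial g_{j_1}\cdots\partial g_{j_l}}C_{\digamma}^{T}\Big|_0.
\]
Next I would substitute the value of each block derivative supplied by Lemma \ref{lemma8}(i): for a block $\{j_1,\cdots,j_l\}$ of size $l$,
\[
\frac{\partial^l}{\partial g_{j_1}\cdots\partial g_{j_l}}C_{\digamma}^{T}\Big|_0=i^{l}c_{l}\int_{\mathbb{Q}_p^N}g_{j_1}\cdots g_{j_l}\,d^Nx.
\]
Since the blocks of any partition $I\in P^{(m)}$ have sizes that sum to $m$, forming the product over all blocks of $I$ produces an overall factor $\prod_{\{j_1,\cdots,j_l\}\in I}i^{l}=i^{m}$, the same for every $I$. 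Factoring this out of the sum over $P^{(m)}$ and dividing by $i^{m}$ as prescribed by Lemma \ref{lemma8}(ii), all powers of $i$ cancel and one is left with
\[
M_m^{\digamma}\left(g_1\otimes\cdots\otimes g_m\right)=\sum_{I\in P^{(m)}}\ \prod_{\{j_1,\cdots,j_l\}\in I}c_{l}\int_{\mathbb{Q}_p^N}g_{j_1}\cdots g_{j_l}\,d^Nx,
\]
which is the assertion.

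The argument is bookkeeping once the three lemmas are available, so there is no real obstacle. The one point that deserves a line of care is the tracking of the factors of $i$: each index $k\in\{1,\cdots,m\}$ lies in exactly one block of any given partition, so the exponents $l$ contributed by the blocks always add to $m$, independently of the partition, which is precisely what makes the single normalization $1/i^{m}$ coming from Lemma \ref{lemma8}(ii) work uniformly over the whole sum $\sum_{I\in P^{(m)}}$.
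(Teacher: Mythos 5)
Your proof is correct and follows exactly the route the paper takes: the paper's own proof simply cites (\ref{Moment_1}), the identity $C_{\digamma}=\exp C_{\digamma}^{T}$, Lemma \ref{lemma8}(i), and Lemma \ref{Corollary 3.5}, which is precisely the chain you spell out. Your explicit tracking of the factors of $i$ across the blocks of each partition is a welcome addition of detail but introduces nothing different in substance.
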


\begin{proof}
The formula follows from (\ref{Moment_1}), $C_{\digamma}=\exp C_{\digamma}%
^{T}$, Lemma \ref{lemma8}\ (i)\ and Lemma \ref{Corollary 3.5}.
\end{proof}

\begin{definition}
\label{def6}Set $g_{1},\cdots,g_{m}\in\mathcal{H}_{\mathbb{R}}(\infty)$. We
define the $m-th$ Schwinger function $S_{m}$ of $\boldsymbol{\Phi}$ as the
$m-th$ moment of $\boldsymbol{\Phi}$, i.e.%
\begin{equation}
S_{m}\left(  g_{1}\otimes\cdots\otimes g_{m}\right)  =%
{\displaystyle\int\limits_{\mathcal{H}_{\mathbb{R}}^{^{\ast}}(\infty)}}
\left\langle g_{1},T\right\rangle \cdots\left\langle g_{m},T\right\rangle
d\mathrm{P}_{\boldsymbol{\Phi}}\left(  T\right)  \text{, }m\in\mathbb{N}%
\smallsetminus\left\{  0\right\}  , \label{Def_S_m}%
\end{equation}
with $S_{0}:=1$.
\end{definition}

\begin{theorem}
\label{Thm6}The Schwinger functions $S_{m}$ defined above are symmetric and
Euclidean invariant functionals in $\mathcal{H}_{\mathbb{R}}^{^{\ast}}\left(
\mathbb{Q}_{p}^{Nm};\infty\right)  $ for $m\geq1$. Furthermore for
$g_{1},\cdots,g_{m}\in\mathcal{H}_{\mathbb{R}}(\mathbb{Q}_{p}^{N},\infty)$ we
have%
\begin{equation}
S_{m}\left(  g_{1}\otimes\cdots\otimes g_{m}\right)  =%
{\displaystyle\sum\limits_{I\in P^{\left(  m\right)  }}}
\text{ }%
{\displaystyle\prod\limits_{\left\{  j_{1},\cdots,j_{l}\right\}  \in I}}
\text{ }c_{l}%
{\displaystyle\int\limits_{\mathbb{Q}_{p}^{N}}}
\left(
{\displaystyle\prod\limits_{k=1}^{l}}
G\left(  x;m,\alpha\right)  \ast g_{j_{k}}\left(  x\right)  \right)  d^{N}x,
\label{formula_S_m}%
\end{equation}
where the constants $c_{l}$ were defined in (\ref{C_s}).
\end{theorem}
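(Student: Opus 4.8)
The plan is to deduce everything from the moment formula for the generalized white noise $\boldsymbol{\digamma}$ (Proposition \ref{Prop5}) together with the mapping properties of the Green operator $\mathcal{G}$. First I would establish the explicit formula (\ref{formula_S_m}). Since $\mathrm{P}_{\boldsymbol{\Phi}}$ is the pushforward of $\mathrm{P}_{\Psi}$ under $\widetilde{\mathcal{G}}$ and $\langle g_i,\widetilde{\mathcal{G}}T\rangle=\langle\mathcal{G}g_i,T\rangle$, the change-of-variables formula for image measures, applied to $T\mapsto\langle g_1,T\rangle\cdots\langle g_m,T\rangle$ (which is $\mathrm{P}_{\boldsymbol{\Phi}}$-integrable because $\mathcal{G}g_i\in\mathcal{H}_{\mathbb{R}}(\infty)$ by Corollary \ref{Cor4} and $\mathrm{P}_{\Psi}$ has all moments finite), gives
\[
S_m(g_1\otimes\cdots\otimes g_m)=\int\limits_{\mathcal{H}_{\mathbb{R}}^{^{\ast}}(\infty)}\langle\mathcal{G}g_1,T\rangle\cdots\langle\mathcal{G}g_m,T\rangle\,d\mathrm{P}_{\Psi}(T)=M_m^{\digamma}(\mathcal{G}g_1\otimes\cdots\otimes\mathcal{G}g_m).
\]
Invoking Proposition \ref{Prop5} for the right-hand side and recalling $(\mathcal{G}g_{j_k})(x)=G(x;m,\alpha)\ast g_{j_k}(x)$ yields (\ref{formula_S_m}).

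Next I would check that $S_m$ defines an element of $\mathcal{H}_{\mathbb{R}}^{^{\ast}}(\mathbb{Q}_p^{Nm};\infty)$. By Lemma \ref{lemma_tensor_product}, $\mathcal{H}_{\mathbb{R}}(\mathbb{Q}_p^{Nm};\infty)\cong\mathcal{H}_{\mathbb{R}}(\mathbb{Q}_p^{N};\infty)^{\otimes_{\pi}m}$, so by the Kernel Theorem (Remark \ref{Kernel_theorem}) it suffices to show that $(g_1,\dots,g_m)\mapsto S_m(g_1\otimes\cdots\otimes g_m)$ is a separately continuous $m$-linear form on $\mathcal{H}_{\mathbb{R}}(\mathbb{Q}_p^{N};\infty)^m$. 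Multilinearity is clear from (\ref{formula_S_m}). For continuity note that, by Theorem \ref{Thm2}(ii) and Corollary \ref{Cor4}, $\mathcal{G}$ maps $\mathcal{H}_{\mathbb{R}}(\infty)$ continuously into $L^1\cap L^\infty$; hence each summand in (\ref{formula_S_m}) satisfies
\[
\left|\int\limits_{\mathbb{Q}_p^N}(\mathcal{G}g_{j_1})\cdots(\mathcal{G}g_{j_l})\,d^Nx\right|\leq\|\mathcal{G}g_{j_1}\|_{L^1}\prod_{k\geq2}\|\mathcal{G}g_{j_k}\|_{L^\infty},
\]
and since convergence in $\mathcal{H}_{\mathbb{R}}(\infty)$ forces $\widehat{g}_n\to\widehat{g}$ in $L^1$ (Remark \ref{nota2}(ii), see also (\ref{ineqq_f})), hence $\mathcal{G}g_n\to\mathcal{G}g$ in $L^1\cap L^\infty$, each such integral depends continuously on every $g_{j_k}$. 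Thus $S_m$ is separately continuous, and the Kernel Theorem identifies it with an element of $\mathcal{H}_{\mathbb{R}}^{^{\ast}}(\mathbb{Q}_p^{Nm};\infty)$.

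Symmetry is immediate from (\ref{Def_S_m}), the integrand $\langle g_1,T\rangle\cdots\langle g_m,T\rangle$ being invariant under permutations of the $g_i$. For Euclidean invariance, I would let $(a,\boldsymbol{g})\in E(\mathbb{Q}_p^N)$ act diagonally on $\mathcal{H}_{\mathbb{R}}(\mathbb{Q}_p^{Nm};\infty)$ and argue from (\ref{formula_S_m}): by Lemma \ref{lemma7}, $\mathcal{G}((a,\boldsymbol{g})g_{j_k})=(a,\boldsymbol{g})(\mathcal{G}g_{j_k})$, so each factor in (\ref{formula_S_m}) becomes $(\mathcal{G}g_{j_k})((a,\boldsymbol{g})^{-1}x)$; changing variables $x\mapsto(a,\boldsymbol{g})x$ and using that $E(\mathbb{Q}_p^N)$ preserves $d^Nx$ leaves every integral in (\ref{formula_S_m}) unchanged, so $S_m$ is Euclidean invariant. (Alternatively one invokes Proposition \ref{Prop4}: equality of the laws of $\boldsymbol{\Phi}$ and $(a,\boldsymbol{g})\boldsymbol{\Phi}$ forces equality of all their joint moments, which is precisely the stated invariance.)

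The step I expect to demand the most care is the continuity claim: one must verify that the $L^1$- and $L^\infty$-norms of $\mathcal{G}g$ are genuinely dominated by the Hilbertian seminorms $\|\cdot\|_l$ defining $\mathcal{H}_{\mathbb{R}}(\infty)$ — which rests on Theorem \ref{Thm2}(ii) and the Riemann--Lebesgue estimate (\ref{ineqq_f}) — and that the isomorphism $\mathcal{H}_{\mathbb{R}}^{^{\ast}}(\mathbb{Q}_p^{Nm};\infty)\cong(\mathcal{H}_{\mathbb{R}}(\mathbb{Q}_p^{N};\infty)^{\otimes_{\pi}m})^{\ast}$ furnished by Lemma \ref{lemma_tensor_product} and Remark \ref{Kernel_theorem} is applied with the correct diagonal $E(\mathbb{Q}_p^N)$-action on $\mathbb{Q}_p^{Nm}$; the remaining manipulations are routine.
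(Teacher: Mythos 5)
Your derivation of (\ref{formula_S_m}) (push-forward under $\widetilde{\mathcal{G}}$, then Proposition \ref{Prop5}) and your use of the Kernel Theorem together with Lemma \ref{lemma_tensor_product} to place $S_m$ in $\mathcal{H}_{\mathbb{R}}^{\ast}(\mathbb{Q}_p^{Nm};\infty)$ coincide with the paper's argument. Your Euclidean-invariance argument (commuting $\mathcal{G}$ with $(a,\boldsymbol{g})$ via Lemma \ref{lemma7} and changing variables in each integral of (\ref{formula_S_m})) is a legitimate alternative to the paper's route, which instead differentiates the identity $C_{\boldsymbol{\Phi}}=C_{(a,\boldsymbol{g})\boldsymbol{\Phi}}$ at $0$ and reads off equality of all moments; both are sound, and yours is arguably more direct.

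The one genuine gap is in your continuity estimate. You bound each summand by $\|\mathcal{G}g_{j_1}\|_{L^1}\prod_{k\geq 2}\|\mathcal{G}g_{j_k}\|_{L^\infty}$ and then assert that convergence in $\mathcal{H}_{\mathbb{R}}(\infty)$ forces $\mathcal{G}g_n\to\mathcal{G}g$ in $L^1\cap L^\infty$ because $\widehat{g}_n\to\widehat{g}$ in $L^1$. The $L^\infty$ half is fine ($\|f\|_{L^\infty}\leq\|\widehat{f}\|_{L^1}\leq C\|f\|_l$ for $l>N/2\alpha$ by Cauchy--Schwarz), but $\widehat{g}_n\to\widehat{g}$ in $L^1$ gives no control whatsoever on $\|g_n-g\|_{L^1}$; Theorem \ref{Thm2}(ii) only provides the set-theoretic inclusion $\mathcal{H}_{\mathbb{R}}(\infty)\subset L^1$, not a bound of $\|\cdot\|_{L^1}$ by the seminorms $\|\cdot\|_l$, so the separate continuity in the argument occupying the $L^1$ slot (and in particular the whole $l=1$ term) is unproved as written. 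One can patch this with a closed-graph argument for the embedding $\mathcal{H}_{\mathbb{R}}(\infty)\hookrightarrow L^1$, but the cleaner fix is the paper's: pair the varying factor against the product of the remaining ones in $L^2$ via Plancherel, using $|\widehat{G}|\leq m^{-2}$ and a weighted Cauchy--Schwarz to get the explicit bound $|B(g_1,g_2)|\leq m^{-4}\|g_1\|_0\|g_2\|_k$, and for $l\geq 3$ first show $\prod_{i=1}^{l-1}G\ast g_i\in L^2$ by applying Young's inequality to the convolution $\widehat{G}\widehat{g}_1\ast\cdots\ast\widehat{G}\widehat{g}_{l-1}$. This keeps every estimate in terms of the defining Hilbertian seminorms and avoids invoking any unestablished $L^1$-continuity.
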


\begin{proof}
The symmetry follows from Definition \ref{def6}. To establish the Euclidean
invariance, we proceed as follows. By (\ref{C_phi_Invariant})-(\ref{Moment_1}%
),%
\[
\frac{1}{i^{m}}\frac{\partial^{m}}{\partial f_{m}\cdots\partial f_{1}%
}C_{\boldsymbol{\Phi}}\left(  f\right)  \mid_{0}=\frac{1}{i^{m}}\frac
{\partial^{m}}{\partial f_{m}\cdots\partial f_{1}}C_{\left(  a,\boldsymbol{g}%
\right)  \boldsymbol{\Phi}}\left(  f\right)  \mid_{0},
\]
for any $f,f_{1},\cdots,f_{m}\in\mathcal{H}_{\mathbb{R}}(\mathbb{Q}_{p}%
^{N},\infty)$ and $\left(  a,\boldsymbol{g}\right)  \in E\left(
\mathbb{Q}_{p}^{N}\right)  $. Therefore%
\[%
{\displaystyle\int\limits_{\mathcal{H}_{\mathbb{R}}^{^{\ast}}(\infty)}}
\left\langle g_{1},T\right\rangle \cdots\left\langle g_{m},T\right\rangle
d\mathrm{P}_{\boldsymbol{\Phi}}\left(  T\right)  =%
{\displaystyle\int\limits_{\mathcal{H}_{\mathbb{R}}^{^{\ast}}(\infty)}}
\left\langle g_{1},T\right\rangle \cdots\left\langle g_{m},T\right\rangle
d\mathrm{P}_{\left(  a,\boldsymbol{g}\right)  \boldsymbol{\Phi}}\left(
T\right)  \text{, }%
\]
$m\in\mathbb{N}\smallsetminus\left\{  0\right\}  $ and any $\left(
a,\boldsymbol{g}\right)  \in E\left(  \mathbb{Q}_{p}^{N}\right)  $.

We now establish (\ref{formula_S_m}). By using $\mathrm{P}_{\boldsymbol{\Phi}%
}=\mathrm{P}_{\Psi}\circ\left(  \widetilde{\mathcal{G}}\right)  ^{-1}$, the
right hand side of (\ref{Def_S_m}) is equal to%
\[%
{\displaystyle\int\limits_{\mathcal{H}_{\mathbb{R}}^{^{\ast}}(\infty)}}
\left\langle g_{1},\widetilde{\mathcal{G}}T\right\rangle \cdots\left\langle
g_{m},\widetilde{\mathcal{G}}T\right\rangle d\mathrm{P}_{\Psi}\left(
T\right)  =%
{\displaystyle\int\limits_{\mathcal{H}_{\mathbb{R}}^{^{\ast}}(\infty)}}
\left\langle G\ast g_{1},T\right\rangle \cdots\left\langle G\ast
g_{m},T\right\rangle d\mathrm{P}_{\Psi}\left(  T\right)  ,
\]
with $G=G(x;m,\alpha)$. Now, the formula follows from Proposition \ref{Prop5}.

We set $S_{l}^{T}$ for the mapping
\[
g_{1}\otimes\cdots\otimes g_{l}\rightarrow c_{l}%
{\displaystyle\int\limits_{\mathbb{Q}_{p}^{N}}}
\left(
{\displaystyle\prod\limits_{i=1}^{l}}
G\ast g_{i}\right)  d^{N}x,
\]
where $l\in\mathbb{N}\smallsetminus\left\{  0\right\}  $. We show that
$S_{l}^{T}$ is a $l$-linear form which is continuous in each of its arguments,
i.e. separately continuous. We first show the case $l=2$. By Corollary
\ref{Cor4}, $G\ast g_{1}$, $G\ast g_{2}\in L_{\mathbb{R}}^{2}$, then%
\begin{align*}
B(g_{1},g_{2})  &  :=%
{\displaystyle\int\limits_{\mathbb{Q}_{p}^{N}}}
\left(  G\ast g_{1}\right)  \left(  G\ast g_{2}\right)  d^{N}x=\\
&
{\displaystyle\int\limits_{\mathbb{Q}_{p}^{N}}}
\left(  \widehat{G}\left(  \xi\right)  \widehat{g}_{1}\left(  \xi\right)
\right)  \left(  \overline{\widehat{G}\left(  \xi\right)  \widehat{g}%
_{2}\left(  \xi\right)  }\right)  d^{N}\xi.
\end{align*}
We now pick $k\in\mathbb{N}\smallsetminus\left\{  0\right\}  $, use
$\left\vert \widehat{G}\left(  \xi\right)  \right\vert \leq\frac{1}{m^{2}}$
and the Cauchy-Schwartz inequality, to get%
\begin{align*}
\left\vert B(g_{1},g_{2})\right\vert  &  \leq\frac{1}{m^{4}}%
{\displaystyle\int\limits_{\mathbb{Q}_{p}^{N}}}
\left(  \frac{\left\vert \widehat{g}_{1}\left(  \xi\right)  \right\vert
}{\left[  \max(1,\left\Vert \xi\right\Vert _{p})\right]  ^{\alpha k}}\right)
\left(  \left[  \max(1,\left\Vert \xi\right\Vert _{p})\right]  ^{\alpha
k}\left\vert \widehat{g}_{2}\left(  \xi\right)  \right\vert \right)  d^{N}%
\xi\\
&  \leq\frac{1}{m^{4}}\left\Vert g_{1}\right\Vert _{0}\left\Vert
g_{2}\right\Vert _{k},
\end{align*}
i.e. $B(g_{1},\cdot)$ is continuous. We now established the case $l\geq3$ by
using induction on $l$. By using that $\widehat{g}_{i}\in L_{\mathbb{R}}%
^{1}\cap L_{\mathbb{R}}^{2}$, for $i=1,\cdots,l-1$, cf. Corollary \ref{Cor4}
and Remark \ref{nota2} (ii), and that $\left\vert \widehat{G}\widehat{g}%
_{i}\right\vert \leq\frac{1}{m^{2}}\left\vert \widehat{g}_{i}\right\vert $,
for $i=1,\cdots,l-1$, we have $\widehat{G}\widehat{g}_{i}\in L_{\mathbb{R}%
}^{1}\cap L_{\mathbb{R}}^{2}$, for $i=1,\cdots,l-1$, now by applying the Young
inequality repeatedly,
\[
\widehat{H}(\xi):=\widehat{G}\widehat{g}_{1}\left(  \xi\right)  \ast\cdots
\ast\widehat{G}\widehat{g}_{l-1}\left(  \xi\right)  \in L_{\mathbb{R}}^{2}.
\]
Then $H(\xi)=%
{\textstyle\prod\nolimits_{i=1}^{l-1}}
G\ast g_{i}\in L_{\mathbb{R}}^{2}$. We now apply the argument in the case
$l=2$ to $\
{\textstyle\prod\nolimits_{i=1}^{l-1}}
G\ast g_{i}\in L_{\mathbb{R}}^{2}$ and $G\ast g_{l}$ to show that
$B(g_{1},\cdots,g_{l-1},\cdot)$ is continuous for fixed $g_{1},\cdots
,g_{l-1}\in\mathcal{H}_{\mathbb{R}}(\mathbb{Q}_{p}^{N};\infty)$. Hence
$S_{l}^{T}$ is a $l$-linear form which is separately continuous on
$\mathcal{H}_{\mathbb{R}}(\infty)\otimes_{\text{alg}}\cdots\otimes
_{\text{alg}}\mathcal{H}_{\mathbb{R}}(\mathbb{Q}_{p}^{N};\infty)$, $l$-times.
Since $\mathcal{H}_{\mathbb{R}}(\mathbb{Q}_{p}^{N};\infty)$ is a Fréchet
space, $S_{l}^{T}$ is $l$-linear continuous, see e.g. Proposition 1.3.11 in
\cite{Obata}. We now apply the Kernel Theorem inductively on $l$, see Remark
\ref{Kernel_theorem}, and Lemma \ref{lemma_tensor_product}, to get%
\begin{align*}
S_{l}^{T}  &  \in\mathcal{B}\left(  \mathcal{H}_{\mathbb{R}}(\mathbb{Q}%
_{p}^{(l-1)N};\infty),\mathcal{H}_{\mathbb{R}}(\mathbb{Q}_{p}^{N}%
;\infty)\right)  \cong\left(  \mathcal{H}_{\mathbb{R}}(\mathbb{Q}_{p}%
^{(l-1)N};\infty)\otimes_{\pi}\mathcal{H}_{\mathbb{R}}(\mathbb{Q}_{p}%
^{N};\infty)\right)  ^{\ast}\\
&  \cong\mathcal{H}_{\mathbb{R}}^{^{\ast}}(\mathbb{Q}_{p}^{lN};\infty).
\end{align*}

We now consider $%
{\textstyle\prod\nolimits_{\left\{  j_{1},\cdots,j_{l}\right\}  \in I}}
S_{l}^{T}$. By induction on the cardinality of $I$, and by using the Kernel
Theorem and Lemma \ref{lemma_tensor_product}, we show that $%
{\textstyle\prod\nolimits_{\left\{  j_{1},\cdots,j_{l}\right\}  \in I}}
S_{l}^{T}\in\mathcal{H}_{\mathbb{R}}^{^{\ast}}(\mathbb{Q}_{p}^{mN};\infty)$.
Therefore $S_{m}\in\mathcal{H}_{\mathbb{R}}^{^{\ast}}(\mathbb{Q}_{p}%
^{mN};\infty)$.
\end{proof}

\subsection{The $p$-adic Brownian sheet on $\mathbb{Q}_{p}^{N}$}

As an application of the results developed in this section, we present a
construction of the Wiener process with time variable in $\mathbb{Q}_{p}^{N}$.
In this section we take $\Psi$ with $a=0$ and $M=0$ in (\ref{Psi}). Thus, the
generalized white noise $\digamma$ in Definition \ref{def2}\ is Gaussian with
mean zero. Given $t=(t_{1},\cdots,t_{N})\in\mathbb{Q}_{p}^{N}$, we set for
$x\in\mathbb{Q}_{p}^{N}$,
\[
1_{\left[  0,t\right]  ^{N}}(x):=\left\{
\begin{array}
[c]{ll}%
1 & \text{if }\left\Vert x\right\Vert _{p}\leq\left\Vert t\right\Vert _{p}\\
0 & \text{otherwise.}%
\end{array}
\right.
\]
We also set $W\left(  t\right)  :=\left\{  W\left(  t,\cdot\right)  \right\}
_{t\in\mathbb{Q}_{p}^{N}}=\left\{  \digamma\left(  1_{\left[  0,t\right]
^{N}},\cdot\right)  \right\}  _{t\in\mathbb{Q}_{p}^{N}}$. We call the process
$W\left(  t\right)  $ with values in $\mathbb{R}$, the\textit{ }%
$p$\textit{-adic Brownian sheet on }$\mathbb{Q}_{p}^{N}$. The following result
follows from Theorem \ref{Thm5}\ and Lemma \ref{lemma8}.

\begin{theorem}
\label{Thm7}The process $W\left(  t\right)  $\ has the following properties:

\noindent(i) $W\left(  0\right)  =0$ almost surely;

\noindent(ii) the process $W\left(  t\right)  $\ is Gaussian with mean zero;

\noindent(iii)
\[
E\left[  W(t)W(s)\right]  =\left\{
\begin{array}
[c]{ll}%
\min\left(  \left\Vert t\right\Vert _{p},\left\Vert s\right\Vert _{p}\right)
& \text{if }t\neq0\text{ and }s\neq0\\
0 & \text{if }t=0\text{ or }s=0;
\end{array}
\right.
\]

\noindent(iv) let $\ t_{1},t_{2},t_{3},t_{4}$ in \textit{ }$\mathbb{Q}_{p}%
^{N}\ $\ such that $\left\Vert t_{1}\right\Vert _{p}\leq\left\Vert
t_{2}\right\Vert _{p}<\left\Vert t_{3}\right\Vert _{p}\leq\left\Vert
t_{4}\right\Vert _{p}$, then $W\left(  t_{2}\right)  -W\left(  t_{1}\right)  $
and $W\left(  t_{4}\right)  -W\left(  t_{3}\right)  $ are independent.
\end{theorem}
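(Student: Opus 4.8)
The plan is to derive everything from the properties of the Gaussian white noise $\digamma$ built in Section \ref{Sect5}, applied to the locally constant, compactly supported functions $1_{[0,t]^N}$. Since $a=0$ and $M=0$, the Lévy characteristic is $\Psi(t)=-\tfrac{\sigma^{2}t^{2}}{2}$, so by Theorem \ref{Thm5} the characteristic functional of $\digamma$ is $C_{\digamma}(f)=\exp\bigl(-\tfrac{\sigma^{2}}{2}\int_{\mathbb{Q}_{p}^{N}}f^{2}\,d^{N}x\bigr)$, and we normalize $\sigma=1$. For every finite family $f_{1},\dots,f_{r}\in\mathcal{H}_{\mathbb{R}}(\infty)$ the random vector $(\digamma(f_{1}),\dots,\digamma(f_{r}))$ is then centered Gaussian: indeed each linear combination $\sum c_{j}\digamma(f_{j})=\digamma(\sum c_{j}f_{j})$ has Gaussian characteristic function $\exp(-\tfrac12\int(\sum c_j f_j)^2 d^Nx)$, and polarization gives $E[\digamma(f_{i})\digamma(f_{j})]=\int_{\mathbb{Q}_{p}^{N}}f_{i}f_{j}\,d^{N}x$ (this is also Lemma \ref{lemma8} with $c_{1}=0$, $c_{2}=\sigma^{2}$, $c_{m}=0$ for $m\geq3$). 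For $t\neq0$ one has $1_{[0,t]^N}=\Omega(p^{-ord(t)}\lVert x\rVert_{p})\in\mathcal{D}_{\mathbb{R}}(\mathbb{Q}_{p}^{N})\subset\mathcal{H}_{\mathbb{R}}(\infty)$, so $W(t)=\digamma(1_{[0,t]^N},\cdot)$ is well-defined, and the whole family $\{W(t)\}_{t}$ inherits joint Gaussianity with mean zero, which is (ii). For $t=0$ the function $1_{[0,0]^N}$ is supported on the single point $\{0\}$, hence is the null element of $\mathcal{H}_{\mathbb{R}}(\infty)$, so $W(0)=\digamma(0,\cdot)=\langle0,\cdot\rangle\equiv0$, giving (i).

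For (iii) I would apply the covariance formula above with $f=1_{[0,t]^N}$ and $g=1_{[0,s]^N}$, so that $E[W(t)W(s)]=\int_{\mathbb{Q}_{p}^{N}}1_{[0,t]^N}(x)\,1_{[0,s]^N}(x)\,d^{N}x$. Both factors are characteristic functions of balls of $\mathbb{Q}_{p}^{N}$ centered at the origin, so their product is the characteristic function of the smaller ball $\{x:\lVert x\rVert_{p}\leq\min(\lVert t\rVert_{p},\lVert s\rVert_{p})\}$, and the integral is the Haar measure of that ball, which is the right-hand side of (iii); if $t=0$ or $s=0$ the corresponding ball is null and the integral vanishes.

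For (iv), write $W(t_{2})-W(t_{1})=\digamma(1_{A},\cdot)$ and $W(t_{4})-W(t_{3})=\digamma(1_{B},\cdot)$, using linearity of $\digamma$ (Definition \ref{def3}(i), which here holds pointwise in $T$), where $A=\{x:\lVert t_{1}\rVert_{p}<\lVert x\rVert_{p}\leq\lVert t_{2}\rVert_{p}\}$ and $B=\{x:\lVert t_{3}\rVert_{p}<\lVert x\rVert_{p}\leq\lVert t_{4}\rVert_{p}\}$ are clopen, compact spherical shells (hence $1_{A},1_{B}\in\mathcal{H}_{\mathbb{R}}(\infty)$). Since $\lVert t_{2}\rVert_{p}<\lVert t_{3}\rVert_{p}$, the shells are disjoint, so $1_{A}\cdot1_{B}\equiv0$. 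Now either invoke that $\digamma$ has independent values at every point — property (B) used in the proof of Theorem \ref{Thm5}, namely $C_{\digamma}(f_{1}+f_{2})=C_{\digamma}(f_{1})C_{\digamma}(f_{2})$ whenever $f_{1}f_{2}=0$ — which applied to $\lambda_{1}1_{A}$, $\lambda_{2}1_{B}$ factorizes the joint characteristic function and forces independence; or observe that $(\digamma(1_{A}),\digamma(1_{B}))$ is jointly Gaussian with $E[\digamma(1_{A})\digamma(1_{B})]=\int_{\mathbb{Q}_{p}^{N}}1_{A}1_{B}\,d^{N}x=0$, so that vanishing correlation yields independence.

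None of the steps is genuinely hard; the only points requiring care are the degenerate case $t=0$ (one must note that $1_{[0,0]^N}$ represents the zero element of $\mathcal{H}_{\mathbb{R}}(\infty)$, its value at the point $\{0\}$ being immaterial), the verification that the relevant objects — indicators of balls and of spherical shells — actually lie in $\mathcal{H}_{\mathbb{R}}(\infty)$, and recognizing that "$\digamma$ has independent values at every point" is precisely the multiplicativity $C_{\digamma}(f_{1}+f_{2})=C_{\digamma}(f_{1})C_{\digamma}(f_{2})$ for functions of disjoint support, which is exactly how $\mathrm{P}_{\Psi}$ was constructed. Everything else is a computation of Haar measures of balls and annuli in $\mathbb{Q}_{p}^{N}$.
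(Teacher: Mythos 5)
Your proposal is correct and is exactly the argument the paper intends: the paper offers no written proof of Theorem \ref{Thm7} beyond the remark that it ``follows from Theorem \ref{Thm5} and Lemma \ref{lemma8}'', and your derivation --- Gaussianity of all linear combinations from the explicit characteristic functional $\exp\bigl(-\tfrac{\sigma^{2}}{2}\int f^{2}\,d^{N}x\bigr)$, the covariance $E[\digamma(f)\digamma(g)]=\sigma^{2}\int fg\,d^{N}x$ via polarization or via Lemma \ref{lemma8} with $c_{2}=\sigma^{2}$ and all other $c_{m}=0$, and independence of increments from disjointness of the spherical shells --- is precisely that route, with the degenerate case $t=0$ handled correctly. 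One caveat on (iii): the Haar measure of the ball $\{x:\lVert x\rVert_{p}\leq\min(\lVert t\rVert_{p},\lVert s\rVert_{p})\}$ in $\mathbb{Q}_{p}^{N}$, with the normalization $vol(B_{0}^{N})=1$, is $\min(\lVert t\rVert_{p},\lVert s\rVert_{p})^{N}$, so your computation actually yields $\sigma^{2}\min(\lVert t\rVert_{p},\lVert s\rVert_{p})^{N}$ rather than the stated right-hand side; the identification you assert holds only for $N=1$ (and $\sigma=1$), which appears to be an imprecision in the theorem's statement rather than in your method, but you should not claim the match without noting the exponent.
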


Bikulov and Volovich \cite{Bi-Volovich}, and Kamizono \cite{Kamizono},
constructed Brownian motion with $p$-adic time. In the case $N=1$, our
covariance function does not agree with the one given in \cite{Bi-Volovich},
\cite{Kamizono}, thus, in this case our result gives a different stochastic process.

\begin{acknowledgement}
The author wishes to thank to Sergii Torba and the anonymous referees for many
useful comments and discussions, which led to an improvement of this work.
\bigskip
\end{acknowledgement}

\end{document}